\newcommand{\MM}{{\mathcal M}}
\newcommand{\MS}{{\mathcal S}}
\newcommand{\MB}{{\mathcal B}}
\newcommand{\MC}{{\mathcal C}}
\newcommand{\MI}{{\mathcal I}}
\newcommand{\MO}{{\mathcal O}}
\newcommand{\bbZ}{{\mathbb{Z}}}
\newcommand{\assumref}[1]{Assumption~\ref{assum:#1}}
\renewcommand{\eqref}[1]{(\ref{eq:#1})}
\newcommand{\secref}[1]{Sect.~\ref{sec:#1}}
\newcommand{\corref}[1]{Corollary~\ref{cor:#1}}
\newcommand{\lemref}[1]{Lemma~\ref{lem:#1}}
\newcommand{\thmref}[1]{Theorem~\ref{thm:#1}}
\newtheorem{assum}{Assumption}
\newtheorem{cor}{Corollary}
\newtheorem{lem}{Lemma}
\newtheorem{thm}{Theorem}
\algrenewcommand{\algorithmicrequire}{\textbf{Input: }}
\algrenewcommand{\algorithmicensure}{\textbf{Output: }}
\long\def\invis#1{}
\newenvironment{proof}{\medskip
  \noindent{\scshape Proof:}}{\quad $\Box$\medskip}
\newcommand{\arc}[3][ ]{\draw[#1,-{Stealth[length=3mm]}] (#2)to(#3);}
\newcommand{\basis}{B}
\newcommand{\matroid}{\mathcal M}
\newcommand{\independent}{\mathcal I}
\newcommand{\cycle}{\bm c}
\newcommand{\algfor}[2]{\For{#1} #2 \EndFor}
\newcommand{\algif}[2]{\If{#1} #2 \EndIf}
\newcommand{\cir}{\mathrm{Ex^{\rm zero}_{\rm out}}}
\newcommand{\cut}{\mathrm{Ex_{\rm in}} }
\newcommand{\set}[1]{\{ #1 \}}
\newcommand{\child}{\sol}
\newcommand{\algwhile}[2]{\While{#1} #2 \EndWhile}
\newcommand{\crank}{\mu}
\newcommand{\coefmat}{\bm B}
\newcommand{\remove}{\bm x}
\newcommand{\add}{\bm y}
\newcommand{\gft}{\mathit{GF}(2)}
\newcommand{\sets}[2]{\set{#1_{1},#1_{2},\dots,#1_{#2}}}
\newcommand{\vect}[2]{(#1_{1},#1_{2},\dots,#1_{#2})}
\newcommand{\ls}[2]{#1_{1},#1_{2},\dots,#1_{#2}}
\newcommand{\nums}[1]{1,2,\dots,#1}
\newcommand{\rtmat}{\bm X}
\newcommand{\sol}{\basis}
\newcommand{\parent}{P}
\newcommand{\rtc}{R}
\newcommand{\exc}[3]{(#1 \setminus \set{#2}) \cup \set{#3}}
\newcommand{\cast}{\cycle^\ast}
\newcommand{\successor}{\textsc{Succ}}
\newcommand{\relevant}{U^\ast}
\newcommand{\real}{\mathbb R}
\newcommand{\foundb}{\mathcal R}
\newcommand{\allmcb}{{\mathcal B}^\ast}
\newcommand{\partition}[1][I,O]{\allmcb(#1)}
\newcommand{\ang}[1]{\langle #1\rangle}
\newcommand{\lineref}[1]{line~\ref{line:#1}}
\newcommand{\rank}{r}
\newcommand{\tauMB}{\tau_{\mathrm{MinB}}}
\newcommand{\taupreREL}{\tau^{\mathrm{pre}}_{\mathrm{REL}}}
\newcommand{\tauREL}{\tau_{\mathrm{REL}}}
\newcommand{\tauMEM}{\tau_{\mathrm{IND}}}
\newcommand{\tauDET}{\tau_{\mathrm{DET}}}
\newcommand{\contract}[2]{#1 / #2}
\newcommand{\allcut}[1]{{\mathcal C}_{#1}}
\newcommand{\ledgecon}[3]{\lambda(#1,#2;#3)}
\newcommand{\procedure}[3]{\Procedure{#1}{#2} #3 \EndProcedure}
\newcommand{\maxledgecon}[3]{\lambda_{\max}(#1,#2;#3)}
\newcommand{\maxledgecong}[1]{\lambda_{\max}(#1)}
\renewcommand{\setminus}{-}
\newcommand{\MX}{{\mathcal X}}
\title{Enumeration of Bases in Matroid
  with\\ Exponentially Large Ground Set\thanks{The work is partially supported by JSPS KAKENHI Grant Number 25K14993.}}
\author{Yuki Nishimura\and Kazuya Haraguchi\thanks{Corresponding author. E-mail: {\tt haraguchi@amp.i.kyoto-u.ac.jp}}}
\date{}
\begin{document}
\maketitle

\begin{abstract}
  When we deal with a matroid ${\mathcal M}=(U,{\mathcal I})$,
  we usually assume that it is implicitly given by means of the independence (IND) oracle.
  Time complexity of many existing algorithms is polynomially bounded 
  with respect to $|U|$ and the running time of the IND-oracle.
  However, they are not efficient any more when $U$ is exponentially large in some context.
  In this paper, we propose two algorithms for enumerating matroid
  bases such that the time complexity does not depend on $|U|$. 
  For some integer $L$,
  the first algorithm
  enumerates the first $L$ minimum-weight bases 
  in incremental-polynomial time
  and the remaining ones in polynomial-delay. 
  To design the algorithm, we assume two oracles other than the IND-oracle:
  the MinB-oracle that returns a minimum basis
  and the REL-oracle that returns a relevant element one by one in non-decreasing order of weight. 
  The proposed algorithm is applicable to enumeration of minimum bases of
  binary matroids from cycle space and cut space,
  all of which have exponentially large $U$ with respect to a given graph.
  The highlight in this context is that,
  to design the REL-oracle for cut space,
  we develop the first polynomial-delay algorithm that
  enumerates all relevant cuts of a given graph in non-decreasing order of weight.
  The second algorithm
  enumerates all sets of linearly independent $r$-dimensional $r$ vectors over $\mathit{GF}(2)$ in polynomial-delay, which immediately yields
  a polynomial-delay algorithm 
  that enumerates    
  all unweighted bases of a binary matroid such that
  elements are closed under addition. 
\end{abstract}

\section{Introduction}
Let $\real,\real_{\ge0},\real_+$ denote the sets of reals,
nonnegative reals and positive reals, respectively. 
For a ground set $U$ and a family ${\mathcal I}$
of subsets of $U$,
${\mathcal M}=(U,{\mathcal I})$ is called a \emph{matroid}
if the following axioms are satisfied.
(I) $\emptyset\in{\mathcal I}$;
(II) for $I,J\subseteq U$ such that $I\subseteq J$,
$J\in{\mathcal I}$ implies $I\in{\mathcal I}$; and
(III) for $I,J\in{\mathcal I}$,
if $|I|<|J|$, then there is $x\in J\setminus I$
such that $I\cup\{x\}\in{\mathcal I}$.
A subset $I\in{\mathcal I}$
(resp., $I\not\in{\mathcal I}$) is called {\em independent}
(resp., {\em dependent}),
and a maximal independent set is called a {\em basis}.
It is well-known that the cardinality of any basis
is equal
and called the {\em rank of $\mathcal M$}.
Given a weight function $w:U\to\real$,
we write $w(S)\coloneqq\sum_{x\in S}w(x)$ for $S\subseteq U$
and call it the weight of $S$. 
A basis is {\em minimum-weight} (or {\em minimum} for short)
if the weight attains the minimum over all bases. 

Introduced by Whitney~\cite{Wh.1935}, 
matroid is among well-studied discrete structures
in theoretical computer science.
Various combinatorial problems on matroid
can be solved efficiently
(e.g., minimum-weight basis~\cite{matroidsandgreedyalgorithm},
matroid intersection~\cite{matroidintersectionandpartition,matroidintersection},
matroid partition~\cite{matroidintersectionandpartition}).
It is so expressive a mathematical model
that admits us to solve many real problems (e.g., minimum spanning tree).  
A matroid ${\mathcal M}=(U,{\mathcal I})$
is usually given by means of independence oracle
(a.k.a., membership oracle)
that returns whether $S\in{\mathcal I}$ or $S\not\in{\mathcal I}$
to a query subset $S\subseteq U$.

\emph{Enumeration problem} in general 
asks to list all required solutions without duplication
and has practical applications in
such fields as data mining and bioinformatics~\cite{AIS.1993,8845637,KK.2005,WTH.2024}.
In general, an enumeration algorithm takes
at least the computation time proportional to the output size,
where the output size can be exponentially large
with respect to the input size.
An enumeration algorithm
has been evaluated in terms of both input size and output size. 
According to \cite{JYP.1988}, the algorithm is called
\begin{itemize}
\item {\em output-polynomial}
  if the overall computation time is bounded by
  a polynomial with respect to the input size and output size;
\item {\em incremental-polynomial} ({\em incremental-poly})
  if the computation time for finding the $\ell$-th solution
  is bounded by a polynomial with respect to the input size and $\ell$; and
\item {\em polynomial-delay} ({\em poly-delay})
  if the delay (i.e., computation time between any two consecutive outputs)
  is bounded by a polynomial with respect to the input size. 
\end{itemize}

Various enumeration problems have been studied in the context of matroid~\cite{KBEGM.2005,KKW.2025,MMIB.2012}.
Uno~\cite{enummatroidbases} proposed
efficient algorithms for enumerating all bases
for such fundamental matroids as
graphic matroid,
linear matroid and
matching matroid. For graphic matroid,
there is also an algorithm for enumerating 
all bases (i.e., spanning trees)~\cite{STU.1997},
which is optimal in the sense of both time and space.
Hamacher et al.~\cite{kbestenum} presented an algorithm to enumerate
$k$-smallest bases in the sense of weight
in \( \MO(|U| \log |U| + k  \cdot |U| \cdot \rank \cdot \tauMEM) \) time,
where $\tauMEM$ denotes the computation time
of the independence oracle.
These algorithms are efficient
in the sense that the time complexity is polynomially bounded.
The bounding polynomials depend on
the cardinality $|U|$ of the ground set $U$.

In the literature, however, there are matroids
such that the ground set consists of an exponential number of elements
with respect to the size of a given graph;
e.g., binary matroids from cycle space~\cite{cycleandcutspace,mcb},
path space~\cite{minimumpathbasis}, cut space~\cite{cycleandcutspace}
and $U(T)$-space~\cite{dimofuspace}. 
These matroids are never just artificial. 
For example, a basis in the binary matroid from cycle space, called a 
{\em minimum cycle basis} in the literature, has applications in
electric engineering~\cite{electricalnetwork}, 
structural analysis of skeletal structures~\cite{skeletonstructure},
network analysis~\cite{networktheory}
and periodic event scheduling~\cite{BGV.2009}.
A minimum cycle basis can be found in polynomial time~\cite{horton,MCBfasterandsimpler,R.2020},
and there are problems of
finding other kinds of cycle basis~\cite{KLMMRUZ.2009}. 
However, 
we cannot expect the above-mentioned enumeration algorithms
to run efficiently for matroids like this. 

In this paper,
assuming a matroid with an exponentially large ground set, 
we propose two new enumeration algorithms,
one for all minimum-weight bases and
the other for all (unweighted) bases, respectively. 
The point is that, for both algorithms,
the polynomials bounding the time complexity
do not depend on the ground set size. 

After we prepare terminologies, notations and
fundamental properties in \secref{prel},
we present the main contributions of the paper
from Sections~\ref{sec:alg} to \ref{sec:allbases},
followed by concluding remarks in \secref{conc}. 


\paragraph{Enumeration of Minimum Bases (\secref{alg}).}
To design the algorithm for this purpose,
we assume two oracles other than
the independence oracle:
the minimum basis oracle that returns a minimum basis
and the relevant oracle that returns a ``relevant'' element
one by one in non-decreasing order of weight.
An element is relevant if there is a minimum basis
that contains it.
For some integer $L$,
the proposed algorithm generates the $\ell$-th solution $(\ell\le L)$
in incremental-poly time with respect to
the running times of the three oracles and the matroid rank,
and the remaining solutions are enumerated in $O(r)$ delay (i.e., poly-delay). 

\paragraph{Application to Binary Matroids (\secref{oracles}).}
We apply the minimum basis enumeration algorithm in \secref{alg}
to binary matroids from cycle space, path space
and cut space,
to show that the time complexity of
the above algorithm is polynomially bounded by
the graph size and $\ell$. 
We implement all the oracles by using previous results
as subroutines, except the relevant oracle for
the binary matroid from the cut space.
For this oracle, 
we develop the first poly-delay algorithm
that enumerates all relevant cuts in an edge-weighted graph
in non-decreasing order of weight.
This is different from a poly-delay algorithm in \cite{enumcut} that
enumerates all cuts or $s,t$-cuts (not necessarily relevant)
in non-decreasing order, while
ours concentrates on relevant cuts.

\paragraph{Enumeration of Bases (\secref{allbases}).}
We propose a poly-delay algorithm
for enumerating all bases of a binary matroid $\MM=(U,\MI)$
that satisfies the following assumption. 
\begin{assum}
  \label{assum:binary}
  For a binary matroid $\MM=(U,\MI)$,
  let $I\in\MI$ be any independent set such that $|I|\ge1$.
  For the corresponding vectors
  $\bm u_1,\bm u_2,\dots,\bm u_k\in\gft^d$
  in the elements of $I$, where $k=|I|$,
  there is an element in $U$
  that corresponds to the vector 
  $\bm u_1+\bm u_2+\dots+\bm u_k$.
\end{assum}
By this assumption, we do not require
that $U$ has an element corresponding to zero vector.
This means that elements in $U$ do not necessarily form a subspace.
We show that the delay of the proposed algorithm
is polynomially bounded with respect to
the matroid rank $r$.
The algorithm immediately yields a poly-delay algorithm
(with respect to the graph size)
for enumerating all bases in binary matroids
from the cycle space and the cut space
of a given graph.

\section{Preliminaries}
\label{sec:prel}

For disjoint subsets $X,Y$ of elements,
we may denote the disjoint union by $X\sqcup Y$
to emphasize $X\cap Y=\emptyset$.
For the minimality under the context, 
we denote by $\min(X)$ the set of minimum elements in $X$,
where we may use an abuse notation $x=\min(X)$
to indicate $x\in\min(X)$. 
For integers $p,q$ $(p\le q)$,
we denote $[p,q]\coloneqq\{p,p+1,\dots,q\}$.

Suppose that a matroid $\MM=(U,\MI)$ is given. 
We denote by $\MB$ the set of all bases of $\MM$. 
For \( \basis\in\MB\), 
we call a pair \( (x,y) \) of elements \( x \in \basis \) and \( y \notin \basis \) an \emph{exchange for \( \basis \)} 
if \( (\basis \setminus \set x) \cup \set y\in\MB \). 
We will utilize the following property of
exchange.

\begin{lem}[\cite{combination}]\label{lem:exchange}
  For a matroid ${\mathcal M}=(U,\independent)$,
  let $ \basis_1, \basis_2\in\MB$. 
  \begin{itemize}
  \item For any $ x \in \basis_1 \setminus \basis_2 $,
    there exists $ y \in \basis_2 \setminus \basis_1 $
    such that $(x,y)$ is an exchange for $\basis_1$. 
  \item For any $ y \in \basis_2 \setminus \basis_1 $,
    there exists $ x \in \basis_1 \setminus \basis_2 $
    such that $(x,y)$ is an exchange for $\basis_1$.
  \end{itemize}
\end{lem}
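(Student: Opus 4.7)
The plan is to treat the two bullets separately, since they correspond to the two classical directions of the symmetric exchange property. The first is essentially a direct application of the augmentation axiom, while the second requires the fundamental circuit argument.

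For the first bullet, I would apply axiom (III) to the independent sets $I=B_1\setminus\{x\}$ and $J=B_2$. Since both bases have the common cardinality $r$, we have $|I|=r-1<r=|J|$, so (III) yields some $y\in J\setminus I=B_2\setminus(B_1\setminus\{x\})$ with $I\cup\{y\}\in\MI$. The hypothesis $x\in B_1\setminus B_2$ forces $x\notin B_2$, so $B_2\setminus(B_1\setminus\{x\})=B_2\setminus B_1$, meaning the chosen $y$ already lies in $B_2\setminus B_1$. Since $|I\cup\{y\}|=r$, maximality of rank promotes this independent set to a basis, which is exactly $(B_1\setminus\{x\})\cup\{y\}\in\MB$.

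For the second bullet, given $y\in B_2\setminus B_1$ I would use the fundamental circuit of $y$ with respect to $B_1$. Because $B_1$ is a basis and $y\notin B_1$, the set $B_1\cup\{y\}$ is dependent and contains a unique circuit $C$, which necessarily includes $y$. The key claim is that $C\cap(B_1\setminus B_2)\neq\emptyset$: otherwise $C\setminus\{y\}\subseteq B_1\cap B_2$, hence $C\subseteq B_2$, contradicting independence of $B_2$ (a circuit cannot be contained in an independent set). Picking any $x\in C\cap(B_1\setminus B_2)$, the set $(B_1\setminus\{x\})\cup\{y\}$ no longer contains $C$, and since $C$ was the only circuit present in $B_1\cup\{y\}$, what remains must be independent; cardinality $r$ then makes it a basis.

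The main obstacle is the second part: the augmentation axiom alone does not deliver it, because an arbitrary prescribed $y$ need not be the element that (III) selects. The extra ingredient that has to be invoked (or re-derived from the axioms) is the existence and uniqueness of the fundamental circuit $C(y,B_1)$ in $B_1\cup\{y\}$; once that standard fact is available, the circuit-intersection argument above closes the proof cleanly. If a stricter self-contained exposition were desired, I would first prove uniqueness of the circuit in $I\cup\{y\}$ via the circuit elimination axiom (derived from (I)--(III)), but otherwise I would cite it as a standard matroid fact.
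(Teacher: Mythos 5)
Your proof is correct, but note that the paper does not prove \lemref{exchange} at all: it is stated with a citation to Korte and Vygen~\cite{combination} and treated as a known fact, so there is no in-paper argument to compare against. Your two-part treatment is the standard textbook derivation. The first bullet indeed falls out directly from axiom (III) applied to $B_1\setminus\{x\}$ and $B_2$, using $x\notin B_2$ to show the augmenting element already lands in $B_2\setminus B_1$, and using equicardinality of bases to promote the resulting $r$-element independent set to a basis. The second bullet genuinely needs the fundamental-circuit machinery, and your argument is the right one: the unique circuit $C$ in $B_1\cup\{y\}$ must meet $B_1\setminus B_2$ (otherwise $C\subseteq B_2$, contradicting independence of $B_2$), and deleting any such $x\in C\cap(B_1\setminus B_2)$ destroys the only circuit, leaving an independent set of full rank. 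You correctly flag that existence and uniqueness of the fundamental circuit is an auxiliary fact that either has to be cited or re-derived from the axioms; that is exactly the dependency the paper sidesteps by referencing the textbook. In short: correct proof, faithfully reproducing what the cited source would contain, but there is no distinct ``paper's approach'' here to contrast with.
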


Suppose that a weight function $w:U\to\real$ is given.
For a basis $B\in\MB$,
let $x\in B$ be any element
and $y\notin B$ be an element such that $(x,y)$ is an exchange for $B$. 
We call \( w(x,y) \coloneqq w(y) - w(x) \) the \emph{weight of exchange \( (x,y) \)}.
We call \( (x,y) \) a \emph{zero-exchange} if \( w(x,y) = 0 \).
For \( y \in U \setminus \basis \), we define
\( \cir(\basis;y) \coloneqq \set{x \in \basis \mid (x,y) \text{ is a zero-exchange for } \basis} \).
We denote by \( \allmcb\subseteq\MB \) the set of all minimum bases
of $\mathcal M$.
An element $x\in U$ is {\em relevant}
if there is a minimum basis $B^\ast\in\allmcb$
such that $x\in B^\ast$.
We denote by $\relevant$ the set of all relevant elements.
For subsets \( I,O \subseteq U \) \( ( I \cap O = \emptyset) \), 
let us define \( \partition \) to be 
the family of all minimum bases that contain all elements in \( I \) and does not
contain any elements in \( O \),
that is, $\partition\triangleq\{B^\ast\in\allmcb\mid B^\ast\supseteq I$,
$B^\ast\cap O=\emptyset\}$. 
For \( \basis^\ast \in \partition \),
we say that
an exchange \( (x,y) \) for $\basis^\ast$
is \emph{\( I,O \)-preserving} if \( (\basis^\ast \setminus \set x) \cup \set y \)  belongs to \( \partition \).

We introduce two fundamental properties of minimum bases. 
Let $I,O\subseteq U$ be any two disjoint subsets. 
\lemref{zeroexchange} shows that, for any minimum bases
$P^\ast,Q^\ast\in\partition$, 
there is a sequence of minimum bases
$(B^\ast_1=P^\ast,B^\ast_2,\dots,B^\ast_k=Q^\ast)$
such that, for $i=1,2,\dots,k-1$,
there is  an $I,O$-preserving zero-exchange $(x_i,y_i)$ 
for $B^\ast_i$ that satisfies
$B^\ast_{i+1}=(B^\ast_i\setminus\{x_i\})\cup\{y_i\}$. 


\begin{lem}\label{lem:zeroexchange}
  Suppose that a matroid ${\mathcal M}=(U,{\mathcal I})$
  with a weight function $w:U\to\real$ is given.
  For $I,O\subseteq U$ $(I\cap O=\emptyset)$,
  let \( P^\ast, Q^\ast \in \partition \).
  For any minimum element \( y \) in \( Q^\ast \setminus P^\ast \),
  there exists an element \( x \in P^\ast \setminus Q^\ast \) such that \( (x,y) \) is an \( I,O \)-preserving zero-exchange for \( P^\ast \).
  Furthermore, $x$ is a minimum element in $P^\ast\setminus Q^\ast$. 
\end{lem}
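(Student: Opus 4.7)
The plan is to start from an arbitrary exchange element $x \in P^\ast \setminus Q^\ast$ guaranteed by \lemref{exchange} and then show, via a squeeze argument on four weights, that this $x$ must already be a minimum element of $P^\ast \setminus Q^\ast$ and that $(x,y)$ must be a zero-exchange. The $I,O$-preserving condition will fall out from a purely set-theoretic check.

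First I would verify the set-theoretic part. Since $P^\ast, Q^\ast \in \partition$, both contain $I$ and both avoid $O$, so $I \subseteq P^\ast \cap Q^\ast$; in particular $x \in P^\ast \setminus Q^\ast$ cannot belong to $I$, and $y \in Q^\ast \setminus P^\ast$ cannot belong to $O$. Therefore $(P^\ast \setminus \{x\}) \cup \{y\}$ still contains $I$ and still avoids $O$, so once we know it is a basis it will in fact lie in $\partition$. Next comes the heart of the argument, the weight chain. Minimality of $P^\ast$ applied to the basis $(P^\ast \setminus \{x\}) \cup \{y\}$ gives $w(x) \leq w(y)$. Let $x_0$ be a minimum-weight element of $P^\ast \setminus Q^\ast$, so $w(x_0) \leq w(x)$. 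By the second bullet of \lemref{exchange} applied with $B_1 = Q^\ast$, $B_2 = P^\ast$ to the element $x_0 \in P^\ast \setminus Q^\ast$, there exists $y_1 \in Q^\ast \setminus P^\ast$ with $(Q^\ast \setminus \{y_1\}) \cup \{x_0\}$ a basis, and minimality of $Q^\ast$ yields $w(y_1) \leq w(x_0)$. Since $y$ is a minimum element of $Q^\ast \setminus P^\ast$ we also have $w(y) \leq w(y_1)$. Chaining these: $w(y) \leq w(y_1) \leq w(x_0) \leq w(x) \leq w(y)$, which forces all four quantities to be equal. In particular $w(x) = w(y)$, so $(x,y)$ is a zero-exchange, and $w(x) = w(x_0)$, so $x$ is itself minimum in $P^\ast \setminus Q^\ast$.

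I do not expect a serious obstacle; the only subtle point is recognising that the squeeze argument needs \emph{both} directions of \lemref{exchange}. One direction supplies a valid exchange $(x,y)$ for $P^\ast$ whose weight is controlled from above by minimality of $P^\ast$, while the opposite direction, applied to the auxiliary element $x_0$, supplies a counter-exchange for $Q^\ast$ whose weight is controlled from below by minimality of $Q^\ast$. The minimality of $y$ within $Q^\ast \setminus P^\ast$ then closes the loop and collapses the chain to a single value.
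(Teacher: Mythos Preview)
Your proof is correct and uses essentially the same squeeze argument as the paper: two applications of \lemref{exchange} combined with the minimality of both $P^\ast$ and $Q^\ast$ and the minimality of $y$ in $Q^\ast\setminus P^\ast$. The only difference is organizational---by inserting the auxiliary element $x_0=\min(P^\ast\setminus Q^\ast)$ into the chain you obtain a four-term inequality $w(y)\le w(y_1)\le w(x_0)\le w(x)\le w(y)$ that yields the zero-exchange and the minimality of $x$ in one stroke, whereas the paper first closes a three-term chain and then proves minimality of $x$ by a separate contradiction.
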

\begin{proof}
    By Lemma~\ref{lem:exchange}, there exists an exchange \( (z,y) \) for \( P^\ast \) such that \( z \in P^\ast \setminus Q^\ast \).
    It holds that \( w(z) \leq w(y) \) since \( P^\ast \) is a minimum basis (if \( w(z) > w(y) \), then \( w(\exc{P^\ast}{z}{y}) < w(P^\ast) \) holds, which is a contradiction).
    Similarly, there is also an exchange \( (y',z) \) for \( Q^\ast \) such that \( y' \in Q^\ast \setminus P^\ast \), 
    and \( w(y') \leq w(z) \) holds.
    From the minimality of \( y \), \( w(y) \leq w(y') \) holds.
    Thus, \( w(y) = w(z) = w(y') \) holds, and hence \( w(z,y) = 0\).
    It holds that \( y \notin O \) and \( z \notin I \) since \( Q^\ast \in \partition \).
    Then  
    \( I \subseteq \exc{P^\ast}{z}{y} \) and \( O \cap (\exc{P^\ast}{z}{y}) = \emptyset \) and hence \( (x,y) \) for $x=z$ is \( I,O \)-preserving.

    Suppose that there is an element $x'\in P^\ast\setminus Q^\ast$ such that $w(x')<w(x)$.
    By the former part, there is an element $y'\in Q^\ast\setminus P^\ast$
    such that $(y',x')$ is an $I,O$-preserving zero-exchange for $Q^\ast$.
    We would have $w(y')=w(x')<w(x)=w(y)\le w(y')$, a contradiction. 
\end{proof}


%

The following \lemref{nodivide} shows a condition
that a relevant element $y\in \relevant$ is not contained
in any minimum bases in $\partition$.
In such a case, $\partition=\MB^\ast(I,O\cup\{y\})$ clearly holds.


\begin{lem}\label{lem:nodivide}
  Suppose that a matroid ${\mathcal M}=(U,{\mathcal I})$
  with a weight function $w:U\to\real$ is given.
  For $I,O\subseteq U$ $(I\cap O=\emptyset)$,
  let \( P^\ast \in \partition \) be a minimum basis
  and \( y \in \relevant\setminus P^\ast \) be a relevant element.
  If \( I \supseteq \cir(P^\ast,y) \),
  then no minimum basis in \( \partition \) contains \( y \). 
\end{lem}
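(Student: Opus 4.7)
The plan is to argue by contradiction: assume some $Q^\ast \in \partition$ contains $y$, and among all such choose one that maximizes $|P^\ast \cap Q^\ast|$. The key step will be to show that this extremal choice forces $y$ itself to be a minimum-weight element of $Q^\ast \setminus P^\ast$; once this is established, \lemref{zeroexchange} applies directly to $y$ and delivers the required element of $\cir(P^\ast, y) \setminus I$.

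First I would observe that $y \in Q^\ast \setminus P^\ast$, since $y \notin P^\ast$ by hypothesis. To show that $y$ attains the minimum weight on $Q^\ast \setminus P^\ast$, suppose to the contrary that some $\tilde y \in Q^\ast \setminus P^\ast$ satisfies $w(\tilde y) < w(y)$. I then apply \lemref{zeroexchange} with the roles of $P^\ast$ and $Q^\ast$ swapped, which is valid because both lie in $\partition$ and play symmetric roles in the lemma. Starting from a minimum-weight element $\tilde x$ of $P^\ast \setminus Q^\ast$, the lemma yields $\tilde y' \in Q^\ast \setminus P^\ast$ such that $(\tilde y', \tilde x)$ is an $I, O$-preserving zero-exchange for $Q^\ast$, and the ``Furthermore'' clause forces $\tilde y'$ to be a minimum-weight element of $Q^\ast \setminus P^\ast$. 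Hence $w(\tilde y') \le w(\tilde y) < w(y)$, so $\tilde y' \ne y$. Then $Q'^\ast \coloneqq (Q^\ast \setminus \{\tilde y'\}) \cup \{\tilde x\}$ still belongs to $\partition$, still contains $y$, and satisfies $|P^\ast \cap Q'^\ast| = |P^\ast \cap Q^\ast| + 1$, contradicting the maximality of $|P^\ast \cap Q^\ast|$.

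Having established that $y$ is a minimum element of $Q^\ast \setminus P^\ast$, I apply \lemref{zeroexchange} directly to obtain $x \in P^\ast \setminus Q^\ast$ with $(x, y)$ an $I, O$-preserving zero-exchange for $P^\ast$. From $w(x) = w(y)$ together with the fact that $(P^\ast \setminus \{x\}) \cup \{y\}$ is a basis, $x \in \cir(P^\ast, y)$. The $I, O$-preserving condition gives $I \subseteq (P^\ast \setminus \{x\}) \cup \{y\}$; since $y \notin P^\ast$ and $x \in P^\ast$, this forces $x \notin I$. Thus $x \in \cir(P^\ast, y) \setminus I$, contradicting the hypothesis $\cir(P^\ast, y) \subseteq I$.

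The main obstacle I anticipate is the mismatch between the element $y$ fixed by the hypothesis and the minimum element of $Q^\ast \setminus P^\ast$ that \lemref{zeroexchange} operates on: the lemma does not directly furnish a zero-exchange incident to a prescribed $y$. The extremal choice of $Q^\ast$, combined with the ``Furthermore'' clause of \lemref{zeroexchange} in its swapped form, is precisely what closes this gap by forcing $y$ to lie among the minimum-weight elements of $Q^\ast \setminus P^\ast$.
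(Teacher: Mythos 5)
Your proof is correct, and the argument closes every gap it opens; the extremal choice of $Q^\ast$ together with the ``Furthermore'' clause of \lemref{zeroexchange} (applied with the roles of the two bases swapped) indeed forces $y$ to be a minimum element of $Q^\ast \setminus P^\ast$, after which the direct application of \lemref{zeroexchange} delivers an element of $\cir(P^\ast;y)$ outside $I$, exactly as needed. The paper reaches the same pivotal configuration---a minimum basis in $\partition$ containing $y$ in which $y$ is a minimum element of the symmetric difference with $P^\ast$---but by a different route: it iterates \lemref{zeroexchange} to build an explicit chain of minimum bases $(B^\ast_1 = Q^\ast, B^\ast_2, \dots, B^\ast_k = P^\ast)$, each obtained from the previous by an $I,O$-preserving zero-exchange that removes $\min(B^\ast_i \setminus P^\ast)$ and adds $\min(P^\ast \setminus B^\ast_i)$, and then picks the first index $j$ at which $y$ is removed, so that $y = \min(B^\ast_j \setminus P^\ast)$. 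Your extremal argument (maximize $|P^\ast \cap Q^\ast|$ over witnesses containing $y$) is a clean, self-contained alternative: it avoids tracking the chain and the monotone-shrinking cardinality argument, at the cost of the usual ``argue that the extremal object cannot be improved'' step. Both consume exactly the same ingredient---the ``Furthermore'' clause of \lemref{zeroexchange}---and are of comparable length, so this is a matter of taste rather than strength; your version is arguably a bit more transparent about why a witness with $y$ minimal must exist.
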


\begin{proof}
  Suppose that there is a minimum basis \( Q^\ast\in \partition \)
  such that \( y \in Q^\ast \).
  By Lemma~\ref{lem:zeroexchange},
  there is a sequence of minimum bases
  $(B^\ast_1=Q^\ast,B^\ast_2,\dots,B^\ast_k=P^\ast)$
  such that, for $i=1,2,\dots,k-1$,
  there is  an $I,O$-preserving zero-exchange $(x_i,z_i)$ 
  for $B^\ast_i$ that satisfies
  $B^\ast_{i+1}=(B^\ast_i\setminus\{x_i\})\cup\{z_i\}$,
  where $z_i=\min(P^\ast\setminus B^\ast_i)$. 
  By $y\in Q^\ast\setminus P^\ast$ and 
  $|Q^\ast\setminus P^\ast|=|B^\ast_1\setminus P^\ast|>|B^\ast_2\setminus P^\ast|>\dots>|B^\ast_{k}\setminus P^\ast|=0$,
  there is $j\in\{1,2,\dots,k-1\}$ such that
  $y=x_j=\min(B^\ast_j\setminus P^\ast)$ holds by the latter part of \lemref{zeroexchange}.
  Towards such $B^\ast_j$, again by Lemma~\ref{lem:zeroexchange},
  there exists an \( I,O \)-preserving exchange
  \( (x,y) \) for \(  P^\ast \),
  where \( x \in \cir(P^\ast,y) \) and \( x \notin I \) hold.
  Then \( I \nsupseteq \cir(P^\ast,y) \) holds.
\end{proof}

\paragraph{Oracles.}
We assume that a matroid $\MM=(U,\MI)$
with a weight function $w:U\to{\mathbb R}$ is implicitly given by
IND-oracle, MinB-oracle and REL-oracle in
Sections~\ref{sec:alg} and \ref{sec:oracles}
whereas only IND-oracle is used in \secref{allbases}. 
\begin{description}
\item[Independence oracle (IND-oracle):]
  To a query subset $S\subseteq U$ of elements,
  the oracle returns whether $S$ is independent or not. 
\item[Minimum basis oracle (MinB-oracle):]
  The oracle returns a minimum basis.
\item[Relevant oracle (REL-oracle):]
  Let us denote
  the set of relevant elements by $\relevant=\{x_1,x_2,\dots,x_b\}\subseteq U$
  so that $w(x_1)\le w(x_2)\le\dots\le w(x_b)$.
  The oracle returns the $i$-th smallest relevant element $x_i$ in the $i$-th call,
  where we assume that the oracle is not called more than $b=|\relevant|$ times.
\end{description}
We denote the time complexities of these oracles by
$\tauMEM$, $\tauMB$ and $\tauREL$, respectively.
In some matroids, we may need to conduct preprocessing before the first call of the REL-oracle.
We denote by $\taupreREL$ the time complexity of the preprocessing.

\paragraph{Binary Matroids.}
We denote by $\gft$ the finite field of two elements, that is 0 and 1,
where the addition corresponds to XOR operation
and the multiplication corresponds to AND operation.
A {\em binary matroid} is a matroid $\MM=(U,\MI)$ such that
each element in $U$ is associated with 
a $d$-dimensional vector in $\gft$, where $d$ is a positive integer,
and 
$I\subseteq U$ is an independent set (i.e., $I\in\MI$)
if and only if
the vectors associated with elements in $I$ 
are linearly independent in $\gft$.

\section{An Incremental-Polynomial Algorithm for Enumerating All Minimum Bases}
\label{sec:alg}

In this section, we present an 
algorithm that enumerates all minimum bases of a matroid
$\MM=(U,\MI)$ with a weight function $w:U\to\real$,
where $\MM$ is implicitly given
by IND-oracle, MinB-oracle and REL-oracle
and we denote by $r$ the rank of $\MM$. 

In \secref{naive}, we describe how to design an incremental-poly algorithm
based on ``saturation algorithm''~\cite{MS.2019}.
We show that, in \secref{searchtree},
the time bound of the saturation algorithm can be improved
by using search tree.
Then in \secref{more}, 
we can use the search tree algorithm as a subroutine so that,
for some integer $L$,
the first $L$ solutions can be enumerated in incremental-poly time
and that the remaining solutions can be enumerated in poly-delay. 


\subsection{Na\"ive Algorithm}
\label{sec:naive}
For our problem, 
we can employ a saturation algorithm~\cite{MS.2019}
to design an incremental-poly algorithm. 
First, we generate a minimum basis $B^\ast$ by the MinB-oracle
and let $\MS\coloneqq\{B^\ast\}$.
Then we repeat the following
for each relevant element $y$ that is generated by the REL-oracle;
for each minimum basis $B$ in $\MS$,
we generate $B'\coloneqq (B-\{x\})\cup\{y\}$ for all $x\in B$ such that $(x,y)$
is a zero-exchange. 
If $B'\in\MB^\ast\setminus\MS$, then we add $B'$ to the solution set
(i.e., $\MS\coloneqq\MS\cup\{B'\}$).
A new minimum basis $B'$ will be obtained for each $y$
since it is relevant. 

Let us analyze the time complexity.
Let $\ell\coloneqq|\MS|$.
To obtain the $(\ell+1)$-st minimum basis,
we call the IND-oracle $\MO(\ell r)$ times to identify whether $B'$ is independent. 
If $B'$ is independent (i.e., a minimum basis),
we check whether it has already been generated,
where the check can be done in $\MO(r)$ time if we store $\MS$ by means of trie.
Thus the computation time to obtain the $(\ell+1)$-st solution is
\begin{align}
\MO\big((r+\ell)\tauREL+\sum_{i=1}^\ell(ir(\tauMEM+r))\big)=\MO((r+\ell)\tauREL+\ell^2r\tauMEM+\ell^2 r^2)\label{eq:naive}
\end{align}
after $\MO(\tauMB+\taupreREL)$ time preprocessing.

The space complexity is not polynomially bounded
since it needs to store all solutions generated so far. 
As mentioned in \cite{C.2023},
it is a major issue to reduce the space complexity
of enumeration algorithms of this kind.

\subsection{Search Tree Based Algorithm}
\label{sec:searchtree}
We can reduce the term $\MO(\ell^2r^2)$
in the time bound \eqref{naive}
if we realize the algorithm by using search tree.
In the search tree algorithm,
we do not need to check explicitly
whether or not a minimum basis has been already generated. 

The search tree represents a partition structure of
the set $\MB^\ast$ of minimum bases.
Each node corresponds to a subset of $\MB^\ast$
and has one minimum basis in the subset
as the representative.
For every minimum basis $B^\ast$,
there is at least one node that has $B^\ast$ as the representative. 
Traversing the search tree, 
we output the representatives
so that duplication does not occur, by which
the enumeration task is done. 

To be more precise, 
each node of the search tree is
associated with two disjoint subsets $I,O\subseteq\relevant$
of relevant elements and denoted by $\ang{I,O}$.
We call $I$ (resp., $O$) the {\em mandatory} (resp., {\em forbidden}) {\em subset} of the node $\ang{I,O}$. 
In particular, the root node
corresponds to $\ang{\emptyset,\emptyset}$. 
For $\ang{I,O}$, 
the corresponding subset of $\MB^\ast$ is $\partition$ and 
we choose a minimum basis in $\partition$ as its representative
and denote it by $R^\ast_{I,O}$. 

We branch a node in the search tree
based on the following lemma. 



\begin{lem}\label{lem:parbyexc}
  Suppose that a matroid ${\mathcal M}=(U,{\mathcal I})$ is given. 
  For disjoint subsets $I,O\subseteq\relevant$ of relevant elements,
  let $B^\ast\in\partition$ be a minimum basis and 
  $x\in B^\ast\setminus I$. 
  \begin{itemize}
  \item[\rm(i)] $\partition=\partition[I\cup\{x\},O]\sqcup\partition[I,O\cup\{x\}]$. 
  \item[\rm(ii)] Let $y\in\relevant\setminus (B^\ast\cup O)$. 
    If $(x,y)$ is an $I,O$-preserving zero-exchange for $B^\ast$, then
    $B^\ast\in\partition[I\cup\{x\},O]$ and
    $(B^\ast\setminus\{x\})\cup\{y\}\in\partition[I,O\cup\{x\}]$ hold. 
  \end{itemize}
\end{lem}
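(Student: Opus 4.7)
The plan is to prove each item by direct verification of the set-theoretic conditions defining $\partition[\cdot,\cdot]$, using the hypothesis $x\in B^\ast\setminus I$ and the definition of $I,O$-preserving zero-exchange.

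For (i), I would first observe that the two sets on the right-hand side are disjoint for a trivial reason: any member of $\partition[I\cup\{x\},O]$ contains $x$, while any member of $\partition[I,O\cup\{x\}]$ excludes $x$. Next I would show each side of the equation is contained in the other. The inclusion $\partition[I\cup\{x\},O]\cup\partition[I,O\cup\{x\}]\subseteq\partition$ is immediate from the definitions, since refining $I$ or $O$ only restricts membership. For the reverse inclusion, I would take an arbitrary $C^\ast\in\partition$ and split on the dichotomy $x\in C^\ast$ versus $x\notin C^\ast$: in the first case $C^\ast\supseteq I\cup\{x\}$ and $C^\ast\cap O=\emptyset$, so $C^\ast\in\partition[I\cup\{x\},O]$; in the second case $C^\ast\supseteq I$ and $C^\ast\cap(O\cup\{x\})=\emptyset$, so $C^\ast\in\partition[I,O\cup\{x\}]$.

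For (ii), the two claims amount to checking the mandatory/forbidden conditions for $B^\ast$ and for $B'\coloneqq(B^\ast\setminus\{x\})\cup\{y\}$. The membership $B^\ast\in\partition[I\cup\{x\},O]$ follows because $B^\ast\in\partition$ already ensures $I\subseteq B^\ast$ and $B^\ast\cap O=\emptyset$, and by hypothesis $x\in B^\ast$. For $B'\in\partition[I,O\cup\{x\}]$, the hypothesis that $(x,y)$ is $I,O$-preserving gives $B'\in\partition$ directly; it remains to verify $B'\cap(O\cup\{x\})=\emptyset$, which reduces to showing $x\notin B'$ and $y\notin O$. The first holds because $y\neq x$ (as $y\notin B^\ast\ni x$), and the second holds by the assumption $y\in\relevant\setminus(B^\ast\cup O)$.

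The only delicate point is making explicit that $B'$ is indeed a minimum basis, which is where the zero-exchange hypothesis $w(x,y)=0$ is used implicitly through the $I,O$-preserving property (which requires membership in $\partition$, i.e., minimality). Otherwise the proof is a short exercise in unfolding the definition of $\partition[I',O']$, and I expect no technical obstacle beyond keeping the four containment conditions separated and citing $x\in B^\ast\setminus I$ and $y\in\relevant\setminus(B^\ast\cup O)$ at the appropriate places.
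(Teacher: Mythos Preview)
Your proposal is correct and follows essentially the same approach as the paper: both establish (i) by noting disjointness is trivial and checking the two inclusions via the dichotomy $x\in C^\ast$ versus $x\notin C^\ast$, and both handle (ii) by verifying the mandatory/forbidden conditions directly. Your treatment is arguably a bit more careful in explicitly flagging that $B'\in\allmcb$ comes from the $I,O$-preserving hypothesis, whereas the paper leaves this implicit and simply checks $B'\supseteq I$ and $B'\cap(O\cup\{x\})=\emptyset$ from $x\notin I$ and $y\notin O$.
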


\begin{proof}
  (i)
  It is obvious that \( \partition[I \cup \set x, O] \) and \( \partition[I, O \cup \set x] \) are disjoint.
  We have \( \partition[I \cup \set x, O] \sqcup \partition[I, O \cup \set x] \subseteq \partition \)
  since \(\partition[I \cup \set x, O]\subseteq\partition \)
  and \( \partition[I, O \cup \set x]\subseteq\partition \) hold.
  Conversely, every $B^\ast\in\partition$
  satisfies either $x\in B^\ast$ or $x\notin B^\ast$.
  If $x\in B^\ast$, then $B^\ast\in\partition[I\cup\{x\},O]$ holds,
  and otherwise $B^\ast\in\partition[I,O\cup\{x\}]$ holds,
  and hence
  we have \( \partition \subseteq \partition[I \cup \set x, O] \cup \partition[I, O \cup \set x] \).

  \noindent
  (ii) It is immediate $B^\ast\in\partition[I\cup\{x\},O]$.
  By $(B^\ast\setminus\{x\})\supseteq I$ and
  $(B^\ast\cup\{y\})\cap O=\emptyset$, we see that
  $((B^\ast\setminus\{x\})\cup\{y\})\supseteq I$
  and $((B^\ast\setminus\{x\})\cup\{y\})\cap (O\cup\{x\})=\emptyset$ hold. 
\end{proof}

For a node $\ang{I,O}$ in the search tree, 
let $B^\ast\in\partition$ denote what we call the representative
and $y\in\relevant\setminus (B^\ast\cup O)$ denote what we call the {\em branching variable}.
We explain how we determine $B^\ast$ and $y$ later. 
Let \( \cir(\basis^\ast;y) \setminus I \coloneqq \{x_1,x_2,\dots,x_b\}\).
That is, for $a=1,2,\dots,b$, $(x_a,y)$ is
an $I,O$-preserving zero-exchange for $B^\ast$. 
By applying Lemma~\ref{lem:parbyexc}(i) repeatedly, 
\( \partition \) can be partitioned as follows;
\begin{align*}
  \partition &=\partition[I\cup\{x_1\},O]\sqcup\partition[I,O\cup\{x_1\}]\\
  &=\partition[I\cup\{x_1,x_2\},O]\sqcup\partition[I\cup\{x_1\},O\cup\{x_2\}]\sqcup\partition[I,O\cup\{x_1\}]\\
  &= \partition[I \cup \{x_1,x_2,\dots,x_b\},O]\sqcup\big(\bigsqcup_{a=1}^b \partition[ I \cup \{x_1,x_2,\dots,x_{a-1}\},O \cup\{x_a\}]\big).
\end{align*}
We branch the node $\ang{I,O}$
to take $b+1$ children whose mandatory and forbidden subsets appear in the above equation,
that is, $\ang{I\cup\{x_1,x_2\dots,x_b\},O},\ang{I,O\cup\{x_1\}},\dots,\ang{I\cup\{x_1,\dots,x_{b-1}\},O\cup\{x_b\}}$.
The size of the forbidden subset of the first child
(i.e., $\ang{I\cup\{x_1,x_2\dots,x_b\},O}$) is smaller than others by one. 
\invis{
To make the size of forbidden subsets equal, 
we show the following lemma.
\begin{lem}\label{lem:extendO}
  Suppose that a matroid ${\mathcal M}=(U,{\mathcal I})$
  with a weight function $w:U\to\real$ is given.
  For $I,O\subseteq U$ $(I\cap O=\emptyset)$,
  let $B^\ast\in\partition$, $y\in\relevant\setminus (B^\ast\cup O)$
  and \( \cir(\basis^\ast;y) \setminus I \coloneqq \{x_1,x_2,\dots,x_b\}\).
  It holds that \(\partition[I\cup\{x_1,x_2\dots,x_b\},O]=\partition[I\cup\{x_1,x_2\dots,x_b\},O\cup\{y\}]\).
\end{lem}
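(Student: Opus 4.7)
The plan is to prove $\subseteq$ and $\supseteq$ separately, where one direction is immediate and the other is a clean application of \lemref{nodivide}. The inclusion $\partition[I\cup\{x_1,\dots,x_b\},O\cup\{y\}]\subseteq\partition[I\cup\{x_1,\dots,x_b\},O]$ is obvious, since enlarging the forbidden set can only shrink the family of admissible minimum bases.

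For the reverse inclusion, I would argue that every $B'^\ast\in\partition[I\cup\{x_1,\dots,x_b\},O]$ automatically satisfies $y\notin B'^\ast$, so it already lies in $\partition[I\cup\{x_1,\dots,x_b\},O\cup\{y\}]$. The key observation is the choice of $x_1,\dots,x_b$: by definition $\cir(B^\ast;y)\setminus I=\{x_1,\dots,x_b\}$, which is equivalent to $\cir(B^\ast;y)\subseteq I\cup\{x_1,\dots,x_b\}$. I would then apply \lemref{nodivide} with the roles of the mandatory/forbidden pair played by $(I\cup\{x_1,\dots,x_b\},O)$, the representative basis by $P^\ast:=B^\ast$, and the relevant element by $y$.

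To justify this application I would verify the three hypotheses of \lemref{nodivide}: (a) $B^\ast\in\partition[I\cup\{x_1,\dots,x_b\},O]$, which holds because $B^\ast\in\partition$ and the $x_i$'s are taken from $\cir(B^\ast;y)\subseteq B^\ast$, so $B^\ast\supseteq I\cup\{x_1,\dots,x_b\}$; (b) $y\in\relevant\setminus B^\ast$, which is guaranteed by the hypothesis $y\in\relevant\setminus(B^\ast\cup O)$; and (c) $I\cup\{x_1,\dots,x_b\}\supseteq\cir(B^\ast;y)$, which is the observation above. The lemma then concludes that no minimum basis in $\partition[I\cup\{x_1,\dots,x_b\},O]$ contains $y$, finishing the proof.

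I do not expect any real obstacle here; the whole content of the statement is that the $x_i$'s were defined to exhaust $\cir(B^\ast;y)$ modulo $I$, which is precisely the hypothesis triggering \lemref{nodivide}. The only point to state carefully is that \lemref{nodivide} is being invoked with an \emph{enlarged} mandatory set (namely $I\cup\{x_1,\dots,x_b\}$ rather than $I$), and that this enlargement is compatible with the representative $B^\ast$ we started from.
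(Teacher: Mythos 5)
Your proof is correct and follows essentially the same route as the paper: verify that $B^\ast\in\partition[I\cup\{x_1,\dots,x_b\},O]$ and that $I\cup\{x_1,\dots,x_b\}\supseteq\cir(B^\ast;y)$, then invoke \lemref{nodivide} with the enlarged mandatory set to conclude that no basis in $\partition[I\cup\{x_1,\dots,x_b\},O]$ contains $y$. The only cosmetic difference is that you spell out the two inclusions separately, whereas the paper treats the trivial direction as implicit.
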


\begin{proof}
  We see that $B^\ast\supseteq I\cup\{x_1,x_2,\dots,x_b\}$
  and $B^\ast\cap O=\emptyset$, and hence $B^\ast\in\partition[I\cup\{x_1,x_2\dots,x_b\},O]$ holds.
  We have \(I \cup\{x_1,x_2,\dots,x_b\}\supseteq \cir (\basis^\ast;y)\).
  By \lemref{nodivide},
  no minimum basis in $B^\ast\in\partition[I\cup\{x_1,x_2\dots,x_b\},O]$
  contains $y$, as required. 
\end{proof}
By \lemref{extendO},
}
To make the size of forbidden subsets equal, 
observe that 
$I\cup\{x_1,x_2,\ldots,x_b\}\cup\{y\}$ contains a dependent set
by the definition of $x_1,x_2,\ldots,x_b$.
Any basis in $\mathcal B^\ast (I\cup\{x_1,x_2,\ldots,x_b\},O)$ 
does not contain $y$, and hence 
we regard $\ang{I\cup\{x_1,x_2,\dots,x_b\},O\cup\{y\}}$ as a child of $\ang{I,O}$,
instead of $\ang{I\cup\{x_1,x_2,\dots,x_b\},O}$. 

In summary, 
$\ang{I,O}$ has $\ang{I_0,O_0},\ang{I_1,O_1},\dots,\ang{I_b,O_b}$
as its children, where
\begin{align}
  I_0 &= I\cup\{x_1,x_2,\dots,x_b\};\nonumber\\
  O_0 &= O\cup\{y\};\nonumber\\
  I_a &= I\cup\{x_1,x_2,\dots,x_{a-1}\} &&\textrm{for\ }a=1,2,\dots,b;\nonumber\\
  O_a &= O\cup\{x_a\} &&\textrm{for\ }a=1,2,\dots,b.\label{eq:IO}
\end{align}
We call $\ang{I_0,O_0}$ the {\em eldest child of $\ang{I,O}$}
to distinguish $\ang{I_0,O_0}$ from the other children. 
Let $r$ denote the rank of the matroid $\MM$. 
The number of children is at least 1 when $b=0$ and
at most $r+1$ when $b=r$.
The set $\partition[I,O]$ is partitioned as follows. 
\begin{align}
  \partition = \bigsqcup_{a=0}^b \partition[I_a,O_a].
  \label{eq:part}    
\end{align}

\invis{
\begin{lem}\label{lem:sizeofo}
    If \( \partition[I',O'] \) is obtained by partitioning \( \partition \),
    then it holds that \( |O'| = |O|+1 \).
\end{lem}

\begin{proof}
    If \( O' = O \cup \set c \), then 
    \( |O \cup \set c| = |O| + 1 \) holds
    since \( c \notin O \).
    Otherwise,
    we have \( d_1,\dots,d_b \notin O \)
    since \( d_a \in \basis^\ast \) for \( a = 1,\dots,b \), 
    and hence \( |O \cup \set{d_a}| = |O| + 1 \) holds.
\end{proof}
}

We explain how to determine the representative and the branching variable 
for node $\ang{I,O}$.
As the base case, for the root $\ang{I,O}=\ang{\emptyset,\emptyset}$, 
we choose any minimum basis as its representative $R^\ast_{\emptyset,\emptyset}$.
Let \( |\relevant| \setminus R^\ast_{\emptyset,\emptyset} \coloneqq \sets y{|\relevant| - r} \),
where 
\( w(y_1) \leq w(y_2) \leq \dots \leq w(y_{|\relevant| - r}) \). 
The depth of a node is defined to be the number of edges
in the path from the root to that node.
The height of the search tree (i.e., the maximum depth)
is precisely $h_{\max}\coloneqq|\relevant| - r$.
For every node $\ang{I,O}$ in the depth $h=0,1,\dots,h_{\max}-1$,
we set the branching variable to $y_{h+1}$. 
Let $\ang{I_0,O_0},\ang{I_1,O_1},\dots,\ang{I_b,O_b}$ denote the children of $\ang{I,O}$
whose definitions are given in Eq.~(\ref{eq:IO}) for $y=y_{h+1}$. 
We define the representatives of the children as follows;
\begin{align*}
  &R^\ast_{I_0,O_0}\coloneqq R^\ast_{I,O} ;\\
  &R^\ast_{I_a,O_a}\coloneqq (R^\ast_{I,O}\setminus\{x_a\})\cup\{y_{h+1}\} &&\textrm{for\ }a=1,2,\dots,b.
\end{align*}
The point is that the eldest child $\ang{I_0,O_0}$ has the same representative
as $\ang{I,O}$, 
and that other children have different representatives from $\ang{I,O}$. 
Further, every node in the depth $h$
has a forbidden subset of the size $h$. 

\invis{

\begin{figure}[t]
\centering
\begin{center}
\begin{tikzpicture}
\node[draw,shape=rectangle,minimum width=300](B10) at (0,0) {\( \partition[\emptyset ,\emptyset], R^\ast_{\emptyset,\emptyset} \)};
\node[draw,shape=rectangle,minimum width=60](B11) at (-5,-2) {\( \partition[\sets db ,\set{c_1}], R^\ast_{\emptyset,\emptyset} \)};
\node[draw,shape=rectangle](B21) at (-1,-2) {\( \partition[\emptyset ,\set{d_1}], \basis^\ast_2 \)};
\node[draw,shape=rectangle](Bb1) at (4.5,-2) {\( \partition[\sets d{b-1} ,\set{d_b}], \basis^\ast_b \)};
\node at ($(Bb1)!.65!(B21)$) {\( \cdots \)};
\draw[-{Stealth[length=3mm]}] (B10)to (B11);
\draw[-{Stealth[length=3mm]}] (B10)to node[right]{\( (d_1,c_1) \)}(B21);
\draw[-{Stealth[length=3mm]}] (B10)to node[right,outer sep=10]{\( (d_b,c_1) \)} (Bb1);
\arc{B11}{$(B11)+(-1.2,-2)$}
\arc{B11}{$(B11)+(-.5,-2)$}
\arc{B11}{$(B11)+(.5,-2)$}
\arc{B11}{$(B11)+(1.2,-2)$}
\node at (0,-3.5) {\( \cdots \)};
\arc{Bb1}{$(Bb1)+(-1,-2)$}
\arc{Bb1}{$(Bb1)+(1,-2)$}
\node at (0,-5) {\( \vdots \)};
\node[draw,shape=rectangle](l1) at (-6,-7.5) {\( R^\ast_{\emptyset,\emptyset} \)};
\node[draw,shape=rectangle](l2) at (-5,-7.5) {\color{white}\( R^\ast_{\emptyset,\emptyset} \)};
\node[draw,shape=rectangle](l3) at (-4,-7.5) {\color{white}\( R^\ast_{\emptyset,\emptyset} \)};
\node[draw,shape=rectangle](l4) at (-1,-7.5) {\( \basis^\ast_2 \)};
\node at ($(l3)!.5!(l4)$) {\( \cdots \)};
\node[draw,shape=rectangle](l5) at (0,-7.5) {\color{white}\( R^\ast_{\emptyset,\emptyset} \)};
\node[draw,shape=rectangle](l6) at (3,-7.5) {\( \basis^\ast_b \)};
\node[draw,shape=rectangle](l7) at (6.5,-7.5) {\color{white}\( \basis^\ast_b \)};
\node at ($(l5)!.5!(l6)$) {\( \cdots \)};
\node at ($(l6)!.5!(l7)$) {\( \cdots \)};
\arc{$(l1)+(0,1.5)$}{l1}
\arc{$(l2)+(0.5,1.5)$}{l2}
\arc{$(l3)+(-0.5,1.5)$}{l3}
\arc{$(l4)+(0.5,1.5)$}{l4}
\arc{$(l4)+(0.5,1.5)$}{l5}
\arc{$(l6)+(0,1.5)$}{l6}
\node at (-7,-1){\( c_1 \)};
\node at (-7,-3){\( c_2 \)};
\node at (-7,-6.5){\( c_{|\relevant|-r} \)};
\end{tikzpicture}
\end{center}
\caption{An illustration of search tree}
\label{fig:stree}
\end{figure}
}


For $h=0,1,\dots,h_{\max}$,
let us denote by ${\mathcal N}_h$ the set of
all nodes in depth $h$. 
Lemma~\ref{lem:parbyexc} implies that
the family of subsets $\partition$ for all $\ang{I,O}\in{\mathcal N}_h$
is a partition of $\allmcb$.
Let us denote by \( \foundb_{h} \)
the set of all representatives of nodes in depth \( h \), that is,
\[
\foundb_h\triangleq\bigcup_{\ang{I,O}\in{\mathcal N}_h}\{R^\ast_{I,O}\}. 
\]

The following lemma shows that,
in every depth $h=1,2,\dots,h_{\max}$,
there is a representative that does not appear in any lower depth,
and that, for every minimum basis $B^\ast\in\allmcb$,
there is a leaf whose representative is $B^\ast$.

\begin{lem}\label{lem:newmcb}
  Suppose that a matroid ${\mathcal M}=(U,{\mathcal I})$
  with a weight function $w:U\to\real$ is given. 
  It holds that $\foundb_0\subsetneq\foundb_1\subsetneq\dots\subsetneq\foundb_{h_{\max}}=\allmcb$.
\end{lem}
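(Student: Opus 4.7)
The plan is to decompose the claim into three parts: weak monotonicity $\foundb_h \subseteq \foundb_{h+1}$ for all $h$, the terminal equality $\foundb_{h_{\max}} = \allmcb$, and strict inclusion at every step $h < h_{\max}$. The monotonicity is immediate from the construction: for every $\ang{I,O} \in {\mathcal N}_h$, the eldest child $\ang{I_0, O_0} \in {\mathcal N}_{h+1}$ satisfies $R^\ast_{I_0, O_0} = R^\ast_{I, O}$ by definition, so every representative at depth $h$ reappears at depth $h+1$.

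The core step is the equality $\foundb_{h_{\max}} = \allmcb$. The inclusion $\foundb_{h_{\max}} \subseteq \allmcb$ is obvious since every representative belongs to some $\partition \subseteq \allmcb$. For the reverse, I would take any $B^\ast \in \allmcb$ and descend the search tree, choosing at each node $\ang{I, O}$ the unique child whose block in Eq.~(\ref{eq:part}) contains $B^\ast$; this arrives at some leaf $\ang{I^\dagger, O^\dagger}$ with $B^\ast \in \partition[I^\dagger, O^\dagger]$. A straightforward induction along the path shows $O^\dagger \subseteq \relevant$, because each element added to $O$ during the descent is either some $y_{h+1} \in \relevant$ or some $x_a$ belonging to the current representative (a minimum basis, whose elements are all relevant). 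Since $|O^\dagger| = h_{\max} = |\relevant| - r$, we get $|\relevant \setminus O^\dagger| = r$. Combined with $B^\ast \subseteq \relevant$, $B^\ast \cap O^\dagger = \emptyset$, and $|B^\ast| = r$, this forces $B^\ast = \relevant \setminus O^\dagger$; hence $\partition[I^\dagger, O^\dagger]$ collapses to the singleton $\{B^\ast\}$, and $R^\ast_{I^\dagger, O^\dagger} = B^\ast \in \foundb_{h_{\max}}$.

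For strict inclusion at depth $h < h_{\max}$, I would use the auxiliary invariant (by induction on depth, from the recursive definition of $R^\ast$) that every representative at depth $h$ is obtained from $R^\ast_{\emptyset, \emptyset}$ by zero-exchanges each inserting some element of $\{y_1, \ldots, y_h\}$; in particular no representative at depth $h$ contains $y_{h+1}$. Because $y_{h+1}$ is relevant, some $B^\ast \in \allmcb$ contains $y_{h+1}$, and this $B^\ast$ lies in exactly one block $\partition[I, O]$ at depth $h$. The contrapositive of \lemref{nodivide} then yields $\cir(R^\ast_{I, O}; y_{h+1}) \not\subseteq I$, so $b \geq 1$ at this node. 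Its first non-eldest child therefore produces the representative $(R^\ast_{I, O} \setminus \{x_1\}) \cup \{y_{h+1}\}$, which contains $y_{h+1}$ and is absent from $\foundb_h$.

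The step I anticipate as the main obstacle is the singleton argument at the leaf: the bound $|O^\dagger| = h_{\max}$ alone does not identify $B^\ast$ among all minimum bases. The key is maintaining the easily overlooked invariant $O^\dagger \subseteq \relevant$ along every root-to-leaf path, which upgrades $h_{\max} = |\relevant| - r$ into $|\relevant \setminus O^\dagger| = r$ and collapses $\partition[I^\dagger, O^\dagger]$ to a singleton.
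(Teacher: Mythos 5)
Your proof is correct, and it agrees with the paper on the monotonicity step and on the terminal equality $\foundb_{h_{\max}} = \allmcb$ (the paper also relies, implicitly through the node definition $I,O \subseteq \relevant$, on the invariant $O \subseteq \relevant$ that you make explicit; the collapse-to-a-singleton counting is the same). Where you genuinely diverge is in the strictness step. The paper uses the already-established terminal equality to pick a leaf whose representative contains $y_{h+1}$, walks up to its ancestors at depths $h$ and $h+1$, and rules out the eldest-child possibility by observing that the eldest child forces $y_{h+1}$ into $O$, which is inherited down to the leaf and contradicts $y_{h+1}$ lying in the leaf's representative. You instead argue purely locally at depth $h$: a minimum basis containing $y_{h+1}$ lives in a unique block $\partition$, and the contrapositive of \lemref{nodivide} forces $\cir(R^\ast_{I,O};y_{h+1}) \not\subseteq I$, i.e., $b \geq 1$, so that node spawns a non-eldest child whose representative contains $y_{h+1}$. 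Your route makes the strictness argument independent of the terminal equality (you only need relevance of $y_{h+1}$, not the full leaf-level partition), and it puts \lemref{nodivide} to direct use in this lemma, whereas the paper invokes \lemref{nodivide} only indirectly through \lemref{extendO}; the paper's route, conversely, avoids re-invoking \lemref{nodivide} here by leaning on the downward monotonicity of $O$.
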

\begin{proof}
  For a node $\ang{I,O}$ in depth $h$, it holds that $|O|=h$. 
  Then for each leaf $\ang{I,O}\in{\mathcal N}_{h_{\max}}$,
  $|O|=h_{\max}=|\relevant| - r$ holds
  and $\partition$ is not empty since it contains at least one minimum basis (i.e., representative). 
  A minimum basis $B^\ast\in\partition$ should satisfy
  $B^\ast\subseteq \relevant\setminus O$ and hence $|B^\ast|\le r$.
  Then $B^\ast$ is unique since it is a basis and $|B^\ast|=r$.
  We have seen that $|\partition|=1$. 
  The family of $\partition$ over all $\ang{I,O}\in{\mathcal N}_{h_{\max}}$
  is a partition of $\allmcb$,
  and hence $\foundb_{h_{\max}}=\allmcb$ holds.

  Let $h\in[0,h_{\max}-1]$.
  For every node $\ang{I,O}\in{\mathcal N}_h$,
  the representative $R^\ast_{I,O}$ is also the representative of its eldest child,
  and hence $\foundb_h\subseteq\foundb_{h+1}$ holds. 
  The element $y_{h+1}$ is relevant,
  and 
  by $\foundb_{h_{\max}}=\allmcb$, there is a leaf, say $\ang{I^\clubsuit,O^\clubsuit}$,
  whose representative $B^\ast_{I^\clubsuit,O^\clubsuit}$ contains $y_{h+1}$. 
  Let $\ang{I,O}\in{\mathcal N}_h$
  (resp., $\ang{I',O'}\in{\mathcal N}_{h+1}$) denote
  the ancestor of $\ang{I^\clubsuit,O^\clubsuit}$
  in depth $h$ (resp., $h+1$). The representative
  of $\ang{I,O}$
  does not contain $y_{h+1}$ since
  the representative of any node in ${\mathcal N}_h$ is a subset of
  $R^\ast_{\emptyset,\emptyset}\cup\{y_1,y_2,\dots,y_h\}$
  by construction of the search tree. 
  We claim that $\ang{I',O'}$ is not the eldest child of $\ang{I,O}$
  since otherwise $O'=O\cup\{y_{h+1}\}\subseteq O^\clubsuit$ would hold,
  contradicting that $y_{h+1}\in B^\ast_{I^\clubsuit,O^\clubsuit}$
  and thus $y_{h+1}\notin O^\clubsuit$.
  Then we have $y_{h+1}\in R^\ast_{I',O'}$
  and hence $R^\ast_{I',O'}\notin\foundb_h$,
  indicating that $\foundb_h\subsetneq\foundb_{h+1}$.
\end{proof}

We can enumerate all minimum bases by traversing the search tree.
Specifically, we explore the search tree in the breadth-first manner
and output minimum bases in $\foundb_0$, $\foundb_1\setminus\foundb_0$,
$\foundb_2\setminus\foundb_1$, $\dots$, $\foundb_{h_{\max}}\setminus\foundb_{h_{\max}-1}$ in this order.

The algorithm is summarized in Algorithm~\ref{alg:allMCBs}. 
In this pseudocode, the set ${\mathcal S}_h$, $h=0,1,\dots,h_{\max}$
is introduced to store all nodes in depth $h$ (i.e., ${\mathcal N}_h$).


\begin{algorithm}[t!]
  \caption{An algorithm to enumerate all minimum bases in a matroid
    where elements are weighted}
\label{alg:allMCBs}
\begin{algorithmic}[1]
  \Require{A matroid $\matroid=(U,\MI)$ with a weight function $w:U\to\real$
    that is implicitly given by
    IND-oracle, MinB-oracle and REL-oracle}
\Ensure{All minimum bases in \( \matroid \)}
\State $R^\ast_{\emptyset,\emptyset}\gets$ a minimum basis; 
\label{line:firstMCB}
\State output $R^\ast_{\emptyset,\emptyset}$;\label{line:output1}
\State ${\mathcal S}_0\gets\{\ang{\emptyset,\emptyset}\}$; \label{line:S0}
\For{$h=0,1,\dots,h_{\max}-1$} \label{line:MCB_for_begin}
\State ${\mathcal S}_{h+1}\gets\emptyset$;
\For{$\ang{I,O}\in{\mathcal S}_{h}$}\label{line:MCB_foreach_begin}
\State $\{x_1,x_2,\dots,x_b\}\gets\cir(R^\ast_{I,O};y_{h+1})\setminus I$;
\For{$a=1,2,\dots,b$}\label{line:MCB_cir} 
\State $I_a\gets I\cup\{x_1,x_2,\dots,x_{a-1}\}$; $O_a\gets O\cup\{x_a\}$;
\State $R^\ast_{I_a,O_a}\gets (R^\ast_{I,O}\setminus\{x_a\})\cup\{y_{h+1}\}$;
\State output $R^\ast_{I_a,O_a}$;\label{line:output2}
\State ${\mathcal S}_{h+1}\gets{\mathcal S}_{h+1}\cup\{\ang{I_a,O_a}\}$
\label{line:others}
\EndFor;\label{line:MCB_cir_end}
\State $I_0\gets I\cup\{x_1,x_2,\dots,x_b\}$; $O_0\gets O\cup\{y_{h+1}\}$;
\State $R^\ast_{I_0,O_0}\gets R^\ast_{I,O}$;
\State ${\mathcal S}_{h+1}\gets{\mathcal S}_{h+1}\cup\{\ang{I_0,O_0}\}$
\label{line:eldest}
\EndFor\label{line:MCB_foreach_end}
\EndFor\label{line:MCB_for_end}
\end{algorithmic}
\end{algorithm}

\begin{thm}
  \label{thm:allMCBs}
  Suppose that a matroid ${\mathcal M}=(U,\MI)$ with a weight function $w:U\to\real$
  is given. Algorithm~\ref{alg:allMCBs} enumerates all minimum bases in ${\mathcal M}$
  in incremental polynomial time with respect to the rank $r$ of ${\mathcal M}$ and the oracle running times.
  To be more precise,
  after $\MO(\tauMB+\taupreREL)$-time preprocessing, 
  the $\ell$-th minimum basis is output
  in $\MO((r+\ell)\tauREL+\ell^2 r\tauMEM)$ time. 
\end{thm}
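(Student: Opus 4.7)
The plan is to split the argument into correctness and running-time components, both resting on the structural lemmas already established.

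For correctness, I would prove the invariant that every node $\ang{I,O}$ produced by the algorithm has a representative $R^\ast_{I,O}\in\partition[I,O]$. The root case follows from the MinB-oracle; non-eldest children inherit the invariant by \lemref{parbyexc}(ii); the eldest child inherits it by \lemref{extendO}. Hence every output lies in $\allmcb$. For non-duplication, at a fixed depth $h$ the subsets $\partition[I,O]$ partition $\allmcb$ by iterating \lemref{parbyexc}(i), so representatives at that depth are pairwise distinct. Across depths, every new representative produced at depth $h\ge 1$ contains $y_h$, whereas every earlier representative lies in $R^\ast_{\emptyset,\emptyset}\cup\{y_1,\dots,y_{h-1}\}$ by construction, so no collision is possible. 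Completeness is then immediate from \lemref{newmcb}.

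For the running time, the key counting identity to pin down is $|{\mathcal N}_h|=|\foundb_h|$: each node at depth $h-1$ spawns exactly one eldest child (inheriting the parent's representative) and some non-eldest children (each contributing one new representative), and a telescoping sum gives the equality. Each node processed costs $O(r\tauMEM)$ to compute $\cir(R^\ast_{I,O};y_{h+1})\setminus I$ by checking the $r$ possible exchanges through independence queries. Now fix the depth $h^\ast$ at which the $\ell$-th basis is output. Since $|\foundb_h|\ge h+1$ by \lemref{newmcb}, we have $h^\ast\le\ell-1$, and $|\foundb_h|\le\ell$ for all $h\le h^\ast$, yielding total node-processing time $\sum_{h\le h^\ast}|\foundb_h|\cdot O(r\tauMEM)=O(\ell^2 r\tauMEM)$.

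To bound REL-oracle usage, I would observe that the oracle returns every relevant element in weight order rather than only the $y$-stream; since at most $r$ relevant elements belong to $R^\ast_{\emptyset,\emptyset}$, producing $y_1,\dots,y_{h^\ast}$ requires at most $r+h^\ast=O(r+\ell)$ calls. Adding the $O(\tauMB+\taupreREL)$ preprocessing completes the claimed bound. The most delicate step I anticipate is formalizing this REL-oracle charging together with the identity $|{\mathcal N}_h|=|\foundb_h|$; once those are in place, everything else reduces to routine accounting built on \lemref{parbyexc}, \lemref{extendO}, and \lemref{newmcb}.
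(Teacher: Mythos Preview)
Your proposal is correct and follows essentially the same route as the paper's proof: the correctness argument via the partition structure and the containment $R^\ast_{I,O}\subseteq R^\ast_{\emptyset,\emptyset}\cup\{y_1,\dots,y_h\}$ matches the paper's reasoning, and your complexity analysis---resting on the identity $|{\mathcal N}_h|=|\foundb_h|$ together with $h^\ast\le\ell$ and $|\foundb_h|\le\ell$---is exactly what the paper does (it writes $|{\mathcal S}_i|=|{\mathcal R}_i|$ without further comment, so your telescoping justification is a welcome clarification rather than a departure). The only cosmetic difference is that the paper argues completeness by a direct ``climb to the root'' contradiction instead of invoking \lemref{newmcb}, but these are equivalent.
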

\begin{proof}
  {\bf(Correctness)}
  We claim that ${\mathcal S}_{h}={\mathcal N}_h$, $h=0,1,\dots,h_{\max}$ holds
  upon completion of the algorithm.
  It is clearly true for $h=0$ by \lineref{S0}.
  Suppose that ${\mathcal S}_h={\mathcal N}_h$ holds as the assumption of the induction.
  For every node $\ang{I,O}\in{\mathcal S}_h$, 
  the eldest child $\ang{I_0,O_0}$ is added to ${\mathcal S}_{h+1}$
  in \lineref{eldest},
  and all other children are added to ${\mathcal S}_{h+1}$ in
  \lineref{others}, showing the claim. 
  
  No minimum basis is output more than once
  since a minimum basis $B^\ast\in{\mathcal R}_h$ is output when and only when
  it is the representative of a non-eldest child
  of its parent;
  in this case, $B^\ast\notin{\mathcal R}_0\cup{\mathcal R}_1\cup\dots\cup{\mathcal R}_{h-1}$ holds and $B^\ast$ is different from any other bases in ${\mathcal R}_h$.
  
  We show by contradiction that all minimum bases are output. 
  Suppose that there is a minimum basis that is not output by the algorithm.
  By \lemref{newmcb}, there is a leaf $\ang{I^\clubsuit,O^\clubsuit}$
  whose representative $R^\ast_{I^\clubsuit,O^\clubsuit}$ is not output.
  Then $\ang{I^\clubsuit,O^\clubsuit}$ is the eldest child of the parent,
  say $\ang{I',O'}$. 
  The representative of $\ang{I',O'}$ is the same as $\ang{I^\clubsuit,O^\clubsuit}$ (i.e., $R^\ast_{I',O'}=R^\ast_{I^\clubsuit,O^\clubsuit}$).
  The representative $R^\ast_{I',O'}$ is not output and hence $\ang{I',O'}$ is the eldest child of the parent.
  We go upward to the root in this way. 
  However, the representative of the root is output in \lineref{output1},
  a contradiction. 

  \medskip\noindent{\bf(Complexity)}
  Let $h\in\{0,1,\dots,h_{\max}-1\}$.
  By \lemref{newmcb}, there exists a node $\ang{I,O}$ in depth $h+1$
  whose representative $R^\ast_{I,O}$ is output
  when this node is visited by the algorithm. 
  Suppose that $R^\ast_{I,O}$ is the $\ell$-th output. 
  \begin{itemize}
  \item 
    The MinB-oracle is called once in \lineref{firstMCB} to construct a minimum basis as $R^\ast_{\emptyset,\emptyset}$, which can be done in $\MO(\tauMB)$ time.
  \item To use the REL-oracle, we conduct preprocessing
    in $\MO(\taupreREL)$ time. 
  \item We can determine \( y_{h+1} \) that is used
    in the for-loop from line~\ref{line:MCB_foreach_begin}
    to \ref{line:MCB_foreach_end}
    in at most \( r + h + 1 \) calls of the REL-oracle.
  \item The for-loop from line~\ref{line:MCB_foreach_begin}
    to \ref{line:MCB_foreach_end} is repeated $|{\mathcal S}_h|$ times. In each iteration,
    the most critical part is
    determination of $\{x_1,x_2,\dots,x_b\}$,
    which can be done in $r$ calls of the IND-oracle.    
  \end{itemize}
  Finally, $ \sum_{i=1}^{h} |\mathcal S_i| 
  = \sum_{i=1}^{h} |\mathcal R_i| 
  \leq h |\mathcal R_h| \le h \ell$ holds by Lemma~\ref{lem:newmcb},
  where $|{\mathcal R}_h|$ equals to the number of representatives
  that have been output by the depth $h$.
  It holds that \( h \leq \ell \) by Lemma~\ref{lem:newmcb}.
  After $\MO(\tauMB+\taupreREL)$-time preprocessing, 
  the computation time that the algorithm takes to output $R^\ast_{I,O}$
  is $\MO((r+h+1)\tauREL+\sum_{i=1}^{h+1} | \mathcal S_i | r\tauMEM)=\MO((r+\ell) \tauREL+\ell^2 r\tauMEM)$. 
\end{proof}


\subsection{A More Efficient Algorithm}
\label{sec:more}
We develop a more efficient enumeration algorithm
by using Algorithm~\ref{alg:allMCBs} as a subroutine.
We show that there is an integer $L$ such that 
the computation time to output the $\ell$-th solution is
$\MO((r+\ell)\tauREL+\ell^2r\tauMEM)$ for $\ell\le L$,
and that the delay for the remaining solutions is $O(r)$. 

Again, let $\MM=(U,\MI)$ denote the given matroid
with a weight function $w:U\to\real$. 
For a subset \( X \subseteq U \), 
we define the \emph{restriction of \( \MM \) to \( X \)}
to be the system \( \MM|X\coloneqq(X,\MI|X) \), 
where \( \MI|X \triangleq \set{I \in \MI \mid I \subseteq X} \).
The restriction $\MM|X$ is a matroid~\cite{MT.2011}.
%

Let $P^\ast=\{x_1,x_2,\dots,x_r\}$ and $Q^\ast=\{y_1,y_2,\dots,y_r\}$ 
be minimum bases of $\MM$ such that
$w(x_1)\le w(x_2)\le\dots\le w(x_r)$
and $w(y_1)\le w(y_2)\le\dots\le w(y_r)$.
By \lemref{zeroexchange},
it is easy to see that $w(x_i)=w(y_i)$ holds for all $i=1,2,\dots,r$. 
Let  $\omega_1,\omega_2,\dots,\omega_\beta\in\real$ denote the distinct weights
that appear in a minimum basis such that $\omega_1\le\omega_2\le\dots\le\omega_\beta$ and $r_1,r_2,\dots,r_\beta\in\bbZ_+$ denote their frequencies.
In other words, every minimum basis of $\MM$ contains $r_j$ elements of weight $\omega_j$,
$j=1,2,\dots,\beta$. 
We define \( \MI^\ast \triangleq \set{S \subseteq U^\ast \mid \exists \basis^\ast \in \mathcal B^\ast, S \subseteq \basis^\ast} \).
The set system $\MM^\ast\coloneqq (U^\ast,\MI^\ast)$ is a matroid~\cite{9318001}.
For $j\in[1,\beta]$, 
we also define
\[
U^\ast_j\triangleq\{e\in U^\ast\mid w(e)=\omega_j\}\ \ \ \textrm{and}\ \ \ %
\MI^\ast_j\triangleq\{S\subseteq U^\ast_j\mid S\in\MI^\ast\}.
\]
Then $\MM^\ast_j\coloneqq(U^\ast_j,\MI^\ast_j)$
is a matroid since $\MM^\ast_j$ is a restriction of $\MM^\ast$
(i.e., $\MM^\ast_j=\MM^\ast|U^\ast_j$). 

By the following lemma,
we see that the union of any bases of $\MM^\ast_1,\MM^\ast_2,\dots,\MM^\ast_\beta$
is a minimum basis of $\MM$
and that any minimum basis of $\MM$ can be partitioned into
bases of $\MM^\ast_1,\MM^\ast_2,\dots,\MM^\ast_\beta$. 
\begin{lem}
  \label{lem:decomp}
  Suppose that a matroid ${\mathcal M}=(U,{\mathcal I})$
  with a weight function $w:U\to\real$ is given.
  {\rm (i)} For $j\in[1,\beta]$, let $B^\ast_j$ be a basis of matroid $\MM^\ast_j$.
  The union $B^\ast_1\cup B^\ast_2\cup\dots\cup B^\ast_\beta$
  is a minimum basis of $\MM$.
  {\rm (ii)} For any minimum basis $B^\ast$ of $\MM$,
  $B^\ast\cap U^\ast_j$ is a basis of $\MM^\ast_j$, $j=1,2,\dots,\beta$. 
\end{lem}
\begin{proof}
  (i) 
  For $j\in[1,\beta]$, $B^\ast_j\in\MI^\ast$ holds,
  and there is a minimum basis of $\MM$,
  say $\hat{B}^\ast_j$, 
  that contains $B^\ast_j$ as a subset.
  For $k\in[1,\beta]$, let $\hat{B}^\ast_{j,k}=\hat{B}^\ast_j\cap U^\ast_k$,
  where $\hat{B}^\ast_{j,j}=B^\ast_j$ holds.
  Let us take up
  \begin{align*}
    \hat{B}^\ast_1=B^\ast_1\cup \hat{B}^\ast_{1,2}\cup\hat{B}^\ast_{1,3}\cup\dots\cup\hat{B}^\ast_{1,\beta}\ \ \ \textrm{and}\ \ \ %
    \hat{B}^\ast_2=\hat{B}^\ast_{2,1}\cup B^\ast_2\cup\hat{B}^\ast_{2,3}\cup\dots\cup\hat{B}^\ast_{2,\beta}.
  \end{align*}
  Using \lemref{zeroexchange}, we perform $(x,y)$-exchanges on $\hat{B}^\ast_2$
  such that $x\in\hat{B}^\ast_{2,1}$ and $y\in B^\ast_1$ to
  obtain a minimum basis $B^\ast_1\cup B^\ast_2\cup\hat{B}^\ast_{2,3}\cup\dots\cup\hat{B}^\ast_{2,\beta}$ of $\MM$ that contains $B^\ast_1\cup B^\ast_2$ as a subset.
  Repeating this to $\hat{B}^\ast_3,\hat{B}^\ast_4,\dots,\hat{B}^\ast_\beta$,
  we have a minimum basis $B^\ast_1\cup B^\ast_2\cup\dots\cup B^\ast_\beta$
  of $\MM$. (ii) is immediate by the definition of $\MM^\ast_j$. 
\end{proof}

Generating a minimum basis $B^\ast$ by the MinB-oracle,
we first
enumerate all bases of $\MM^\ast_1,\MM^\ast_2,\dots,\MM^\ast_\beta$. 
For $j\in[1,\beta]$, let $B_j$ be a basis of $\MM^\ast_j$
and $B^\ast_j\coloneqq B^\ast\cap U^\ast_j$.
Then we have a minimum basis $(B^\ast\setminus B^\ast_j)\cup B_j$ of $\MM$.
Once all bases of $\MM^\ast_1,\MM^\ast_2,\dots,\MM^\ast_\beta$ are obtained,
we can enumerate the remaining minimum bases of $\MM$
by taking the union of any $\beta$ bases of $\MM^\ast_1,\MM^\ast_2,\dots,\MM^\ast_\beta$,
which can be done in $O(r)$ delay. 

We can enumerate bases of a matroid $\MM^\ast_j$, $j\in[1,\beta]$
by running Algorithm~\ref{alg:allMCBs} on $\MM^\ast_j$. 
To use the algorithm,
let us confirm that the three oracles are available to $\MM^\ast_j$
so that the time complexity remains the same under the big-O notation. 
{\bf (MinB-oracle)} For the minimum basis $B^\ast$
generated by the MinB-oracle of the given $\MM$,
it suffices to take its subset of the $j$-th smallest elements,
say $B^\ast_j$. 
{\bf (IND-oracle)} A subset $B_j\subseteq U^\ast_j$
is independent for $\MM^\ast_j$ if and only if $(B^\ast\setminus B^\ast_j)\cup B_j$
is independent for $\MM$. 
{\bf (REL-oracle)}
The REL-oracle of $\MM$ lists relevant elements in $U^\ast$
in the non-decreasing order
of weight.
We can use it as the REL-oracle of $\MM^\ast_j$
as long as we deal with $\MM^\ast_1,\MM^\ast_2,\dots,\MM^\ast_\beta$
in this order. 

We have the following theorem immediately. 
\begin{thm} 
  \label{thm:allMCBs2}
  Suppose that a matroid ${\mathcal M}=(U,\MI)$ with a weight function $w:U\to\real$
  is given.
  There is an integer $L$ such that  
  we can generate the $\ell$-th minimum basis of $\MM$ in 
  $\MO((r+\ell)\tauREL+\ell^2 r\tauMEM)$ time for $\ell\le L$
  and the remaining minimum bases in $\MO(r)$ delay
  after $\MO(\tauMB+\taupreREL)$-time preprocessing. 
\end{thm}
Let $\MB^\ast_j$, $j\in[1,\beta]$
denote the set of all bases of
matroid $\MM^\ast_j$. Any integer no less than  
$1+\sum_{j=1}^\beta (|\MB^\ast_j|-1)$ suffices as $L$ in the theorem. 

\section{Application to Binary Matroids with An Exponentially Large Ground Set}
\label{sec:oracles}

In the last section,
we have presented an incremental polynomial algorithm
to enumerate all minimum bases of a given
matroid $\MM=(U,\MI)$ with a weight function $w:U\to\real$.
The computation time to output the $\ell$-th
solution is bounded by a polynomial with respect to
$\ell$, the rank $r$ of $\MM$ and
the running times of the three oracles,
and does not depend on the cardinality $|U|$ of
the ground set $U$.

In this section, we describe polynomial-time implementations
of the oracles for several binary matroids
that are well-known in the literature.
The matroids that we consider include
the binary matroids from the cycle space (\secref{oracle_cycle}),
the path space (\secref{oracle_U}),
and the cut space (\secref{oracle_cut}). 
These matroids arise from a given graph
and have a ground set of an exponential size
with respect to the graph size. 
Using our algorithm,
we can enumerate minimum bases of these matroids
efficiently.
A highlight of this section is
a poly-delay algorithm for enumerating
all relevant cuts in a given edge-weighted undirected graph
in non-decreasing order of weight,
which is used as the REL-oracle for a binary matroid
from the cut space. We explain this algorithm in \secref{oracle_cut}. 

Throughout this section, we assume that
a graph $G$ is connected and undirected. 
We denote its vertex (resp., edge) set by $V(G)$ (resp., $E(G)$),
where we let $n=|V(G)|$ and $m=|E(G)|$.
An undirected edge $\{s,t\}\in E(G)$
is written as $st$ (or $ts$) for simplicity. 
Let $ E(G) = \sets em $.
We represent a subset of edges by the incidence vector
$ \bm u = (u_1,u_2,\dots,u_m) \in \gft^m $, that is,
$ u_i = 1 $ (resp., $ 0 $) if $ e_i $ belongs (resp., does not belong) to the subset.

\subsection{Cycle Space}
\label{sec:oracle_cycle}
In this subsection, we assume an edge-weight to be positive
(i.e., $w:E(G)\to\real_+$). 
A subset $F\subseteq E(G)$ of edges
is called a \textit{cycle} if all vertex degrees in
the spanning subgraph $(V(G),F)$ are even.
The weight of a cycle \( F \) is the summation of the weight of edges in \( F \) (i.e., \( W(F) \coloneqq \sum_{e \in F} w(e) \)).
The vector space over $ \gft $ spanned by the incidence vectors of all cycles in $ G $ is called
the \textit{cycle space of $G$}.
Let $\gamma$ denote the number of connected components in $G$. 
The number $ \crank \coloneqq m-n + \gamma$ is called the \textit{cyclomatic number of $G$},
and the dimension of the cycle space is equal to $\crank$~\cite{mcb}.

We say that cycles are independent if their incidence vectors are linearly independent in $\gft$.  
Let $U$ denote the set of all cycles and
${\mathcal I}$ denote the family of all subsets of independent cycles in $U$.
We consider a matroid $\MM=(U,{\mathcal I})$.  
Obviously, the cardinality $|U|$ of the ground set can be up to
an exponential number with respect to $n$ and $m$.

\paragraph{Rank of the matroid.}
The rank of this matroid is the cyclomatic number $\crank=\MO(m)$.

\paragraph{IND-oracle.} We can decide whether a set $I\subseteq U$ of cycles
is independent or not by using Gaussian elimination on an $m \times |I|$ matrix.
We can assume that $|I|=\MO(m)$ when we run the IND-oracle in Algorithm~\ref{alg:allMCBs}. 
Then $\tauMEM=\MO(m^3)$ holds.

\paragraph{MinB-oracle.} We can construct a minimum basis in this matroid
(i.e., minimum cycle basis)
in \( \MO(m^3n) \) time by Horton's algorithm~\cite{horton}.\footnote{A faster algorithm is proposed in \cite{MCBfasterandsimpler},
  but it does not improve the preprocessing time of the enumeration algorithm in \corref{cycle}. }
We can implement the oracles so
that $\tauMB=\MO(m^3n)$ holds. 

\paragraph{REL-oracle.}
Vismara~\cite{unionofrelevant} introduced a partition of the set \( \relevant \) of relevant cycles,
say \(\relevant=U^\ast_{1}\sqcup U^\ast_{2}\sqcup\dots\sqcup U^\ast_{q}\),
such that all relevant cycles in $U^\ast_{p}$,
$p=1,2,\dots,q$ have the same weight and that $q=\MO(m^2)$,
where there is a unique cycle $x^\ast_{p}\in U^\ast_{p}$
for each $p$ that is called the {\em prototype of $U^\ast_{p}$}. 
Vismara~\cite{unionofrelevant}
presented two algorithms,
one for enumerating all
prototypes in $\MO(\mu m^3)$ time
and the other for enumerating all relevant cycles
in $U^\ast_{p}$ 
in $\MO(n|U^\ast_{p}|)$ time.
To be more precise, 
the latter algorithm attains \( \MO(n) \) delay.

We can scan all relevant cycles in the
non-decreasing order of weight as follows;
first, we enumerate all prototypes by the first algorithm
and sort them in the non-decreasing order
with respect to weight. This can be done in $\MO(\mu m^3+m^2\log m)=\MO(\mu m^3)$ time.
Let 
$w(x^\ast_{1})\le w(x^\ast_{2})\le\dots\le w(x^\ast_{q})$
without loss of generality. 
Next, for each $p=1,2,\dots,q$,
we enumerate all relevant cycles in $U^\ast_{p}$
by the second algorithm.
This can be done in $\MO(n)$ delay. 
To realize the REL-oracle, we use this enumeration algorithm as a co-routine.
We have that $\taupreREL=\MO(\mu m^3)$ and $\tauREL=\MO(n)$. 

\bigskip
By \thmref{allMCBs2}, we have the following corollary immediately. 
\begin{cor}
  \label{cor:cycle}
  Given a connected graph $G$ with
  an edge-weight function $w:E(G)\to\real_+$,
  there is an integer $L$ such that  
  we can generate the $\ell$-th minimum cycle basis $(\ell\le L)$ in 
  $\MO(\ell^2 m^4)$ time,
  where 
  $m=|E(G)|$,
  and the remaining ones in $\MO(m)$ delay
  after $\MO(m^4)$-time preprocessing.
\end{cor}

\subsection{Path Space}
\label{sec:oracle_U}
In this subsection, we assume an edge-weight to be positive
(i.e., $w:E(G)\to\real_+$). 
The concept of the path space is introduced by Hartvigsen~\cite{minimumpathbasis}.
Let \( P \subseteq V(G) \) (\( |P| \geq 2 \)) be a subset of vertices.
For two vertices \( u,v \in P \), we call a \( u,v \)-path a {\em \( P \)-path}.
The weight of a \( P \)-path \( F \) is the summation of the weight of edges in \( F \).
The vector space over \( \gft \) spanned by the incidence vectors of
all cycles and \( P \)-paths is called the {\em \( P \)-space}.\footnote{The space is called the $U$-space in the literature. To avoid confusion, we call it the $P$-space since we use $U$ to represent the ground set of a matroid in this paper.}
A set of cycles and \( P \)-paths is independent
if their incidence vectors are linearly independent in $\gft$. 
The set is 
called a {\em \( P \)-basis} if it is a basis of the \( P \)-space.

Let \( U \) denote the set of all cycles and \( P \)-paths of \( G \)
and \( \independent \) denote the family of all independent sets.
We consider a matroid \( \matroid = (U,\independent) \).
Obviously,
the cardinality \( |U| \) of the ground set can be exponential for \( n \) and \( m \).

\paragraph{Rank of the matroid.}

The rank of this matroid is \( \crank + |P| -1 = m-n+|P| = \MO(m) \)
when $G$ is connected~\cite{dimofuspace}.

\paragraph{IND-oracle.} 

We can decide whether a set \( I \subseteq U \) of cycles and \( P \)-paths is independent or not in the same way as the cycle space.
We can also assume that \( |I| = \MO(m) \) when we run the IND-oracle in Algorithm~\ref{alg:allMCBs}, and hence \( \tauMEM = \MO(m^3) \) holds.

\paragraph{MinB-oracle.} 

We can construct a minimum basis in this matroid in \( \MO(n^2m^3) \) time by using Hartvigsen's algorithm~\cite{minimumpathbasis}.
We can use this algorithm as the MinB-oracle, and \( \tauMB = \MO(n^2m^3) \) holds.

\paragraph{REL-oracle.}

Gleiss et al.~\cite{relevantpaths} generalized Vismara's algorithm~\cite{unionofrelevant} for \( P \)-space. 
The algorithm transforms the \( P \)-space of the input graph \( G \) into the cycle space of an extended graph $G'$
such that $V(G')=V(G) \cup \set{z}$
and $E(G')= E(G)\cup \bigcup_{u \in P} \set{uz}$.
Then the algorithm enumerates all prototypes in \( G' \) by applying Vismara's first algorithm, and convert them into corresponding cycles or \( P \)-paths of \( G \). 
For each converted prototype, the algorithm enumerates all relevant elements in the subset of \( U \)
to which the prototype belongs.
This algorithm is equivalent to enumerating all relevant cycles in \( G' \) by Vismara's algorithms
and converting them into corresponding cycles or \( P \)-paths of \( G \). 
Thus, we can construct the REL-oracle in the same strategy as the cycle space,
and hence \( \taupreREL = \MO((m-n+|P|)m^3)={\MO(m^4)} \) and \( \tauREL = \MO(n) \) hold.

\begin{cor}
  Given a connected graph $G$ with
  an edge-weight function $w:E(G)\to\real_+$ and
  a subset $P\subseteq V(G)$ of vertices,
  there is an integer $L$ such that  
  we can generate the $\ell$-th minimum $P$-bases in 
  \( \MO(m^4\ell^2) \) time for $\ell\le L$
  and the remaining minimum $P$-bases in $\MO(m)$ delay
  after $\MO(m^3n^2)$-time preprocessing, where $n=|V(G)|$ and $m=|E(G)|$. 
\end{cor}

\subsection{Cut Space}
\label{sec:oracle_cut}
In this subsection, we assume an edge-weight to be nonnegative
(i.e., $w:E(G)\to\real_{\ge0}$).
Recall that we assume $G$ to be connected. 
A subset $F\subseteq E(G)$ of edges is a {\em cut} ({\em of $G$})
if
$G-F$ is disconnected. 
Equivalently, 
$F$ is a cut if there is a vertex subset $W\subseteq V(G)$
such that $F=E(W;G)\coloneqq \{st\in E(G)\mid s\in W,\ t\in V(G)\setminus W\}$. 
The weight of a cut \( F \) is the total weight of edges in \( F \).
Let \( \allcut G \) denote the set of all cuts in $G$.
Let $F\in\allcut{G}$ be a cut that satisfies $F=E(W;G)$ for some $W\subseteq V$. 
For two distinct vertices \( s,t\in V \),
we call $F$ an \emph{\( s,t \)-cut}
if \( s \in W \) and \( t \notin  W \).
Let \( \allcut{G}(s,t) \) denote the set of all \( s,t \)-cuts
and \( \ledgecon{s}{t}{G} \) denote the weight of minimum \( s,t \)-cut
of \( G \).
For vertex subsets $S,T\subseteq V$, 
let \( \maxledgecon{S}{T}{G} \coloneqq \max_{a \in S,~b \in T:~a\ne b}
\ledgecon{a}{b}{G} \).
For simplicity, we write \( \maxledgecon{V}{V}{G} \)
as \( \maxledgecong{G} \).


The vector space over $\gft$ spanned by
the incidence vectors of all cuts in $G$
is called the {\em cut space of $G$}. 
We say that cuts are independent if their incidence vectors
are linearly independent in $\gft$. 
We consider a matroid $\MM=(U,\MI)$ such that 
the ground set $U=\allcut{G}$ is the set of all cuts in $G$
and $\MI$ is the family of all independent cuts in $U$. 
The cardinality $|U|$ of the ground set can be up to
an exponential number with respect to $n$ and $m$.

\paragraph{Rank of the matroid.}
It is known that the rank of this matroid
is $n-\gamma$~\cite{circuitbasis}.

\paragraph{IND-oracle.}
Similarly to previous matroids,
we can implement the IND-oracle by using Gaussian elimination
so that $\tauMEM=\MO(m^3)$ holds. 

\paragraph{MinB-oracle.}
A minimum basis in the matroid is called a {\em minimum cut basis}. 
We can construct a minimum cut basis
from a Gomory-Hu tree~\cite{gomoryhutree} of \( G \);
the set of $n-1$ edges in a Gomory-Hu tree
corresponds to a minimum cut basis~\cite{circuitbasis}.
We can construct a Gomory-Hu tree
in \( \MO(n \varphi) \) time,
where \( \varphi \) denotes the
time to compute a minimum \( s,t \)-cut in an edge-weighted undirected
graph \( G \)
for arbitrarily chosen vertices \( s,t \in V(G) \)~\cite{gomoryhutree};
e.g., $\varphi=\MO(mn)$~\cite{KRT.1994,maxflow}.
Then we can implement the MinB-oracle so that $\tauMB=\MO(n\varphi)$. 

\paragraph{REL-oracle.}
We describe how to list all relevant cuts in $G$
in non-decreasing order of weight.
This is different from the problems addressed in Yeh et al.~\cite{enumcut}. 
They studied the problem of enumerating
all cuts or $s,t$-cuts for specified vertices $s,t$
in non-decreasing order of weight
and did not take the notion of relevant cuts into account.

Recall that a cut is relevant if it belongs to
at least one minimum cut basis. 
Let \( {\mathcal C}^\ast_G\subseteq\allcut{G} \)
denote the set of all relevant cuts of \( G \).
A necessary and sufficient condition
that a cut is relevant is shown in the following lemma.

\begin{lem}[\cite{relevantcut}]\label{lem:relevantcut}
  For an edge-weighted graph $G$,
  a cut \( C \in \allcut G \) is relevant if and only if there exists a pair of distinct vertices \( s, t \in V(G) \) such that \( C \) is a minimum \( s,t \)-cut.
\end{lem}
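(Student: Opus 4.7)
The plan is to exploit the tight correspondence between minimum cut bases of $\MM$ and Gomory-Hu trees of $G$ that was recalled in the MinB-oracle discussion (and proved in \cite{circuitbasis}). Concretely, a family of $n-1$ cuts forms a minimum cut basis of $\MM$ if and only if each cut arises from a tree edge of some Gomory-Hu tree $T$ of $G$: removing an edge $uv \in E(T)$ splits $V(G)$ into two parts whose induced cut equals, by the defining property of the Gomory-Hu tree, a minimum $u,v$-cut of $G$ of weight $\ledgecon{u}{v}{G}$.

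For the \emph{only-if} direction I would argue as follows. Suppose $C$ is relevant, so $C$ lies in some minimum cut basis $B^\ast$. Translating $B^\ast$ via the correspondence above into a Gomory-Hu tree $T$, the cut $C$ must equal the cut induced by some edge $uv \in E(T)$. The Gomory-Hu property then gives that $C$ is a minimum $u,v$-cut of $G$, which is the desired conclusion with $s=u$ and $t=v$.

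For the \emph{if} direction, suppose $C$ is a minimum $s,t$-cut of $G$ for some distinct $s,t \in V(G)$, and write $C = E(W;G)$ with $s \in W$ and $t \in V(G) \setminus W$. I would construct a Gomory-Hu tree of $G$ whose very first edge encodes exactly $C$, following Gomory and Hu's classical recursive procedure but pinning the initial choice to $(s,t)$ and $C$. That is, use $C$ as the initial minimum $s,t$-cut, then recursively build Gomory-Hu trees of the two contracted graphs $\contract{G}{V(G)\setminus W}$ (on the vertex set $W \cup \{x_t\}$, where $x_t$ is the contracted copy of $V(G) \setminus W$) and $\contract{G}{W}$ (on $\{x_s\} \cup (V(G) \setminus W)$), and join these two subtrees by a new edge labelled with $w(C)$. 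The correctness of this construction, namely that the resulting tree is a valid Gomory-Hu tree of $G$ whose distinguished edge realizes $C$, is exactly the content of the Gomory-Hu theorem and can be invoked directly. Since the basis induced by this Gomory-Hu tree then contains $C$, we conclude that $C \in {\mathcal C}^\ast_G$.

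The main obstacle is the if direction: checking that prescribing the first minimum cut in the Gomory-Hu construction still yields a valid Gomory-Hu tree, and in particular that the cut carried by the first tree edge is precisely the original $C$ (not merely \emph{some} minimum $s,t$-cut of equal weight). This is however standard in the Gomory-Hu analysis, since the recursion operates entirely on the contracted graphs on either side of $C$ and never alters the distinguished edge. I would therefore appeal to the classical Gomory-Hu theorem rather than redo its proof.
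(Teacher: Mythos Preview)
The paper does not supply a proof of this lemma (it is cited from \cite{relevantcut}), so I comment only on the correctness of your argument. Your \emph{if} direction is fine: the Gomory--Hu recursion indeed allows the first split to be any prescribed minimum $s,t$-cut $C$, and the resulting tree gives a minimum cut basis containing $C$. The \emph{only-if} direction, however, has a genuine gap. You assert that every minimum cut basis arises from a Gomory--Hu tree and then ``translate $B^\ast$ into a Gomory--Hu tree $T$''; this is false. On the $4$-cycle with unit edge weights the family $\{E(\{1\};G),\,E(\{1,2\};G),\,E(\{1,4\};G)\}$ is a minimum cut basis (three independent cuts of weight $2$), yet no choice of sides makes the bipartitions $\{1\},\{1,2\},\{1,4\}$ laminar, so no tree on $V(G)$ realises these three cuts as its fundamental cuts. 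The result cited from \cite{circuitbasis} only gives ``Gomory--Hu tree $\Rightarrow$ minimum cut basis'', not the converse you are invoking.

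A correct only-if argument works through a \emph{fixed} Gomory--Hu tree $T$, not one tied to the basis containing $C$. With $C=E(W;G)$ and $F\subseteq E(T)$ the set of tree edges crossing $W$, one checks that $C=\sum_{e\in F}C_e$ over $\gft$, so $\{C\}\cup\{C_e:e\in F\}$ is the fundamental circuit of $C$ with respect to the Gomory--Hu basis. Since $C$ is relevant it cannot be the strict maximum of this circuit (the standard matroid red rule), hence some $e=uv\in F$ satisfies $w(C_e)\ge w(C)$; as $e$ crosses $W$, the cut $C$ separates $u$ from $v$, and $w(C)\le w(C_e)=\ledgecon{u}{v}{G}$ forces $C$ to be a minimum $u,v$-cut.
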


Let us overview our algorithm. 
We build graphs $G_0\coloneqq G,G_1,\dots,G_\tau$ for some
integer $\tau$ iteratively,
where $G_{i+1}$ is obtained by contracting two certain vertices $s_i,t_i$ in $G_i$.
We will show that 
relevant cuts of $G$ appear as minimum $s_i,t_i$-cuts of $G_i$
and that the weights of minimum cuts are in non-increasing order
over $G_0,G_1,\dots,G_\tau$.
We can achieve the goal by enumerating minimum $s_i,t_i$-cuts of $G_i$
for $i=\tau,\tau-1,\dots,0$, where we employ 
Yeh et al.'s algorithm~\cite{enumcut}
to enumerate minimum $s_i,t_i$-cuts of $G_i$. 

For $X\subseteq V(G)$, 
let $\contract{G}{X}$ 
denote
the multigraph that is obtained
by contracting all vertices in \( X \) into a single vertex,
where we denote by the lowercase (i.e., $x$) the new vertex.
Multiedges may appear, and we do not take self-loops into account.  
Formally, $V(G/X)$ and $E(G/X)$ are defined as follows;
\begin{align*}
    &V(G/X) = (V(G) \setminus X) \cup \set{x},\\
    &E(G/X) = \set{ uv \in E(G) \mid u,v \notin X} \cup \left( \bigcup_{uv \in E(G):~ u \notin X,~ v \in X} \set{ux} \right). 
\end{align*}
For $v\in V(G)$, 
we regard \( \contract{G}{\{v\}}=G \).
We inherit the edge-weight function on $G/X$ from $G$
and hence regard $w$ as the edge-weight function on $G/X$. 

Let $\MX=\{X_1,X_2,\dots,X_k\}$ be a partition of $V$.
We denote by \( \contract{G}{\MX} \)
the multigraph that is obtained by contracting all vertices in $X_i\in\MX$
into a single vertex $x_i$ for each $i=1,2,\dots,k$, respectively. 
Let $G'=\contract{G}{\MX}$. 
For $1\le i<j\le k$,
every $x_i,x_j$-cut of $G'$ is an $s,t$-cut of $G$
for all $s\in X_i$ and $t\in X_j$
and hence we have $\lambda(x_i,x_j;G')\ge\maxledgecon{X_i}{X_j}{G}$.

Let $s,t\in V(G)$.
The set $\allcut{G}$ of all cuts is partitioned 
into two disjoint subsets,
one for the set of $s,t$-cuts
and the other for the set of any other cuts.
The latter subset is equal to the set $\allcut{G/\{s,t\}}$
of all cuts in the contracted graph $G/\{s,t\}$, as shown in the following lemma. 

\begin{lem}\label{lem:partitionofcuts}
    For a graph \( G \), let \( s,t \in V(G) \) be two distinct vertices.
    Then it holds that \( \allcut{G} = \allcut{G}(s,t) \sqcup \allcut{\contract{G}{\set{s,t}}} \).
\end{lem}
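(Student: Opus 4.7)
The plan is to prove the set equality by classifying cuts of $G$ according to whether $s$ and $t$ lie on the same or opposite sides of the bipartition $(W, V(G)\setminus W)$ inducing the cut, and then using the natural correspondence between subsets of $V(\contract{G}{\set{s,t}})$ and subsets of $V(G)$ that place $s$ and $t$ on the same side.

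First I would establish the disjointness $\allcut{G}(s,t)\cap\allcut{\contract{G}{\set{s,t}}}=\emptyset$. Every $F\in\allcut{G}(s,t)$ can be written as $F=E(W;G)$ with $s\in W$ and $t\notin W$. On the other hand, every $F'\in\allcut{\contract{G}{\set{s,t}}}$ is $E(W';\contract{G}{\set{s,t}})$ for some $W'\subseteq V(\contract{G}{\set{s,t}})$; pulling $W'$ back to $V(G)$ by replacing the contracted vertex $x$ by $\{s,t\}$ if $x\in W'$ (and leaving it untouched otherwise), we obtain a subset of $V(G)$ on which $s$ and $t$ lie together. Since the same cut cannot be simultaneously described by bipartitions that separate $s,t$ and bipartitions that keep them together, the two sets are disjoint.

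Next I would show $\allcut{G}(s,t)\sqcup\allcut{\contract{G}{\set{s,t}}}\subseteq\allcut{G}$. The first summand is contained in $\allcut{G}$ by definition. For the second, given $W'\subseteq V(\contract{G}{\set{s,t}})$, define $W\subseteq V(G)$ by $W=(W'\setminus\{x\})\cup\{s,t\}$ if $x\in W'$ and $W=W'$ otherwise. By the definition of contraction, the edges of $\contract{G}{\set{s,t}}$ are precisely the edges of $G$ except any self-loop that would arise from $st\in E(G)$, with endpoints in $X=\{s,t\}$ redirected to $x$. Consequently $E(W';\contract{G}{\set{s,t}})=E(W;G)$, because the only edge that could be missed is $st$, which is never cut by $W$ since $s,t$ lie on the same side of $W$.

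For the reverse inclusion, I would take an arbitrary $F\in\allcut{G}$, write $F=E(W;G)$, and split on $|W\cap\{s,t\}|$. If $|W\cap\{s,t\}|=1$, then $F$ is by definition an $s,t$-cut. Otherwise $s,t$ lie on the same side of the bipartition, and the inverse of the correspondence above produces a subset $W'\subseteq V(\contract{G}{\set{s,t}})$ whose cut in $\contract{G}{\set{s,t}}$ equals $F$. This will complete the three inclusions and hence the lemma. The only subtlety — not really an obstacle but worth stating carefully — is how to handle the edge $st$ when it is present in $G$: it becomes a self-loop under contraction and is excluded from $E(\contract{G}{\set{s,t}})$, but this is harmless because $st$ never appears in any cut with $s$ and $t$ on the same side.
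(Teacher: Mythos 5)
Your proposal is correct and takes essentially the same approach as the paper: both classify cuts of $G$ by whether a defining bipartition $(W, V(G)\setminus W)$ separates $s$ and $t$, use the natural edge-set identification $E(W;G)=E(W;\contract{G}{\set{s,t}})$ when $s,t$ lie on the same side, and observe disjointness from the incompatibility of the two types of bipartition. You spell out the handling of a potential $st$ edge and the inverse correspondence a bit more explicitly than the paper, but the argument is the same.
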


\begin{proof}
  Any $s,t$-cut of $G$ is not a cut in \( \contract{G}{\set{s,t}} \),
  and thus it holds that  \( \allcut{G}(s,t) \cap \allcut{\contract{G}{\set{s,t}}} = \emptyset \).
  Let \( C\in \allcut{G} \) be a cut of \( G \) such that $C=E(W;G)$
  for $W\subseteq V(G)$. 
  If \( C \) is an \( s,t \)-cut, then \( C \in \allcut{G}(s,t) \) holds.
  Otherwise, we assume that \( s,t \in V(G) \setminus W \) without loss of generality since \( E(W;G) = E(V(G)\setminus W;G) \).
  Then \( E(W;G) = E(W;\contract{G}{\set{s,t}}) \) holds, and hence \( C \in \allcut{\contract{G}{\set{s,t}}} \).
  We have seen that \( \allcut{G} \subseteq \allcut{G}(s,t) \sqcup \allcut{\contract{G}{\set{s,t}}} \) holds.  
  It is obvious that
  \( \allcut{G} \supseteq \allcut{G}(s,t) \sqcup \allcut{\contract{G}{\set{s,t}}} \) holds.
\end{proof}


For $s,t\in V(G)$, let ${\mathcal C}^\ast_G(s,t)$ denote the set of all minimum $s,t$-cuts.
By \lemref{relevantcut}, any cut in ${\mathcal C}^\ast_G(s,t)$ is relevant (i.e., ${\mathcal C}^\ast_G(s,t)\subseteq{\mathcal C}^\ast_G$). 
For a partition $\MX=\{X_1,X_2,\dots,X_k\}$ of $V(G)$,
we define
\[
\Lambda(\MX)\coloneqq \{(i,j)\mid 1\le i<j\le k,\ \ledgecon{x_i}{x_j}{G/\MX}= \maxledgecon{X_i}{X_j}{G}\}.
\]

\begin{lem}\label{lem:enoughtoberelevant}
  For an edge-weighted graph \( G \),
  let $\MX=\{X_1,X_2,\dots,X_k\}$ be a partition of $V(G)$.
  If $\Lambda(\MX)=\emptyset$, then
  no cut of $G/\MX$ is relevant for $G$.
  Otherwise, for any $(i,j)\in\Lambda(\MX)$,
  all minimum $x_i,x_j$-cuts in $G/\MX$
  are relevant for $G$. 
\end{lem}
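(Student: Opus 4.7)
The plan is to reduce everything to the characterization in \lemref{relevantcut}: a cut of $G$ is relevant if and only if it is a minimum $s,t$-cut of $G$ for some pair of distinct vertices $s,t\in V(G)$. I will also use the inequality $\ledgecon{x_i}{x_j}{G/\MX}\ge \maxledgecon{X_i}{X_j}{G}$ established in the paragraph preceding the lemma, together with the evident fact that any cut of $G/\MX$ is also a cut of $G$ (since contraction only identifies vertices), and that if $C$ is a cut of $G/\MX$ separating $x_i$ from $x_j$, then $C$ is an $s,t$-cut of $G$ for every $s\in X_i$ and $t\in X_j$ (these being necessarily distinct since $X_i\cap X_j=\emptyset$).

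For the first claim, I assume $\Lambda(\MX)=\emptyset$ and, toward a contradiction, pick a cut $C$ of $G/\MX$ that is relevant for $G$. By \lemref{relevantcut}, there exist distinct $s,t\in V(G)$ with $w(C)=\ledgecon{s}{t}{G}$. Because $C$ is a cut of $G/\MX$, the endpoints $s,t$ must lie in different parts, say $s\in X_i$ and $t\in X_j$ with $i<j$. Then $C$ is an $x_i,x_j$-cut of $G/\MX$, so
\[
\ledgecon{s}{t}{G}\;=\;w(C)\;\ge\;\ledgecon{x_i}{x_j}{G/\MX}\;\ge\;\maxledgecon{X_i}{X_j}{G}\;\ge\;\ledgecon{s}{t}{G},
\]
forcing equality throughout, in particular $\ledgecon{x_i}{x_j}{G/\MX}=\maxledgecon{X_i}{X_j}{G}$. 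This places $(i,j)$ in $\Lambda(\MX)$, contradicting $\Lambda(\MX)=\emptyset$.

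For the second claim, I take $(i,j)\in\Lambda(\MX)$ and a minimum $x_i,x_j$-cut $C$ of $G/\MX$, and I produce a pair $s,t$ witnessing relevance via \lemref{relevantcut}. Choose $s\in X_i$ and $t\in X_j$ attaining $\ledgecon{s}{t}{G}=\maxledgecon{X_i}{X_j}{G}$ (these exist because $X_i$ and $X_j$ are disjoint and nonempty). Then $C$ is an $s,t$-cut of $G$, and since $(i,j)\in\Lambda(\MX)$,
\[
w(C)=\ledgecon{x_i}{x_j}{G/\MX}=\maxledgecon{X_i}{X_j}{G}=\ledgecon{s}{t}{G},
\]
so $C$ is a minimum $s,t$-cut of $G$. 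By \lemref{relevantcut}, $C$ is relevant for $G$.

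The only real subtlety is making sure the identification of a cut of $G/\MX$ with a cut of $G$ is handled cleanly, and in particular that when one endpoint of a witness pair lies in a contracted block the cut of $G/\MX$ really does separate both endpoints in $G$; this is immediate once one observes that any $s,t\in V(G)$ separated by a cut of $G/\MX$ must lie in different blocks of $\MX$. Beyond that bookkeeping, the argument is a direct squeeze using the inequality $\ledgecon{x_i}{x_j}{G/\MX}\ge\maxledgecon{X_i}{X_j}{G}$ and \lemref{relevantcut}.
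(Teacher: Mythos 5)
Your proof is correct and follows essentially the same approach as the paper: both reduce relevance to \lemref{relevantcut} and squeeze the inequality chain $w(C)\ge\ledgecon{x_i}{x_j}{G/\MX}\ge\maxledgecon{X_i}{X_j}{G}\ge\ledgecon{s}{t}{G}$ to equalities for the first claim, then pick witnesses $s\in X_i$, $t\in X_j$ attaining $\maxledgecon{X_i}{X_j}{G}$ for the second. The only cosmetic difference is that you argue the first claim by contradiction while the paper phrases it as a contraposition.
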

\begin{proof}
  For the former, we show that the contraposition is true.
  Let $C\subseteq E(G/\MX)$ be a cut of $G/\MX$
  that is relevant for $G$. 
  By Lemma~\ref{lem:relevantcut},
  there are $s,t\in V(G)$ such that $C$ is a minimum $s,t$-cut in $G$. 
  Let $X_i$ and $X_j$ denote the subsets
  that contain $s$ and $t$, respectively.
  We see $i\ne j$ since $C$ separates $x_i$ and $x_j$ in $G/\MX$.
  Then we have
  \[
  w(C)\ge\lambda(x_i,x_j;G/\MX)\ge\lambda_{\max}(X_i,X_j;G)\ge\lambda(s,t;G),
  \]
  where all the inequalities hold by equalities by $w(C)=\lambda(s,t;G)$,
  showing that $(i,j)\in\Lambda(\MX)$. 
  
  For the latter,
  there exist two vertices \( s\in X_i \) and \( t \in X_j \) such that \( \ledgecon{x_i}{x_j}{G/\MX} = \maxledgecon{X_i}{X_j}{G}=\ledgecon{s}{t}{G} \).
  Any minimum \( x_i,x_j \)-cut $C$ of \( G/\MX \) is
  an \( s,t \)-cut of \( G \) that satisfies $w(C)=\ledgecon{x_i}{x_j}{G/\MX} =\ledgecon{s}{t}{G}$,
  and thus is relevant for \( G \) by Lemma~\ref{lem:relevantcut}.
\end{proof}

Let us call a cut $C$ a {\em solution} if $C\in\MC^\ast_G$. 
We outline our algorithm to enumerate all solutions in non-decreasing order of weight.
Let $V(G)=\{v_1,v_2,\dots,v_n\}$.
Initializing 
$\MX\coloneqq \{\{v_1\},\{v_2\},\dots,\{v_n\}\}$,
we execute the following: 
\begin{description}
\item[1.] If $\Lambda(\MX)\ne\emptyset$,
  then let $(i,j)\coloneqq \arg\max_{(x,y)\in\Lambda(\MX)}\{\lambda(x,y;G/\MX)\}$.
  All minimum $x_i,x_j$-cuts in $G/\MX$ are solutions by
  the latter half of \lemref{enoughtoberelevant}.
  Before enumerating them,
  we make a recursive call for the partition that is obtained
  by merging $X_i$ and $X_j$ (i.e., $(\MX\setminus\{X_i,X_j\})\cup\{X_i\cup X_j\}$)
  to enumerate solutions that have no more weights than minimum $x_i,x_j$-cuts. 
  After the recursive call is complete, we enumerate
  all minimum \( x_i,x_j \)-cuts in \( G/\MX\) as solutions, using
  the algorithm in \cite{enumcut}. 
\item[2.] Otherwise (i.e., if $\Lambda(\MX)=\emptyset$),
  we do nothing since no cut of $G/\MX$ is a solution by the
  former half of \lemref{enoughtoberelevant}. 
\end{description}
Although all outputs by the above procedure are solutions, 
it is not guaranteed that they are all solutions.
We utilize the following two lemmas to prove it. 

\begin{lem}
  \label{lem:iff_Lambda}
  For an edge-weighted graph \( G \),
  let $\MX=\{X_1,X_2,\dots,X_k\}$ be a partition of $V(G)$.
  For a cut $C$ of $G/\MX$,
  it holds that $C\in\MC^\ast_G\cap\MC^\ast_{G/\MX}$
  if and only if there is $(i,j)\in\Lambda(\MX)$ such that
  $C$ is a minimum $x_i,x_j$-cut in $G/\MX$. 
\end{lem}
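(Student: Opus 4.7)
The plan is to prove the two directions of the biconditional separately, relying on the two already-established lemmas about relevant cuts. The backward direction is essentially a direct invocation of those lemmas; the forward direction requires a sandwiching argument that extracts both conclusions ($(i,j)\in\Lambda(\MX)$ and the minimality of $C$ in $G/\MX$) from a single chain of inequalities.

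For the ``if'' direction, I would assume $(i,j)\in\Lambda(\MX)$ and that $C$ is a minimum $x_i,x_j$-cut in $G/\MX$. Then $C\in\MC^\ast_{G/\MX}$ follows immediately from \lemref{relevantcut} applied to $G/\MX$, and $C\in\MC^\ast_G$ follows from the latter half of \lemref{enoughtoberelevant}. Combining these gives $C\in\MC^\ast_G\cap\MC^\ast_{G/\MX}$.

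For the ``only if'' direction, assume $C\in\MC^\ast_G\cap\MC^\ast_{G/\MX}$. Since $C\in\MC^\ast_G$, \lemref{relevantcut} gives distinct $s,t\in V(G)$ with $w(C)=\ledgecon{s}{t}{G}$. Because $C$ is a cut of $G/\MX$, the vertices $s$ and $t$ must lie in different parts of $\MX$; write $s\in X_i$, $t\in X_j$ with $i\neq j$. Then $C$ is an $x_i,x_j$-cut of $G/\MX$, so $w(C)\ge \ledgecon{x_i}{x_j}{G/\MX}$. On the other hand, the remark preceding \lemref{partitionofcuts} gives $\ledgecon{x_i}{x_j}{G/\MX}\ge \maxledgecon{X_i}{X_j}{G}\ge \ledgecon{s}{t}{G}=w(C)$. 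So
\[
w(C)\;\ge\;\ledgecon{x_i}{x_j}{G/\MX}\;\ge\;\maxledgecon{X_i}{X_j}{G}\;\ge\;\ledgecon{s}{t}{G}\;=\;w(C),
\]
forcing all inequalities to be equalities. From $\ledgecon{x_i}{x_j}{G/\MX}=\maxledgecon{X_i}{X_j}{G}$ we get $(i,j)\in\Lambda(\MX)$, and from $w(C)=\ledgecon{x_i}{x_j}{G/\MX}$ we get that $C$ is a minimum $x_i,x_j$-cut in $G/\MX$.

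The only mildly subtle point is the argument that $s$ and $t$ lie in different parts of $\MX$: this uses the hypothesis that $C$ is a cut of $G/\MX$ (if $s,t$ were in the same part, $C$ would fail to separate them after contraction and could not be an $s,t$-cut of $G$ coinciding with a cut of $G/\MX$). After that, the entire argument is the pincer inequality above. Notably, the hypothesis $C\in\MC^\ast_{G/\MX}$ is not explicitly needed in the forward direction beyond ensuring $C$ is actually a cut of $G/\MX$; it becomes a consequence of the pincer equalities. The main obstacle is simply being careful about which cuts live in which graph and about the direction of each inequality — the rest is routine.
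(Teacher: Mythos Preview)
Your proof is correct. The ``if'' direction matches the paper's exactly. For the ``only if'' direction, however, the paper proceeds by contraposition: it assumes that for \emph{every} pair $(i,j)$ with $C$ an $x_i,x_j$-cut in $G/\MX$, either $(i,j)\notin\Lambda(\MX)$ or $C$ is not a minimum $x_i,x_j$-cut, and then runs through a short case analysis to conclude $w(C)>\maxledgecon{X_i}{X_j}{G}$ in every case, whence $C\notin\MC^\ast_G$. Your direct argument instead starts from the witnessing pair $s,t$ supplied by \lemref{relevantcut} and squeezes everything out of the single chain
\[
w(C)\ge \ledgecon{x_i}{x_j}{G/\MX}\ge \maxledgecon{X_i}{X_j}{G}\ge \ledgecon{s}{t}{G}=w(C).
\]
This is essentially the same sandwich the paper already used in proving the first half of \lemref{enoughtoberelevant}, and it is cleaner here: it avoids the case split and simultaneously yields both $(i,j)\in\Lambda(\MX)$ and the minimality of $C$. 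Your observation that the hypothesis $C\in\MC^\ast_{G/\MX}$ is redundant in this direction is also correct (the paper's contrapositive likewise only refutes membership in $\MC^\ast_G$), and worth noting.
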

\begin{proof}
  For the sufficiency, $C$ is a minimum $x_i,x_j$-cut, and by \lemref{relevantcut},
  it is relevant for $G/\MX$. By \lemref{enoughtoberelevant},
  it is also relevant for $G$ since $(i,j)\in\Lambda(\MX)$. 

  We show the necessity by contraposition.
  When $\Lambda(\MX)=\emptyset$,
  $C$ is not relevant for $G$
  (i.e., $C\notin{\mathcal C}^\ast_G$) by \lemref{enoughtoberelevant}.
  We consider the case of $\Lambda(\MX)\ne\emptyset$. 
  Let $x_i,x_j\in V(G/\MX)$ be any vertices such that 
  $C$ is an $x_i,x_j$-cut in $G/\MX$.   
  \begin{itemize}
  \item For $(i,j)\in\Lambda(\MX)$,
    $C$ is not a minimum $x_i,x_j$-cut
    by the assumption, and
    $w(C)>\lambda(x_i,x_j;G/\MX)\ge\lambda_{\max}(X_i,X_j;G)$ holds.
  \item For $(i,j)\notin\Lambda(\MX)$,
    if $C$ is a minimum $x_i,x_j$-cut, then
    $w(C)=\lambda(x_i,x_j;G/\MX)>\lambda_{\max}(X_{i},X_{j};G)$ holds,
    where the last inequality holds by $(i,j)\notin\Lambda(\MX)$.
    Otherwise (i.e., if $C$ is not a minimum $x_i,x_j$-cut),
    we have $w(C)>\lambda(x_i,x_j;G/\MX)\ge\lambda_{\max}(X_i,X_j;G)$.
  \end{itemize}
  We observe that,
  for all $x_i,x_j\in V(G/\MX)$ such that $C$ is an $x_i,x_j$-cut
  and all $(s,t)\in X_i\times X_j$, 
  $C$ is not a minimum $s,t$-cut in $G$
  since $w(C)>\lambda_{\max}(X_i,X_j;G)\ge\lambda(s,t;G)$ holds.
  We have $C\notin\MC^\ast_G$ by \lemref{relevantcut}. 
\end{proof}

\begin{lem}\label{lem:preservecut}
    For an edge-weighted graph \( G=(V,E) \), 
    let $\MX=\{X_1,X_2,\dots,X_k\}$ be a partition of $V(G)$.
    Let $(i,j)=(x,y)$ denote a maximizer of $\lambda(x,y;G/\MX)$
    among all $(x,y)\in\Lambda(\MX)$. 
    Then it holds that
    $\MC^\ast_G\cap\MC^\ast_{G/\MX}\subseteq \MC^\ast_{G/\MX}(x_i,x_j)\sqcup\MC^\ast_{(G/\MX)/\{x_i,x_j\}}$. 
\end{lem}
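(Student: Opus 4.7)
Let $H \coloneqq G/\MX$ for brevity. Take any $C \in \MC^\ast_G \cap \MC^\ast_H$. Applying \lemref{partitionofcuts} to $H$ with the pair $(x_i, x_j)$, we obtain $\allcut{H} = \allcut{H}(x_i,x_j) \sqcup \allcut{H/\{x_i,x_j\}}$, so $C$ falls into exactly one of these two classes. The plan is to handle the two cases separately, showing in the first that $C$ is a minimum $x_i,x_j$-cut of $H$, and in the second that $C$ is relevant for $H/\{x_i,x_j\}$.

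\emph{Case 1: $C \in \allcut{H}(x_i,x_j)$.} The trivial direction gives $w(C) \ge \lambda(x_i,x_j;H)$, and since $(i,j) \in \Lambda(\MX)$, the right-hand side equals $\lambda_{\max}(X_i,X_j;G)$. For the matching upper bound, I would apply \lemref{relevantcut} to the assumption $C \in \MC^\ast_G$ to obtain distinct $s,t \in V(G)$ with $w(C) = \lambda(s,t;G)$; let $s \in X_{\ell_s}$ and $t \in X_{\ell_t}$, which must satisfy $\ell_s \ne \ell_t$ because $C$ separates all of $X_i$ from all of $X_j$ (and thus at least some pair of blocks). The chain $\lambda(s,t;G) \le \lambda_{\max}(X_{\ell_s},X_{\ell_t};G) \le \lambda(x_{\ell_s},x_{\ell_t};H) \le w(C) = \lambda(s,t;G)$ (using that $C$ is in particular an $x_{\ell_s},x_{\ell_t}$-cut of $H$) forces equality throughout, placing $(\ell_s,\ell_t) \in \Lambda(\MX)$. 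By maximality of $(i,j)$ over $\Lambda(\MX)$, $w(C) = \lambda(x_{\ell_s},x_{\ell_t};H) \le \lambda(x_i,x_j;H)$, closing the sandwich and yielding $C \in \MC^\ast_H(x_i,x_j)$.

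\emph{Case 2: $C \in \allcut{H/\{x_i,x_j\}}$.} Here $C$ does not separate $x_i$ from $x_j$ in $H$. Using $C \in \MC^\ast_H$, apply \lemref{relevantcut} in $H$ to obtain distinct $u,v \in V(H)$ with $C$ a minimum $u,v$-cut in $H$. I would then promote $(u,v)$ to a pair of vertices in $H/\{x_i,x_j\}$: if $\{u,v\} \cap \{x_i,x_j\} = \emptyset$, keep them; if exactly one of them equals $x_i$ or $x_j$, replace it by the merged vertex $z$ representing $\{x_i,x_j\}$; the remaining case $\{u,v\} = \{x_i,x_j\}$ is impossible because $C$ does not separate $x_i$ and $x_j$. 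In every allowed subcase, any cut of $H/\{x_i,x_j\}$ separating the chosen pair lifts to a cut of $H$ separating $u$ from $v$ without separating $x_i$ from $x_j$, so its weight is at least $\lambda(u,v;H) = w(C)$; meanwhile $C$ itself realizes such a cut, giving minimality in $H/\{x_i,x_j\}$ and hence $C \in \MC^\ast_{H/\{x_i,x_j\}}$ by \lemref{relevantcut}.

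\emph{Main obstacle.} The substantive work is in Case 1: one must exploit that $(i,j)$ maximizes $\lambda(\cdot,\cdot;H)$ over $\Lambda(\MX)$ rather than over all pairs, and the key insight is that the blocks $X_{\ell_s}, X_{\ell_t}$ hosting the witnesses for $C \in \MC^\ast_G$ are themselves forced into $\Lambda(\MX)$ by the equality chain above. Case 2 is more bookkeeping, but care is needed in treating the merged vertex correctly so that the min-cut value is preserved under the contraction.
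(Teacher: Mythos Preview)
Your argument is correct and matches the paper's proof in substance: the paper also splits on whether $C$ separates $x_i$ from $x_j$ in $G/\MX$ (after first invoking \lemref{iff_Lambda} to get a pair $(i',j')\in\Lambda(\MX)$ for which $C$ is a minimum $x_{i'},x_{j'}$-cut, which is precisely the equality chain you derive inline in Case~1), and in Case~2 it shows $C\in\MC^\ast_{(G/\MX)/\{x_i,x_j\}}$ by the same lifting argument, phrased by contradiction rather than directly. One small wording fix: your stated reason for $\ell_s\ne\ell_t$ (``$C$ separates all of $X_i$ from all of $X_j$'') is beside the point; the correct reason is that $C$, being a cut of $G/\MX$, keeps every block $X_\ell$ on a single side in $G$, so the $s,t$ that $C$ separates in $G$ must lie in different blocks.
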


\begin{proof}
  We see that ${\mathcal C}^\ast_{G/\MX}(x_i,x_j)\subseteq\MC_{G/\MX}(x_i,x_j)$
  and ${\mathcal C}^\ast_{(G/\MX)/\{x_i,x_j\}}\subseteq\MC_{(G/\MX)/\{x_i,x_j\}}$
  are disjoint since $\MC_{G/\MX}(x_i,x_j)\cap\MC_{(G/\MX)/\{x_i,x_j\}}=\emptyset$
  by \lemref{partitionofcuts}.
  Let $C\in\MC^\ast_G\cap\MC^\ast_{G/\MX}$. 
  By \lemref{iff_Lambda}, there is $(i',j')\in\Lambda(\MX)$ such
  that $C$ is a minimum $x_{i'},x_{j'}$-cut.
  If $C$ is a minimum $x_i,x_j$-cut,
  then $C\in\MC^\ast_{G/\MX}(x_i,x_j)$ holds.
  Otherwise, $w(C)=\lambda(x_{i'},x_{j'},G/\MX)\le\lambda(x_i,x_j;G/\MX)$
  holds by the definition of $(i,j)$,
  and hence $C$ is not an $x_i,x_j$-cut. 
  Let $W\subseteq V(G/\MX)$
  denote the vertex subset such that $C=E(W;G/\MX)$.
  We assume $x_i,x_j\in V(G/\MX)\setminus W$ without loss of generality
  and let $x$ denote the vertex in $(G/\MX)/\{x_i,x_j\}$
  that is obtained by contracting $x_i$ and $x_j$. 

  Let $G'\coloneqq (G/\MX)/\{x_i,x_j\}$ for convenience. 
  By Lemma~\ref{lem:partitionofcuts},
  \( C \) is a cut of \(G'\). 
  For contradiction,
  suppose that \( C \) is not relevant for \(G'\)
  (i.e., $C\notin {\mathcal C}^\ast_{G'}$).
  By Lemma~\ref{lem:relevantcut},
  for any two vertices $u \in W$,
  and $v\in V(G')\setminus W$,  
  there is a \( u,v \)-cut $C'$ of \(G'\)
  such that $w(C')<w(C)$. 
  If \( v = x \), then $C'$ is a \( u,x_i \)-cut as well as
  a \( u,x_j \)-cut of \(G/\MX\),
  and otherwise, $C'$ is a \( u,v \)-cut of \( G/\MX \).
  Then $C'$ is a $(u',v')$-cut of $G/\MX$
  for any two vertices \( u' \in W, v' \in V(G/\MX) \setminus W \),
  and $C$ would not be relevant for \( G/\MX \) by Lemma~\ref{lem:relevantcut},
  a contradiction. 
\end{proof}

We show the enumeration algorithm in Algorithm~\ref{alg:enumallrelevantcut}.
In {\bf 1} of the above outline, to compute $\Lambda(\MX)$,
we identify whether $\lambda(x_i,x_j;G/\MX)=\lambda_{\max}(X_i,X_j;G)$ holds
for all $x_i,x_j\in V(G/\MX)$. 
During the algorithm, we store
$\lambda_{\max}(X_i,X_j;G)$ by using an array $W[X_i,X_j]$,
where we assume that the key is unordered;
i.e., $W[X_i,X_j]=W[X_j,X_i]$.
First, we initialize $W[\{u\},\{v\}]\coloneqq \lambda(u,v;G)$ for any two $u,v\in V(G)$.
When we contract $x_i,x_j\in G/\MX$ into a new vertex,
for all $x_r\in V(G/\MX)\setminus\{x_i,x_j\}$,
we set $\max\{W[X_r,X_i],W[X_r,X_j]\}$ to the new entry $W[X_r,X_i\cup X_j]$
since $\lambda_{\max}(X_r,X_i\cup X_j;G)=\max\{\lambda_{\max}(X_r,X_i;G),\lambda_{\max}(X_r,X_j;G)\}$. 

\begin{algorithm}[t!]
\caption{An algorithm to enumerate all relevant cuts of a given edge-weighted connected undirected graph \( G \)}\label{alg:enumallrelevantcut}
\begin{algorithmic}[1]
\Require{An edge-weighted connected undirected graph \( G \)}
\Ensure{All relevant cuts of \( G \) in non-decreasing order of weights}
\State $\MX\gets\{\{v_1\},\{v_2\},\dots,\{v_n\}\}$;
\Comment{$V(G)=\{v_1,v_2,\dots,v_n\}$}
\algfor{each  \( u,v \in V(G) \) $(u\ne v)$\label{line:initw}}{
  \State \( W[\{u\},\{v\}] \gets \ledgecon{u}{v}{G} \) 
}\label{line:endinitw};
\State call \textsc{EnumRelevantCuts}(\( \MX\))
\Statex

\procedure{EnumRelevantCuts}{$\MX=\{X_1,X_2,\dots,X_k\}$}{ \label{line:beginrecursivecall}
  \State Compute $\Lambda(\MX)$; \label{line:Lambda}
  \algif{\( \Lambda(\MX)\ne\emptyset  \) \label{line:numofvertices}}{
    \State $(i,j)\gets\arg\max_{(x,y)\in\Lambda(\MX)}\{W[X_x,X_y]\}$;
    \label{line:chooseij}
    \algfor{each \( X_r\in\MX \setminus \{X_i,X_j\} \) \label{line:updatew1}}{
      \State \( W[X_r,X_i\cup X_j] \gets \max\{W[X_r,X_i],W[X_r,X_j]\} \)
    }\label{line:endupdatew1};
    \State call \textsc{EnumRelevantCuts}(\( (\MX\setminus\{X_i,X_j\})\cup\{X_i\cup X_j\} \)); \label{line:recursivecall}
    \State output all minimum \( x_i,x_j \)-cuts of \( G/\MX \) \label{line:output}
  };
}\label{line:endrecursivecall}
\end{algorithmic}
\end{algorithm}

\begin{thm}
  \label{thm:allcuts}
  Given a connected undirected graph \( G \) with an edge-weight function \( w: E(G) \to \real_{\geq 0} \), 
  Algorithm~\ref{alg:enumallrelevantcut} outputs all relevant cuts of \( G \) in non-decreasing order of weight 
  in \( \MO(nm\log(n^2/m)) \) delay except that
  the computation time for the first output is $\MO(n^2\varphi)$.
\end{thm}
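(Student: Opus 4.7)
The plan is to prove two properties. First, correctness: a call \textsc{EnumRelevantCuts}($\MX$) outputs precisely the cuts in $\MC^\ast_G\cap\MC^\ast_{G/\MX}$, each exactly once and in non-decreasing order of weight. I would proceed by induction on $|\MX|$. The base case is immediate: when $|\MX|=1$ the contracted graph has no cuts and $\Lambda(\MX)=\emptyset$. For the inductive step, the chosen $(i,j)$ belongs to $\Lambda(\MX)$, so the latter half of \lemref{enoughtoberelevant} guarantees that every minimum $x_i,x_j$-cut is relevant for $G$; \lemref{partitionofcuts} gives disjointness between $\MC^\ast_{G/\MX}(x_i,x_j)$ and the set handled by the recursive call, and \lemref{preservecut} gives coverage. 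The inductive hypothesis applied to $\MX'=(\MX\setminus\{X_i,X_j\})\cup\{X_i\cup X_j\}$ then shows that the two output phases together enumerate $\MC^\ast_G\cap\MC^\ast_{G/\MX}$ without duplication.

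The ordering is the crux. It reduces to proving that every cut enumerated during the recursive call has weight at most $W[X_i,X_j]=\lambda(x_i,x_j;G/\MX)$, so that the weight-$W[X_i,X_j]$ cuts output upon return come last. By the inductive hypothesis applied at $\MX'$, each recursive output is a minimum $x_{i'},x_{j'}$-cut of $G/\MX'$ for some $(i',j')\in\Lambda(\MX')$, hence has weight $W[X_{i'},X_{j'}]$. I would split on whether $X_i\cup X_j\in\{X_{i'},X_{j'}\}$. In the negative case, $X_{i'},X_{j'}\in\MX$ are unchanged, and the inequality $\lambda(x_{i'},x_{j'};G/\MX')\ge\lambda(x_{i'},x_{j'};G/\MX)$ (minimum cuts only grow under further vertex contraction) together with $(i',j')\in\Lambda(\MX')$ forces $(i',j')\in\Lambda(\MX)$, so the maximality of $(i,j)$ in $\Lambda(\MX)$ delivers the bound. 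In the other case, the update rule $W[X_r,X_i\cup X_j]=\max\{W[X_r,X_i],W[X_r,X_j]\}$ reduces, via the same contraction inequality, to the pair attaining that maximum, which again must belong to $\Lambda(\MX)$.

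For the complexity, the preprocessing at lines~\ref{line:initw}--\ref{line:endinitw} obtains $\lambda(u,v;G)$ for every pair in $O(n\varphi)$ time via a Gomory--Hu tree of $G$. Before the first cut is emitted, the algorithm descends the recursion; at a call whose contracted graph has $k$ vertices, computing $\Lambda(\MX)$ requires all pairwise minimum-cut values in $G/\MX$, again supplied by a Gomory--Hu tree in $O(k\varphi)$ time. Summing over at most $n$ levels yields the $O(n^2\varphi)$ bound for the first output. For subsequent outputs, a Picard--Queyranne-style decomposition of the residual graph of the already-computed $x_i,x_j$-max flow enumerates the minimum $x_i,x_j$-cuts of $G/\MX$ with per-cut cost bounded by a single max-flow computation, which Goldberg--Tarjan's push--relabel algorithm realizes in $O(nm\log(n^2/m))$ time.

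The hardest step is the ordering claim: the ``recurse first, then output'' strategy produces a non-decreasing sequence only if every cut produced deeper in the recursion has weight at most the currently selected $W[X_i,X_j]$, and proving this requires combining the monotonicity of minimum cuts under contraction, the coordinate-wise maximum update of the $W$-array, and the $\Lambda(\MX)$-restricted maximality of $(i,j)$.
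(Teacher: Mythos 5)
Your proof is correct and follows essentially the same route as the paper: a termination argument plus the chain of Lemmas~\ref{lem:partitionofcuts}, \ref{lem:enoughtoberelevant}, \ref{lem:iff_Lambda}, \ref{lem:preservecut} to decompose $\MC^\ast_G$, the $\arg\max$ choice of $(i,j)$ to obtain the ordering, and Gomory--Hu trees together with a minimum $x_i,x_j$-cut enumerator for the complexity (the paper cites the algorithm of Yeh et al.\ in place of your Picard--Queyranne decomposition, but with the same $O(nm\log(n^2/m))$ delay). Your two-case monotonicity argument for the ordering is more explicit than the paper's one-line claim that the $x_{i_t},x_{j_t}$-cuts are the maximum-weight relevant cuts among cuts of $G/\MX_t$, but both rest on Lemma~\ref{lem:iff_Lambda} and the maximizer choice in the same way.
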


\begin{proof}
  {\bf(Correctness)}
  The algorithm halts in a finite time
  since $|\MX|$ decreases by one every time when
  \textsc{EnumRelevantCuts} is called recursively
  and $\Lambda(\MX)=\emptyset$ holds for $|\MX|=1$.
  Suppose that \textsc{EnumRelevantCuts} is called $T$ times
  and that the $t$-th call is invoked with partition $\MX_t$ of $V(G)$, 
  where $\MX_1=\{\{v_1\},\{v_2\},\dots,\{v_n\}\}$
  and $\MX_{t+1}=(\MX_t\setminus\{X_{i_t},X_{j_t}\})\cup\{X_{i_t}\cup X_{j_t}\}$ for some $X_{i_t},X_{j_t}\in\MX_t$,
  $t=1,2,\dots,T-1$. 

  Let $\MX$ be a partition of $V(G)$. 
  For $(i,j)\in\Lambda(\MX)$,
  $\MC^\ast_{G/\MX}(x_i,x_j)\subseteq {\mathcal C}^\ast_G$
  holds by the latter part of \lemref{enoughtoberelevant}.
  Then by \lemref{preservecut}, we have 
\begin{align*}
  \MC^\ast_G\cap \MC_{G/\MX}
  =\MC^\ast_G\cap(\MC^\ast_G\cap \MC_{G/\MX})
  \subseteq
  \MC^\ast_{G/\MX}(x_i,x_j)\sqcup (\MC^\ast_G\cap\MC^\ast_{(G/\MX)/\{x_i,x_j\}}),
\end{align*}
where we can apply \lemref{preservecut} recursively
to the last term $\MC^\ast_G\cap\MC^\ast_{(G/\MX)/\{x_i,x_j\}}$.
Using $\MX_1,\MX_2,\dots,\MX_T$, we have
\begin{align*}
  \MC^\ast_G=\MC^\ast_{G}\cap\MC^\ast_{G/\MX_1}
  &\subseteq \MC^\ast_{G/\MX_1}(x_{i_1},x_{j_1})\sqcup (\MC^\ast_G\cap\MC^\ast_{G/\MX_2})\\
  &\subseteq \bigsqcup_{t=1}^{T-1}(\MC^\ast_{G/\MX_t}(x_{i_t},x_{j_t}))\sqcup (\MC^\ast_G\cap\MC^\ast_{G/\MX_T}), 
\end{align*}
where the last term $\MC^\ast_G\cap\MC^\ast_{G/\MX_T}$ is empty
by the former part of \lemref{enoughtoberelevant}
since $\Lambda(\MX_T)=\emptyset$. 
Again, $\MC^\ast_{G/\MX}(x_{i_t},x_{j_t})\subseteq {\mathcal C}^\ast_G$ holds,
and hence we have
\[
\MC^\ast_G=\bigsqcup_{t=1}^{T-1}(\MC^\ast_{G/\MX_t}(x_{i_t},x_{j_t})). 
\]
For $t=1,2,\dots,T-1$,
the solutions in $\MC^\ast_{G/\MX_t}(x_{i_t},x_{j_t})$
are maximum relevant cuts for $G$
among those in $G/\MX_t$. 
Solutions in the subset $\MC^\ast_{G/\MX_t}(x_{i_t},x_{j_t})$
are output after those
in $\MC^\ast_{G/\MX_s}(x_{i_s},x_{j_s})$, $s=t+1,\dots,T-1$
are output; see lines \ref{line:recursivecall} and \ref{line:output}.
This shows that all solutions are output in
non-decreasing order of weight. 

\medskip\noindent{\bf(Complexity)}
By using a Gomory-Hu tree,
we can initialize the array \( W \) in
\( \MO(n\varphi+n^2) \) time
(i.e., lines~\ref{line:initw} to \ref{line:endinitw}). 

In each recursive call of \textsc{EnumRelevantCuts}, 
We can compute $\Lambda(\MX)$ in
\( \MO(n\varphi+n^2) \) time (\lineref{Lambda}),
where the contracted graph $G/\MX$ can be generated in $\MO(m)$ time.
Lines~\ref{line:chooseij} to \ref{line:endupdatew1}
can be done in $\MO(n^2)$ time.
The first solution is output after the $T$-th recursive call,
where $T=\MO(n)$.
According to \cite{enumcut},
for two distinct vertices \( x_i,x_j \in V(G/\MX) \), 
we can enumerate all \( x_i,x_j \)-cuts of
\( G/\MX \) (not necessarily relevant)
in non-decreasing order of weight
in \( \MO(nm\log(n^2/m)) \) delay. 
Using this algorithm, 
we can obtain
all minimum \( x_i,x_j \)-cuts in \( \MO(nm\log(n^2/m)) \) delay in line~\ref{line:output}.
The computation time for the first output is
$\MO(n(n\varphi+n^2))=\MO(n^2\varphi)$. 
\end{proof}

Regarding that the computation until the first output
as preprocessing of the REL-oracle,
we have \( \taupreREL = \MO(n^2\varphi) \) and \( \tauREL = \MO(nm\log(n^2/m)) \).


\bigskip
\begin{cor}
  Given a connected graph $G=(V,E)$ with
  an edge-weight function $w:E\to\real_{\geq 0}$,
  there is an integer $L$ such that  
  we can generate the $\ell$-th minimum cut basis in
  \( \MO((n + \ell) nm \log (n^2/m) + \ell^2 nm^3) \) time
  for $\ell\le L$
  and the remaining minimum cut bases in $\MO(n)$ delay
  after $\MO(n^2\varphi)$-time preprocessing,
  where $n=|V|$ and  $m=|E|$.
\end{cor}

\section{A Polynomial-Delay Algorithm for Enumerating All Bases}
\label{sec:allbases}

In this section,
we propose a poly-delay algorithm
for enumerating all bases of a binary matroid $\MM=(U,\MI)$
that satisfies \assumref{binary}. 
%
For convenience, we may represent a vector (resp., matrix) over $\gft$
by a binary vector (resp., matrix) throughout the section.
Such notions as 
linear independence of binary vectors
and the determinant of a binary matrix are also considered over $\gft$. 
We assume only IND-oracle,
which can be realized by any algorithm that computes a matrix determinant
over $\gft$.
Note that no other oracle is assumed like previous sections.

The algorithm is based on Avis and Fukuda's well-known
{\em reverse search}~\cite{reversesearch}. 
We show that the delay of the algorithm is
$\MO(r^3+r^2\tauDET(r))$, where
$r$ denotes the rank of $\MM$ and 
$\tauDET(r)$ denotes the computation time
for obtaining the determinant of a given $r\times r$ binary matrix. 
Again, the polynomial does not depend on the ground set size $|U|$.
The algorithm immediately yields
poly-delay enumeration of bases in binary matroids
from cycle space and cut space. 

Let us remark that the vector space
has already been used in enumeration research.
For example, in \cite{cyclespacemethod,allcircuits}, 
enumeration of cycles in graphs is studied
under the cycle space. 

\subsection{Solution Space}
In a binary matroid $\MM$ with rank $r$, 
each element is associated with a vector in $\gft^d$,
where $d$ denotes the dimension.
For example, in the case of cycle/cut space of a graph $G$,
we have $d=|E(G)|$ and
each value in the vector indicates whether the corresponding edge
belongs to the subset (1) or not (0).

A basis of $\MM$ corresponds to a set of $d$-dimensional $r$ vectors.
Instead of $d$-dimensional $r$ vectors, 
we regard a set of $r$-dimensional $r$ vectors
that are obtained in the following way as a solution.
By this, we can exploit such notions from linear algebra
as cofactor expansion and determinant. 

Let $X=\{\bm x_1 ,\bm x_2 , \dots, \bm x_r \}$
be a set of $d$-dimensional $r$ vectors of a basis in $\MM$.
Regarding each vector as a row vector,
we represent its
$r\times d$ matrix representation by the boldface $\bm X$.
For any binary vector $\bm c\in\gft^r$,
there is an element in $U$
that corresponds to vector $\bm u = \bm c\bm X\in\gft^d$ 
by \assumref{binary}.

Let  $B=\{\bm c_1,\bm c_2,\dots,\bm c_r\}$
be a set of $r$-dimensional $r$ vectors.
By the following lemma,  
$\bm U=\bm B\bm X$ is the matrix representation of
the incidence vectors of a basis 
if and only if
$\bm c_1,\bm c_2,\dots,\bm c_r$ are linearly independent. 
%

\invis{
It is known that the sum of cycles (resp., cuts) is also a cycle (resp., a cut) 
if the sum is not \( \bm 0 \)~\cite{cycleandcutspace}.
Thus, the linear combination of the elements in a basis is also an element in \( U \).
Assume that a basis \( \rtc = \sets{c}{\rank} \) is given.
For $ i = \nums{\rank} $, let $ \bm x_i \in \gft^m $ be the incidence vector of the element $ c_i \in \rtc $ 
and $ X \coloneqq \sets{\bm x}\rank $.
For any incidence vector $ \bm u \in \gft^m $, 
there exists a unique $ \rank $-dimensional vector 
$ \cycle \in \gft^\rank $ 
such that $ \bm u = \cycle \rtmat $.
}

\begin{lem}\label{lem:independence}
  Let $\MM=(U,\MI)$ be a binary matroid that satisfies \assumref{binary}
  and $X=\{\bm x_1 ,\bm x_2 , \dots, \bm x_r \}$ be a set of
  $d$-dimensional vectors that correspond to a basis in $\MM$. 
  For $ \cycle_1,\cycle_2,\dots,\cycle_k \in \gft^\rank $,
  let $ \bm u_1 = \cycle_1 \rtmat, \bm u_2 = \cycle_2 \rtmat,\dots, \bm u_k = \cycle_k \rtmat$. 
  Then  $ \cycle_1,\cycle_2,\dots,\cycle_k $ are linearly independent if and only if $ \bm u_1,\bm u_2,\dots,\bm u_k $ are linearly independent.
\end{lem}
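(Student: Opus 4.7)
The plan is to reformulate the statement as an injectivity property of the linear map $\phi:\gft^\rank\to\gft^d$ defined by $\phi(\bm c) = \bm c\rtmat$, where $d$ is the common dimension of incidence vectors. Under this reformulation, both directions of the equivalence become routine consequences of the fact that an injective linear map preserves both linear independence and linear dependence of any finite family of vectors, since $\bm u_i = \phi(\cycle_i)$ for each $i$.

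The crux is therefore to establish that $\phi$ is injective, equivalently that $\ker\phi = \{\bm 0\}$. This reduces to showing that the rows $\bm x_1,\bm x_2,\dots,\bm x_\rank$ of $\rtmat$ are linearly independent in $\gft^d$. But by hypothesis $X$ is the set of incidence vectors of a basis of the binary matroid $\MM$, and in any binary matroid an independent set is by definition one whose incidence vectors are linearly independent over $\gft$ (see \secref{prel}). Since a basis is in particular independent, the rows of $\rtmat$ are linearly independent, and therefore $\bm c\rtmat = \bm 0$ forces $\bm c = \bm 0$, as required.

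With injectivity in hand, both directions follow quickly. For the ``only if'' direction, suppose $\cycle_1,\dots,\cycle_k$ are linearly independent and $\sum_{i=1}^k \alpha_i \bm u_i = \bm 0$ for some $\alpha_i\in\gft$. Then $\phi\bigl(\sum_i \alpha_i \cycle_i\bigr) = \sum_i \alpha_i \phi(\cycle_i) = \sum_i \alpha_i \bm u_i = \bm 0$, so $\sum_i \alpha_i \cycle_i = \bm 0$ by injectivity of $\phi$, and the linear independence of the $\cycle_i$ forces every $\alpha_i = 0$. The ``if'' direction is symmetric and in fact does not require injectivity at all: any dependency $\sum_i \alpha_i \cycle_i = \bm 0$ among the $\cycle_i$ yields $\sum_i \alpha_i \bm u_i = \phi\bigl(\sum_i \alpha_i \cycle_i\bigr) = \phi(\bm 0) = \bm 0$, which is a dependency among the $\bm u_i$.

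I do not anticipate any genuine obstacle; the argument is essentially the standard fact that a linear map of full column rank (here full row rank, since we right-multiply row vectors by $\rtmat$) preserves independence. One point worth noting is that \assumref{binary} is not logically needed inside the proof itself — it is invoked only to guarantee that each $\bm u_i = \cycle_i\rtmat$ really is the incidence vector of some element of $U$, so that the statement is meaningful as a claim about elements of $\MM$ and not merely about abstract vectors in $\gft^d$.
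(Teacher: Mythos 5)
Your proof is correct and takes essentially the same approach as the paper: both rest on observing that $\bm c\mapsto\bm c\bm X$ is an injective linear map (because the rows of $\bm X$ are linearly independent, $X$ being a basis), and that injective linear maps preserve linear independence in both directions. You simply spell out the routine verification that the paper's one-line proof leaves implicit, and you correctly note the side point that \assumref{binary} serves only to guarantee each $\bm u_i$ corresponds to an actual element of $U$.
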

\begin{proof}
    The linear mapping $ f: \gft^{\rank} \to \gft^{m}$; $ \cycle \mapsto \cycle \rtmat $ is injective.
    The vectors $ \bm u_1,\bm u_2,\dots,\bm u_k $ are independent if and only if $ \cycle_1, \cycle_2, \dots, \cycle_k $ are independent.
\end{proof}

Choose an arbitrary basis of $\MM$
and let $X=\{\bm x_1 ,\bm x_2 , \dots, \bm x_r \}$
denote the set of the vectors corresponding to the basis.
We see that our task is to enumerate all sets
$B=\{\bm c_1,\bm c_2,\dots,\bm c_r\}$ of
$r$-dimensional binary vectors
that are linearly independent,
by which the matrix representations $\bm U=\bm B\bm X$
of the incidence vectors of all bases of $\MM$ are enumerated. 
For convenience, we may call such $B$ a basis
and an $r$-dimensional binary vector 
an element. 
We denote elements $(1,1,\dots,1)$ and $(0,0,\dots,0)$
by $\bm 1$ and $\bm 0$, respectively.

\subsection{Reverse Search Algorithm}

\invis{
In this section, we present a poly-delay algorithm that enumerates all bases of cycle/cut space.
This algorithm is based on reverse search~\cite{reversesearch}, 
where we define the parent based on the exchange for a basis.

Let \( G = (V,E) \) be a graph and 
\( \matroid = (U, \independent) \) be a binary matroid, 
where \( U \) is the set of incidence vectors of all cycles/cuts of \( G \).

In the following, 
we first show a manner to construct a basis, 
which is used as a root of the reverse search algorithm, 
in Subsection~\ref{sss:root}.
We show that the task of enumeration can be reduced to the enumeration of the bases of \( \gft^\rank \) in Subsection~\ref{sss:reduction}.
Then we define the parent-child relationship in Subsection~\ref{sss:parent}.
We show an algorithm to enumerate all children of a basis in Subsection~\ref{sss:children} and show a complete reverse search algorithm to enumerate all basis in Subsection~\ref{sss:allbases}.
}
We exploit reverse search~\cite{reversesearch}
to design the algorithm. We do the following tasks:
\begin{itemize}
\item select a solution as the root;
\item define the parent of each non-root solution; and
\item design an algorithm that generates all children
  of a given solution, 
\end{itemize}
where a solution in this case 
is a basis of the matroid $\MM$.
Traversing the family-tree from the root,
we can enumerate all solutions.

\invis{
\subsection{Reduction of the Problem}
\label{sss:reduction}

We show that the enumeration of all cycle/cut bases can be reduced to the enumeration of all bases of \( \gft^\rank \).
}

\subsubsection{Definition of Root}
For the root of the reverse search,
we take a basis $R$ whose matrix representation is a basis matrix.
To be more precise, we set $R=\{\cycle^\ast_1, \cycle^\ast_2, \dots,\cycle^\ast_r \}$ as the root, 
where we denote by $c^\ast_i$, $i=1,2,\dots,r$ an
$r$-dimensional binary vector such that
the $i$-th entry takes 1 and all others take 0.

\invis{
  By \lemref{independence}, the task of enumerating all bases of $ \matroid $ is reduced to enumeration of all sets of independent vectors $ \ls{\cycle}{\rank} \in \gft^\rank $, 
  that is the all bases of \( \gft^\rank \).
  For convenience, unless no confusion arises, 
  we may call $ \cycle \in \gft^\rank $ an element  
  and call a subset $ \sets{\cycle}{\rank} \subseteq \gft^\rank $ a basis 
  if \( \sets{\cycle}{\rank} \) is linearly independent.
  For a basis \( \basis \), we denote by the boldface \( \coefmat \) the \( \rank \times \rank \) matrix consisting of the \( \rank \)-dimensional vectors corresponding to the elements in \( \basis \).
  Let $ \rtc = \sets{\cycle^\ast}{\rank} $ denote the elements in $ \rtc $.
  By the definition of $ \cycle^\ast $, it holds that $ \cycle^\ast_i = \bm e_i $, where $ \bm e_i $ is the vector whose $ i $-th entry is $ 1 $ and the others are $ 0 $.
}

Given a set $S$ of $r$ elements,
we can decide that $S$ is (resp., is not) a basis
if $\det\bm S=1$ (resp., 0). 
Let $ \sol = \sets{\cycle}{\rank} $ be a basis.
For the $ \rank \times \rank $ matrix $ \coefmat $ and $ i,j \in [1,\rank] $, we denote by $ \coefmat^{(i,j)} $ the $ (\rank-1) \times (\rank-1) $ matrix that is obtained by deleting the $ i $-th row and $ j $-th column from $ \coefmat $.
In particular, for an element $ \cycle= \cycle_i $,  we may write $ \coefmat^{(i,j)} $ by $ \coefmat^{(\cycle,j)} $. 

\begin{lem}\label{lem:cofactorexpansion}
  Let $ \sol\subseteq\gft^{r} $ be a basis
  and $ \cycle = \vect{c}{\rank} \in \sol $ be an element.
  It holds that $ \sum_{j=1}^{\rank} c_j \det (\coefmat^{(\cycle,j)})  = 1 $.
\end{lem}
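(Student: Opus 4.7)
The plan is to recognize the identity as Laplace's cofactor expansion of $\det \bm B$ along the row corresponding to $\bm c$, specialized to the field $\gft$.

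First, I would invoke the fact that since $\sol$ is a basis, its elements are linearly independent by definition, so $\det \bm B \neq 0$ over $\gft$. Since $\gft$ has only the nonzero element $1$, this forces $\det \bm B = 1$.

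Next, I would apply the standard cofactor expansion (Laplace expansion) along the row of $\bm B$ that corresponds to the element $\cycle = (c_1, c_2, \dots, c_r)$. Letting $i$ denote the index of this row, the expansion yields
$$\det \bm B = \sum_{j=1}^{r} (-1)^{i+j} c_j \det(\bm B^{(i,j)}) = \sum_{j=1}^{r} (-1)^{i+j} c_j \det(\bm B^{(\cycle,j)}).$$
Because the arithmetic is performed over $\gft$, where $-1 = 1$, every sign factor $(-1)^{i+j}$ collapses to $1$. Combining this with $\det \bm B = 1$ gives the desired identity
$$\sum_{j=1}^{r} c_j \det(\bm B^{(\cycle,j)}) = 1.$$

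I do not anticipate a significant obstacle here; the only subtle point is justifying that the standard Laplace expansion, which is typically stated over a general commutative ring with the alternating signs, applies over $\gft$ in the sign-free form above. This is immediate from the fact that $-1$ and $+1$ coincide in characteristic $2$, so both the statement and the proof of the expansion transfer verbatim without needing to track signs.
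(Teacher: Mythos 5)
Your proof is correct and follows essentially the same route as the paper: establish $\det\coefmat=1$ from linear independence over $\gft$, then apply cofactor expansion along the row of $\cycle$. You merely spell out the detail, left implicit in the paper, that the alternating signs $(-1)^{i+j}$ vanish because $-1=1$ in characteristic $2$.
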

\begin{proof}
  Elements in a basis are linearly independent
  and hence $ \det \coefmat = 1 $ holds.
  By the cofactor expansion, 
  it follows that 
  $ \det \coefmat = \sum_{j=1}^{\rank} c_j \det (\coefmat^{(\cycle,j)}) = 1 $.
\end{proof}

\subsubsection{Definition of Parent}
For an element $\bm c=(c_1,c_2,\dots,c_r)\in\gft^\rank$,
we denote by $\textrm{bin}(\bm c)$ the number
$(c_1c_2\dots c_r)_2$ in the binary system. 
For two elements $\bm c,\bm d\in\gft^\rank$,
we write $\bm c\prec\bm d$ (or $\bm d\succ\bm c$)
if $\textrm{bin}(\bm c)<\textrm{bin}(\bm d)$.
We see $\bm 0\prec \bm c^\ast_r\prec \bm c^\ast_{r-1}\prec\dots\prec \bm c^\ast_1\prec\bm 1$. 
Let $P$ be a set of elements. 
We assume that any element $\cycle$
satisfies $ \cycle \prec \min P$ 
if $P=\emptyset$ for convenience. 
%
%

For an element $\bm c$,
we define the {\em successor of $\cycle$}
to be the element $\successor(\cycle)$
such that
\begin{align*}
  \textrm{bin}(\successor(\cycle))&=\left\{
  \begin{array}{ll}
    \textrm{bin}(\cycle)+1 & \textrm{if\ }\cycle\notin\{\bm 1,\bm 0\};\\
    0 & \textrm{if\ }\cycle\in\{\bm 1,\bm 0\}.
  \end{array}
  \right.
\end{align*}
%
%
For a positive integer \( a \),
we define \( \successor^{(a)}(\cycle) \coloneqq \successor(\successor^{(a-1)}(\cycle)) \), where we let \( \successor^{(0)}(\cycle) \coloneqq \cycle \).
%
In particular, for $i\in[1,r]$,
$\mathrm{bin}(\successor^{(2^{r-i})}(\cycle))=\mathrm{bin}(\cycle)+\mathrm{bin}(\bm c^\ast_i)$ holds
as long as the overflow does not occur. 
We can compute \( \successor^{(a)}(\cycle) \) in \( \MO(\rank) \) time.

Let $\sol$ be a basis. 
For $x\in\basis$,
we let \( \cut (\basis;x)\coloneqq 
\set{y \in \gft^r\mid (x,y) \text{ is an exchange for } \basis}\cup\{x\} \).
We define the {\em parent} $\pi(\sol)$ of $ \sol $ as follows. 
\[ \pi(\sol) \coloneqq \begin{cases}
    (\sol \setminus \min(\sol \setminus \rtc)) \cup \min (\rtc \cap \cut(\sol; \min(\sol \setminus \rtc))) &\text{if } \sol \ne \rtc,\\
  \emptyset & \text{if } \sol = \rtc. 
\end{cases} \]

By \lemref{exchange}, 
$ \rtc \cap \cut(\sol; \min (\sol \setminus \rtc))$ is not empty
unless $B=\rtc$.
Then $\pi(B)$ is nonempty and a basis if $\sol\ne\rtc$.
%
The following lemma shows that
this definition of the parent induces
a transitive ancestor-descendant relationship between bases. 

\begin{lem}\label{lem:numberofrootcycle}
  For the root $\rtc$,
  let $\sol\ne \rtc $ be a basis.
  It holds that $ |\pi(\sol) \cap \rtc| =  | \sol \cap \rtc| + 1$.
\end{lem}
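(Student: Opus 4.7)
The plan is to unfold the definition of the parent operation and carefully track how intersection with $R$ changes element by element. Write $x \coloneqq \min(B \setminus R)$, which is well-defined because $B \neq R$ and both have size $r$, so $B \setminus R \neq \emptyset$. Let $y \coloneqq \min(R \cap \cut(B; x))$, which exists (and is actually in $R \setminus B$) by the following reasoning: by Lemma~\ref{lem:exchange} applied to the bases $B$ and $R$ with $x \in B \setminus R$, there is some $y' \in R \setminus B$ such that $(x, y')$ is an exchange for $B$; hence $y' \in R \cap (\cut(B; x) \setminus \{x\})$, so the minimum is taken over a nonempty set.

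Next I would verify two disjointness facts that make the counting trivial. First, since $x = \min(B \setminus R)$ we have $x \notin R$, so removing $x$ from $B$ does not change the intersection with $R$:
\[
(B \setminus \{x\}) \cap R = B \cap R.
\]
Second, $y$ is obtained from an exchange $(x, y)$ for $B$, so by definition $y \notin B$; in particular $y \notin B \cap R$. Combined with $y \in R$, this gives
\[
\pi(B) \cap R = \bigl((B \setminus \{x\}) \cup \{y\}\bigr) \cap R = (B \cap R) \cup \{y\},
\]
a disjoint union. Taking cardinalities yields $|\pi(B) \cap R| = |B \cap R| + 1$, as claimed.

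The argument is essentially bookkeeping; there is no real obstacle, though one must be slightly careful that the single edge case, namely whether $y$ could coincide with $x$ (which would break the disjointness), cannot occur. This is ruled out because $x \notin R$ while $y \in R$, so $y \neq x$, and therefore the set $R \cap \cut(B; x)$ over which the minimum is taken consists entirely of elements outside $B$.
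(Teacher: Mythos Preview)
Your proof is correct and follows the same idea as the paper's, which consists of a single sentence: ``By the definition of the parent, $\pi(B)$ is obtained by removing an element not in $R$ and adding an element in $R$.'' You simply unpack this more carefully, verifying that the removed element lies outside $R$, that the added element lies in $R\setminus B$, and that the relevant minimum is taken over a nonempty set---all details the paper leaves implicit.
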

\begin{proof}
  By the definition of the parent, $ \pi(\sol) $ is obtained by removing an element not in $ \rtc $ and adding an element in $ \rtc $.
\end{proof}

\noindent
It holds that $ |\sol \cap \rtc| = \rank $ if and only if $ \sol = \rtc $.

\subsubsection{Generation of Children}
Let \( \sol \) be a basis.
If $\sol\ne\rtc$,
then the parent $\pi(B)$
is written in the form $\pi(B)=(B\setminus\{\bm y'\})\cup\{\bm x'\}$,
where $\bm x'= \min (\rtc \cap \cut (\child;\min(\child \setminus \rtc)))$
and $\bm y'=\min(\child \setminus \rtc)$. 
This means that, if a basis $P$ is the parent of $B$,
then $B$ can be expressed in the form
$B=(P\setminus\{\bm x\})\cup\{\bm y\}$
for some elements $\bm x\in R$ and $\bm y\notin R\cup P$. 
We show a necessary and sufficient condition that 
$(P\setminus\{\bm x\})\cup\{\bm y\}$  is a child of $ \parent $.


\begin{lem}
  \label{lem:childrennecessary}
  For the root $\rtc$, let \( \child \ne \rtc \) be a basis. 
  Suppose that
  there are elements 
  \( \cast_i \in \rtc \) and \( \add \in B\setminus\rtc \) such that
  \( P = (\child \setminus \set \add)\cup \set {\cycle^\ast_i} \) is a basis.
  Then \( \parent \) is the parent of \( \child \) if and only if 
  both of the following hold:
\begin{description}
\item[(i)] For any $ j \in [i + 1,\rank] $, it holds that $ \det (\bm \parent^{(\cycle^\ast_i,j)})  = 0 $. 
\item[(ii)] It holds that $ \add \prec \min(\parent \setminus \rtc) $.
\end{description}
\end{lem}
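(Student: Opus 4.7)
The plan is to unpack the definition of $\pi(B) = (B \setminus \min(B \setminus R)) \cup \min(R \cap \cut(B;\min(B \setminus R)))$ and show that (ii) is equivalent to $\bm y = \min(B \setminus R)$ and that (i) is equivalent to $\bm c^\ast_i = \min(R \cap \cut(B;\bm y))$, given that $P = (B \setminus \{\bm y\}) \cup \{\bm c^\ast_i\}$ is already known to be a basis.

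For the equivalence involving (ii), I would first observe that $\bm c^\ast_i \in R$ implies $P \setminus R = (B \setminus R) \setminus \{\bm y\}$. Thus the inequality $\bm y \prec \min(P \setminus R)$ simply says that $\bm y$ is strictly smaller than every other element of $B \setminus R$, which holds if and only if $\bm y = \min(B \setminus R)$.

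For (i), the key computational step is to rewrite $\det\bigl((P \setminus \{\bm c^\ast_i\}) \cup \{\bm c^\ast_j\}\bigr)$ in terms of $\det(\bm P^{(\bm c^\ast_i,j)})$. Since the row $\bm c^\ast_j$ has a single $1$ in column $j$, cofactor expansion along this row (working over $\gft$, where signs are irrelevant) collapses to the single term $\det(\bm P^{(\bm c^\ast_i,j)})$: deleting row $\bm c^\ast_j$ and column $j$ from $(P \setminus \{\bm c^\ast_i\}) \cup \{\bm c^\ast_j\}$ produces exactly the same matrix as deleting row $\bm c^\ast_i$ and column $j$ from $\bm P$. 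Hence $(B \setminus \{\bm y\}) \cup \{\bm c^\ast_j\}$ fails to be a basis if and only if $\det(\bm P^{(\bm c^\ast_i,j)}) = 0$. Using $\bm c^\ast_r \prec \bm c^\ast_{r-1} \prec \cdots \prec \bm c^\ast_1$ (which follows directly from $\mathrm{bin}(\bm c^\ast_k) = 2^{r-k}$), the root elements strictly smaller than $\bm c^\ast_i$ are precisely $\bm c^\ast_j$ for $j \in [i+1,r]$. So condition (i) exactly expresses that no such $\bm c^\ast_j$ lies in $\cut(B;\bm y)$, while $\bm c^\ast_i$ itself does (because $P$ is a basis), which is the same as $\bm c^\ast_i = \min(R \cap \cut(B;\bm y))$.

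Combining the two equivalences yields $P = (B \setminus \{\bm y\}) \cup \{\bm c^\ast_i\} = \pi(B)$, giving both directions of the lemma. I do not anticipate a serious obstacle: the only non-routine point is the cofactor identification $\det\bigl((P \setminus \{\bm c^\ast_i\}) \cup \{\bm c^\ast_j\}\bigr) = \det(\bm P^{(\bm c^\ast_i,j)})$, but this is a direct consequence of the sparsity of $\bm c^\ast_j$ together with \lemref{cofactorexpansion}. Care is needed to keep the ordering $\prec$ consistent with the indexing of $\bm c^\ast_k$, since smaller index corresponds to larger $\mathrm{bin}$-value and hence larger position in the order.
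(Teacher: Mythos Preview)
Your proposal is correct and takes essentially the same approach as the paper: both arguments hinge on the cofactor identity $\det\bigl((P\setminus\{\bm c^\ast_i\})\cup\{\bm c^\ast_j\}\bigr)=\det(\bm P^{(\bm c^\ast_i,j)})$ to translate condition~(i) into the minimality of $\bm c^\ast_i$ in $R\cap\cut(B;\bm y)$, and on the observation $P\setminus R=(B\setminus R)\setminus\{\bm y\}$ to translate condition~(ii) into $\bm y=\min(B\setminus R)$. Your framing as two separate equivalences is slightly cleaner than the paper's direct treatment of the two implications, but the substance is identical.
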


\begin{proof}
    Suppose \( \parent=\pi(B) \). 
    Let $ Q\coloneqq (P\setminus\{\bm c^\ast_i\})\cup\{\bm \cycle^\ast_j\}$. 
    The row vectors of $ \bm Q $ are dependent
    since $ \cycle^\ast_i $ is the minimum element in $ \rtc \cap \cut(\child;\min(\child\setminus\rtc)) $.
    Thus, $ \det \bm Q = 0 $ holds. 
    By the cofactor expansion, 
    $ \det \bm Q = \det (\bm \parent^{(\cycle^\ast_i,j)}) = 0 $ holds,
    showing (i).
    The basis $P$ is the parent of $B$, and
    $ \child\setminus\rtc \supsetneq \parent\setminus\rtc $ holds.
    By $\bm y=\min(\child\setminus\rtc)$ and $\bm y\notin P\setminus R$,
    $ \bm y  \prec \min(\parent\setminus\rtc) $ holds for (ii). 

    Suppose that both (i) and (ii) hold.
    For all \( j \in [i+1,\rank] \), 
    it holds that $ \cycle^\ast_j \notin \cut (\child; \add) $ 
    since $ \det (\bm \parent^{(\cycle^\ast_i,j)} ) = \det (\bm \child^{(\add,j)}) = 0 $ by (i).
    Hence, $ \cycle^\ast_i $ is the minimum element in $ \rtc \cap \cut(\child; \add) $.
    We see that $|P\setminus R|=|B\setminus R|-1$.
    By $ \add\notin P\setminus R$, $\add\in B\setminus R$
    and (ii), $ \min(\child \setminus \rtc) = \add $ holds.
    Then we have  $ \pi(\child) = \parent $. 
\end{proof}

In order to generate all $ \remove $ and $ \add $ such that $ \exc{\parent}{\remove}{\add} $ is a child of a basis $ \parent $,
we do the following 1 and 2:
\begin{itemize}
\item[1.] We first choose $ \remove = \cycle^\ast_i \in \parent \cap \rtc $ that satisfies the condition of \lemref{childrennecessary}(i).
  If there is no such $\cycle^\ast_i$, then $ \parent $ has no child. 
\item[2.] For each $\bm c^\ast_i$ in 1,
  we search all $ \add \in  \cut(\parent;\cycle^\ast_i)$
  that satisfy the condition of \lemref{childrennecessary}(ii)
  and employ $\add$ if $\add\notin R$.
  The minimum element in \( \cut(\parent;\cycle^\ast_i) \)
  is $\bm c^\ast_i $
  since we have chosen $\bm c^\ast_i$ so that
  $ \det (\bm \parent^{(\cycle^\ast_i,j)})  = 0 $ holds for all $ j \in [i + 1,\rank] $.  
  Then we enumerate all $ \add \in \cut(\parent;\cycle^\ast_i)$ satisfying that $\bm c^\ast_i\prec \add \prec \min(\parent \setminus \rtc) $ in the increasing order.
\end{itemize}

For 2, 
let \( \bm z \coloneqq \successor(\add) \), \( \bm z' \coloneqq \successor^{(2^{\rank-i})}(\bm z) \) and \( \bm z'' \coloneqq \successor^{(2^{\rank-i+1})}(\bm z) \).
Later in Lemma~\ref{lem:skip2}, 
we show that
at least one of \( \bm z \), \( \bm z' \) and \( \bm z'' \)
is the smallest element larger than \( \add \) in \( \cut(P;\cycle^\ast_i) \), or there is no element larger than \( \add \) in \( \cut(P;\cycle^\ast_i) \).
Hence, it suffices to check only \( \bm z \), \( \bm z' \) and \( \bm z'' \) to find the smallest element larger than \( \add \) in \( \cut(P;\cycle^\ast_i) \).

The following Lemma~\ref{lem:succ} is
used to prove \lemref{skip2}. 
 
\begin{lem}\label{lem:succ}
  For the root $R$, let $B\ne R$ be a basis
  and $P\coloneqq \pi(B)$ denote the parent of $B$
  such that $P=(B\setminus\{\bm y\})\cup\{\bm c^\ast_i\}$
  for ${\bm y}=\min(B-R)$ and $\bm c^\ast_i=\min(R\cap\cut(B;\bm y))$.
  %
  Let $\bm \upsilon$ be an element such that $\bm\upsilon\in\cut(P;\bm c^\ast_i)$ and
  $\bm z\coloneqq\successor(\bm \upsilon)$. 
  If $(P\setminus \{\bm c^\ast_i\})\cup\{\bm z\}$ is not a basis,
  then:
  \begin{description}
  \item[(i)] \( z_j = 0 \) for all integers \( j \in [i+1,\rank] \); and
  \item[(ii)] for every element \( \bm z' \in \gft^\rank \) such that \( z'_p = z_p\) for all \( p \in [1,i] \),
    $(P\setminus \{\bm c^\ast_i\})\cup\{\bm z'\}$ is not a basis. 
  \end{description}
  %
\end{lem}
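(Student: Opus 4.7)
The plan is to reduce both claims to a single linear constraint over $\gft$ on the coordinates of the candidate replacement vector, and then analyze the binary‐increment structure of $\successor$. First I would observe that, since $P = \pi(B)$ with $\bm c^\ast_i = \min(R \cap \cut(B;\bm y))$, the argument used for Lemma~\ref{lem:childrennecessary}(i) applies verbatim to show $\det(\bm P^{(\bm c^\ast_i,j)}) = 0$ for every $j \in [i+1,r]$. I would then cofactor-expand, over $\gft$, the determinant of the matrix obtained from $\bm P$ by replacing the row corresponding to $\bm c^\ast_i$ with an arbitrary $\bm w \in \gft^r$. Since signs disappear over $\gft$ and the minors indexed by $j > i$ vanish, the expansion collapses to
\[
f(\bm w) \coloneqq \sum_{j=1}^{i} w_j\,\det(\bm P^{(\bm c^\ast_i,j)}),
\]
and $(P\setminus\{\bm c^\ast_i\})\cup\{\bm w\}$ is a basis if and only if $f(\bm w) = 1$. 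The crucial point is that $f$ depends only on $w_1,\dots,w_i$.

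Next I would compare $\bm z$ with $\bm \upsilon$. Because $\bm \upsilon \in \cut(P;\bm c^\ast_i)$, I have $f(\bm \upsilon) = 1$ (either $\bm \upsilon = \bm c^\ast_i$, in which case $f(\bm c^\ast_i) = \det(\bm P^{(\bm c^\ast_i,i)}) = \det \bm P = 1$, or $(P\setminus\{\bm c^\ast_i\})\cup\{\bm \upsilon\}$ is a basis). By hypothesis, $f(\bm z) = 0$. Now let $k$ be the rightmost coordinate at which $\bm \upsilon$ takes value $0$; since $\successor$ is binary increment, $\bm z$ agrees with $\bm \upsilon$ on positions $[1,k-1]$, has $z_k = 1$, and $z_j = 0$ for all $j > k$. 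If $k > i$, then $\bm z$ and $\bm \upsilon$ agree on $[1,i]$, giving $f(\bm z) = f(\bm \upsilon) = 1$, a contradiction. Therefore $k \le i$, which forces $z_j = 0$ for all $j \in [i+1,r]$, establishing (i). The degenerate overflow case $\bm \upsilon = \bm 1$, $\bm z = \bm 0$ satisfies (i) trivially.

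For (ii), I would simply note that because $f$ depends only on the first $i$ coordinates, any $\bm z'$ agreeing with $\bm z$ on $[1,i]$ satisfies $f(\bm z') = f(\bm z) = 0$, so $(P\setminus\{\bm c^\ast_i\})\cup\{\bm z'\}$ is not a basis. I do not anticipate a real obstacle here: the only nontrivial insight is that the minors $\det(\bm P^{(\bm c^\ast_i,j)})$ vanish for $j > i$, which decouples the basis condition from the tail coordinates and reduces both (i) and (ii) to the elementary carry behavior of binary increment.
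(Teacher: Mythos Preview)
Your proposal is correct and follows essentially the same approach as the paper: both arguments use Lemma~\ref{lem:childrennecessary}(i) to kill the minors $\det(\bm P^{(\bm c^\ast_i,j)})$ for $j>i$, reduce the basis condition via cofactor expansion to a linear form in the first $i$ coordinates, and then argue (i) by contradiction from the carry behavior of binary increment and (ii) directly from the fact that only the first $i$ coordinates matter. Your explicit introduction of $f(\bm w)$ and separate treatment of the overflow case $\bm\upsilon=\bm 1$ are minor presentational improvements, but the substance is identical.
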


\begin{proof}
  Let $Q\coloneqq (P\setminus \{\bm c^\ast_i\})\cup\{\bm z\}$.
  We see that 
  \( \det (\bm \parent^{(\cycle^\ast_i,p)})=0 \) for all \( p \in [i+1,\rank] \) holds by \lemref{childrennecessary}(i).
  
  \noindent (i)
    Suppose that there exists $j\in[i+1,r]$ such that \( z_j=1 \).
    It holds that \( z_p = \upsilon_p\) for all \( p \in [1,i] \subseteq [1,j-1] \) by the definition of successor.
    We see 
    \begin{align*}
      \det\bm Q&= \sum_{p=1}^\rank z_p \det (\bm \parent^{(\cycle^\ast_i,p)}) = \sum_{p=1}^i z_p \det (\bm \parent^{(\cycle^\ast_i,p)}) \\
      &= \sum_{p=1}^i \upsilon_p \det (\bm \parent^{(\cycle^\ast_i,p)}) = \sum_{p=1}^\rank \upsilon_p \det (\bm \parent^{(\cycle^\ast_i,p)})
      = \det\bm B=1,
      \end{align*}
    showing that $Q$ is a basis.

    \noindent (ii)
    Let $Q'\coloneqq (P\setminus \{\bm c^\ast_i\})\cup\{\bm z'\}$. Similarly to (i), we see
    \begin{align*}
      \det \bm Q'&=\sum_{p=1}^\rank z'_p \det (\bm \parent^{(\cycle^\ast_i,p)}) = \sum_{p=1}^i z'_p \det (\bm \parent^{(\cycle^\ast_i,p)})\\
      &= \sum_{p=1}^i z_p \det (\bm \parent^{(\cycle^\ast_i,p)}) = \sum_{p=1}^\rank z_p \det (\bm \parent^{(\cycle^\ast_i,p)}) =\det \bm Q= 0.
    \end{align*}
    \hfill
\end{proof}

\invis{
\begin{lem}\label{lem:skip}
    For a basis \( \sol \) including \( \cycle^\ast_i \in \rtc \), 
    let \( k \) denote the largest index such that 
    \( \det (\bm \sol^{(\cycle^\ast_i,k)}) = 1 \) 
    and \( \bm z \in \gft^\rank \) be an element 
    such that \( \sum_{p=1}^\rank z_p \det (\bm \sol^{(\cycle^\ast_i,p)}) = 0 \).
    For any element \( \bm z' \in \gft^\rank \) such that \( z'_p = z_p\) for all \( p \in [1,k] \), it holds that \( \sum_{p=1}^\rank z'_p \det (\bm \sol^{(\cycle^\ast_i,p)}) = 0 \).
\end{lem}
\begin{proof}
    Since \( |\bm \sol^{(\cycle^\ast_i,p)} |=0 \) for all \( p \in [k+1,\rank] \), we have  \( \sum_{p=1}^\rank z'_p \det (\bm \sol^{(\cycle^\ast_i,p)}) = \sum_{p=1}^k z'_p \det (\bm \sol^{(\cycle^\ast_i,p)}) = \sum_{p=1}^k z_p \det (\bm \sol^{(\cycle^\ast_i,p)}) = \sum_{p=1}^\rank z_p \det (\bm \sol^{(\cycle^\ast_i,p)}) = 0 \).
\end{proof}
}

\invis{
\begin{lem}\label{lem:atleastone}
    For a basis \( \sol \ne \rtc \), 
    let \( P = \pi(\sol) = (\child \setminus \set \add)\cup \set {\cycle^\ast_i} \) denote the parent of \( \sol \) 
    and \( k \) denote the largest index 
    such that \( \det (\bm \parent^{(\cycle^\ast_i,k)}) = 1 \), 
    where \( \add = \min(\sol \setminus \rtc) \) 
    and \( \cycle^\ast_i \in \rtc \).
    Let \( \bm z \coloneqq \successor(\add) \), \( \bm z' \coloneqq \successor^{(2^{\rank-k})}(\bm z) \), \( \bm z'' \coloneqq \successor^{(2^{\rank-k+1})}(\bm z) \).
    If \( \bm z'' \ne \bm 0 \), then at least one of \( \bm z \), \( \bm z' \) and \( \bm z'' \) belongs to \( \cut(\parent;\cycle^\ast_i) \).
\end{lem}

\begin{proof}
    By the definition of the successor, (i) if \( z'_k = 1 \), then \( z'_p = z_p \) for all \( p \in [1,\rank] \setminus \set k \) and \( z_k = 0 \) holds; 
    and (ii) if \( z'_k = 0 \), then \( z'_p = z''_p \) for all \( p \in [1,\rank] \setminus \set k \) and \( z''_k = 1 \) holds.
    Thus, it holds that \( \sum_{p=1}^\rank z_p \det (\bm \parent^{(\cycle^\ast_i,p)}) \ne \sum_{p=1}^\rank z'_p \det (\bm \parent^{(\cycle^\ast_i,p)}) \) or \( \sum_{p=1}^\rank z'_p \det (\bm \parent^{(\cycle^\ast_i,p)}) \ne \sum_{p=1}^\rank z''_p \det (\bm \parent^{(\cycle^\ast_i,p)}) \) and hence at least one of \( \bm z \), \( \bm z' \) and \( \bm z'' \) belongs to \( \cut(\parent; \cycle^\ast_i) \).
\end{proof}
}

\begin{lem}\label{lem:skip2}
  For the root $R$, let $B\ne R$ be a basis
  and $P\coloneqq \pi(B)$ denote the parent of $B$
  such that $P=(B\setminus\{\bm y\})\cup\{\bm c^\ast_i\}$
  for ${\bm y}=\min(B-R)$ and $\bm c^\ast_i=\min(R\cap\cut(B;\bm y))$.
  Let $\bm\upsilon$ be an element such that $\bm\upsilon\in\cut(P;\bm c^\ast_i)$,
  \( \bm z \coloneqq \successor(\bm\upsilon) \), 
  \( \bm z' \coloneqq \successor^{(2^{\rank-i})}(\bm z) \) and
  \( \bm z'' \coloneqq \successor^{(2^{\rank-i+1})}(\bm z) \).
  \invis{
    For a basis \( \sol \ne \rtc \), 
    let \( P = \pi(\sol) = (\child \setminus \set \add)\cup \set {\cycle^\ast_i} \) denote the parent of \( \sol \) 
    and \( k \) denote the largest index 
    such that \( \det (\bm \parent^{(\cycle^\ast_i,k)}) = 1 \), 
    where \( \add = \min(\sol \setminus \rtc) \) 
    and \( \cycle^\ast_i \in \rtc \).
    Let \( \bm z \coloneqq \successor(\add) \), 
    \( \bm z' \coloneqq \successor^{(2^{\rank-k})}(\bm z) \) and
    \( \bm z'' \coloneqq \successor^{(2^{\rank-k+1})}(\bm z) \).
  }
  \begin{description}
    \item[(i)] If \( \bm z'' \ne \bm 0 \), then at least one of \( \bm z \), \( \bm z' \) and \( \bm z'' \) belongs to \( \cut(\parent;\cycle^\ast_i) \).
    \end{description}
    From {\rm(ii)} to {\rm(iv)},
    we assume that $\bm z\ne\bm 0$ and
    \( \bm z \notin \cut(\parent;\cycle^\ast_i) \). 
    \begin{description}
    \item[(ii)] If \( \bm z' \ne \bm 0 \), then
      no element \( \cycle \)
      such that \( \bm z \prec \bm c \prec \bm z' \)
      belongs to \( \cut(\parent;\cycle^\ast_i) \).
    \item[(iii)] If \( \bm z'' \ne \bm 0 \)
      and \( \bm z' \notin \cut(\parent;\cycle^\ast_i) \),
      then no element \( \cycle \) such that
      \( \bm z \prec \bm c \prec \bm z'' \) belongs
      to \( \cut(\parent;\cycle^\ast_i) \).
    \item[(iv)] If \( \bm z'' = \bm 0 \) and \( \bm z' \notin \cut(\parent;\cycle^\ast_i) \), then
      no element $\cycle$ such that \( \cycle \succ \bm z \)
      belongs to \( \cut(\parent;\cycle^\ast_i) \).
    \end{description}
\end{lem}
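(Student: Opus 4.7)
My plan is to encode membership in $\cut(\parent;\bm c^\ast_i)$ via the scalar linear form $f\colon\gft^\rank\to\gft$, $f(\bm w)\coloneqq\sum_{p=1}^{\rank} w_p d_p$, where $d_p\coloneqq\det(\bm\parent^{(\bm c^\ast_i,p)})$. By \lemref{childrennecessary}(i) applied to the parent $\parent=\pi(\child)$, $d_p=0$ for every $p>i$; by cofactor expansion of $\det\bm\parent=1$ along the row $\bm c^\ast_i$, $d_i=1$. Cofactor expansion of the matrix obtained from $\bm\parent$ by replacing $\bm c^\ast_i$ with $\bm w$ then shows $\bm w\in\cut(\parent;\bm c^\ast_i)$ iff $f(\bm w)=1$, and crucially $f$ depends only on bits $1,\dots,i$ of $\bm w$. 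Throughout, I repeatedly invoke \lemref{succ}(i): whenever an element $\bm w$ fails to lie in $\cut(\parent;\bm c^\ast_i)$, its bits $w_{i+1},\dots,w_\rank$ all vanish, so $\textrm{bin}(\bm w)$ is a multiple of $2^{\rank-i}$.

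For (i), if $\bm z\in\cut(\parent;\bm c^\ast_i)$ there is nothing to prove, so assume $f(\bm z)=0$ and split on $z_i$. When $z_i=0$, the addition $\textrm{bin}(\bm z)+2^{\rank-i}$ triggers no carry, so $\bm z'$ differs from $\bm z$ only in bit $i$ and $f(\bm z')=f(\bm z)+d_i=1$. When $z_i=1$, the hypothesis $\bm z''\ne\bm 0$ forbids overflow, so $N\coloneqq\textrm{bin}(\bm z)/2^{\rank-i}$ satisfies $N+2<2^i$ and $\bm z',\bm z''$ encode $N+1$ and $N+2$ on their top $i$ bits respectively. Since $N$ is odd, $N+1$ is even, and the transition $N+1\mapsto N+2$ is a pure flip of bit $i$ with no carry, giving $f(\bm z'')=f(\bm z')+d_i=f(\bm z')+1$; thus at least one of $\bm z',\bm z''$ lies in $\cut(\parent;\bm c^\ast_i)$.

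For parts (ii)--(iv), the unifying observation is that adding $2^{\rank-i}$ to $\textrm{bin}(\bm z)$ never propagates carries into bits $i+1,\dots,\rank$ (which are all zero), so $\bm z'$ itself has bits $i+1,\dots,\rank$ equal to $0$. Any $\bm c$ with $\epsilon\coloneqq\textrm{bin}(\bm c)-\textrm{bin}(\bm z)\in[1,2^{\rank-i+1}-1]$ then falls into three sub-ranges. When $\epsilon<2^{\rank-i}$, the addition modifies only bits $i+1,\dots,\rank$, leaving bits $1,\dots,i$ of $\bm c$ equal to those of $\bm z$, so $f(\bm c)=f(\bm z)=0$. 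When $\epsilon=2^{\rank-i}$, $\bm c=\bm z'$ and $f(\bm c)=0$ by the hypothesis $\bm z'\notin\cut(\parent;\bm c^\ast_i)$. When $\epsilon>2^{\rank-i}$, the offset $\textrm{bin}(\bm c)-\textrm{bin}(\bm z')=\epsilon-2^{\rank-i}<2^{\rank-i}$ is absorbed entirely into the vanishing bits $i+1,\dots,\rank$ of $\bm z'$, so bits $1,\dots,i$ of $\bm c$ coincide with those of $\bm z'$, giving $f(\bm c)=f(\bm z')=0$. This settles (ii) (where only the first sub-range occurs), (iii) directly, and (iv) after translating $\bm z''=\bm 0$ into $\textrm{bin}(\bm z)\ge 2^\rank-2^{\rank-i+1}$, which confines $\epsilon$ to $[1,2^{\rank-i+1}-1]$.

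The main obstacle will be to establish cleanly that $\bm z'$ inherits $\bm z$'s vanishing low-order bits and that, in the third sub-range, the addition of the small offset $\epsilon-2^{\rank-i}$ to $\bm z'$ does not disturb bits $1,\dots,i$; once these two carry-arithmetic facts are in place, $\gft$-linearity of $f$ and its dependence on only the top $i$ bits reduce every sub-case to a single substitution. A minor subtlety in (iv) is that the hypotheses $\bm z''=\bm 0$ and $\bm z'\notin\cut(\parent;\bm c^\ast_i)$ are jointly consistent only when $\bm z'=\bm 0$ as well, since otherwise the computation above would force $f(\bm z')=1$; either way the uniform argument goes through.
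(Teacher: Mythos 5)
Your proof is correct and takes essentially the same route as the paper's: encode membership in $\cut(P;\bm c^\ast_i)$ by the $\gft$-linear form $f(\bm w)=\sum_p w_p\det(\bm P^{(\bm c^\ast_i,p)})$, use \lemref{childrennecessary}(i) together with the cofactor expansion of $\det\bm P=1$ to see that $f$ depends only on bits $1,\dots,i$ and that $d_i=1$, and combine this with carry arithmetic on the binary encoding --- which is precisely what the paper's \lemref{succ} packages. Your explicit observation that the case $\bm z'\ne\bm 0$ in (iv) is inconsistent with $\bm z'\notin\cut(P;\bm c^\ast_i)$ is a small clarity improvement over the paper's handling of that subcase.

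One caution on exposition: your opening sentence overstates \lemref{succ}(i). That lemma does \emph{not} assert that every $\bm w\notin\cut(P;\bm c^\ast_i)$ has $w_{i+1}=\dots=w_r=0$; indeed any vector agreeing with $\bm z$ on bits $1,\dots,i$ but with, say, $w_{i+1}=1$ also satisfies $f(\bm w)=0$ and hence lies outside $\cut(P;\bm c^\ast_i)$ while violating the claimed vanishing. The lemma holds only for $\bm z=\successor(\bm\upsilon)$ with $\bm\upsilon\in\cut(P;\bm c^\ast_i)$, which (as you correctly exploit) is the sole instance you actually need --- you then deduce the vanishing low bits of $\bm z'$ from those of $\bm z$ via carry arithmetic, not from a re-application. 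Your argument therefore stands, but the blanket phrasing should be scoped to the successor instance to avoid stating a false general fact.
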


\begin{proof}
  Let $Q\coloneqq (P\setminus\{\bm c^\ast_i\})\cup\{\bm z\}$,
  $Q'\coloneqq (P\setminus\{\bm c^\ast_i\})\cup\{\bm z'\}$ and
  $Q''\coloneqq (P\setminus\{\bm c^\ast_i\})\cup\{\bm z''\}$,
  respectively.

  \noindent
  (i)
  By the definition of the successor,
  if \( z'_i = 1 \), then \( z'_p = z_p \) for all \( p \in [1,\rank] \setminus \set i \) and \( z_i = 0 \) holds; 
  and if \( z'_i = 0 \), then \( z'_p = z''_p \) for all \( p \in [1,\rank] \setminus \set i \) and \( z''_i = 1 \) holds
  by \( \bm z'' \ne \bm 0 \).
  The set $P$ is a basis and 
  $\det \bm P^{(\bm c^\ast_i,i)}=1$.
  Then either $\det Q\ne\det Q'$ or $\det Q'\ne\det Q''$ holds. 

  \noindent
  (ii) The set $Q$ is not a basis.
  By Lemma~\ref{lem:succ}(i),
  $z_p=0$ holds for all $p\in[i+1,r]$. 
  By Lemma~\ref{lem:succ}(ii),
  any element $\bm c$
  such that $c_p=z_p$ for all $p\in[1,i]$
  is not a basis, and such $\bm c$ satisfies
  $\bm z\prec\bm c\prec\bm z'$. 

  \noindent
  (iii) 
  The set $Q'$ is not a basis. 
  By Lemma~\ref{lem:succ}(i),
  $z'_p=0$ holds for all $p\in[i+1,r]$. 
  By Lemma~\ref{lem:succ}(ii),
  any element $\bm c$
  such that $c_p=z'_p$ for all $p\in[1,i]$
  is not a basis, and such $\bm c$ satisfies
  $\bm z'\prec\bm c\prec\bm z''$.
  With (ii) and
  \( \bm z' \notin \cut(\parent;\cycle^\ast_i) \),
  no $\cycle$ such that 
  \( \bm z \prec \bm c \prec \bm z'' \) belongs
  to \( \cut(\parent;\cycle^\ast_i) \).
  

  \noindent
  (iv)
  If \( \bm z' \ne \bm 0 \), then no element \( \cycle \)
  such that \( \bm z \prec \cycle \prec \bm z' \)
  belongs to \( \cut(\parent;\cycle^\ast_i) \) by (ii).
  All elements $\bm c'$ such that $\bm c'\succ\bm z'$
  satisfy $c'_p=z'_p=1$ for all $p\in[1,i]$
  since $\bm z'\ne\bm 0$ and $\bm z''=\bm0$.
  By \lemref{succ}(ii), $\bm c'\notin\cut(\parent;\cycle^\ast_i)$
  holds.
  If $\bm z'=\bm 0$, then 
  all elements $\bm c$ such that $\bm c\succ\bm z$
  satisfy $c_p=z_p=1$ for all $p\in[1,i]$
  since $\bm z\ne\bm 0$ and $\bm z'=\bm0$.
  By \lemref{succ}(ii), $\bm c\notin\cut(\parent;\cycle^\ast_i)$
  holds.
\end{proof}

\subsubsection{Algorithm for Enumerating Bases}
For a basis $P$ and $\bm c^\ast_i\in P$ that satisfies the condition
of \lemref{childrennecessary}(i), let $\bm \upsilon\in\cut(\parent;\cycle^\ast_i)$
such that $\bm \upsilon\prec\min(P\setminus R)$. 
Using \lemref{skip2}, we generate the smallest element in $\cut(\parent;\cycle^\ast_i)$
that is strictly larger than $\bm \upsilon$ and smaller than $\min(P\setminus R)$
or conclude that no such element exists as follows.
If $\bm z\coloneqq\successor(\bm \upsilon)$ equals to $\bm 0$,
then no solution exists. Otherwise,
if $\bm z\in\cut(\parent;\cycle^\ast_i)$ (i.e., $\det((P\setminus\{\cycle^\ast_i\})\cup\{\bm z\})=1$), then $\bm z$ is the solution.
To consider the remaining cases (i.e., $\bm z\ne\bm 0$ and
$\bm z\not\in\cut(\parent;\cycle^\ast_i)$), 
let \( \bm z' \coloneqq \successor^{(2^{\rank-i})}(\bm z) \)
and \( \bm z'' \coloneqq \successor^{(2^{\rank-i+1})}(\bm z) \).
\begin{itemize}
\item If $\bm z'=\bm 0$, then $\bm z''=\bm 0$
  and $\bm z'\notin\cut(\parent;\cycle^\ast_i)$ hold.
  We can conclude that no solution exists by (iv).
\item If $\bm z'\ne\bm 0$, then $\bm z'$ is the smallest candidate by (ii).
  Hence, if $\bm z'\in\cut(\parent;\cycle^\ast_i)$, then we are done with $\bm z'$,
  and otherwise:
  \begin{itemize}
  \item if $\bm z''=\bm 0$, then no solution exists by (iv); and
  \item otherwise, $\bm z''$ is the solution by (i) and (iii). 
  \end{itemize}
\end{itemize}
For a solution $\bm \upsilon'$ that is obtained by this algorithm,
if $\bm \upsilon'\notin R$, then $(P-\{\bm c^\ast_i\})\cup\{\bm \upsilon'\}$ is a child of $P$. 
We summarize the algorithm to generate all children
of a given basis in Algorithm~\ref{alg:children}.

\begin{algorithm}[t!]
  \caption {An algorithm to enumerate all children of a given basis}
  \label{alg:children}
  \begin{algorithmic}[1]
\Require {A basis $\parent$ (i.e., a set of linearly independent $r$-dimensional $r$ vectors over $\gft$)} 
\Ensure{All children of $ \parent $}
\algfor{each $ \cycle^\ast_i \in \parent \cap \rtc $}{
  \Comment{$\rtc$ denotes the root basis.}
      \algif{$\det (\bm \parent^{(\cycle^\ast_i,j)} ) = 0 $ for all $j \in [i + 1,\rank]$ }{
          \State $ \bm\upsilon \coloneqq \cycle_i^\ast $;
          \algwhile{$ \bm\upsilon \prec \min(\parent \setminus \rtc) $ and \( \bm\upsilon \ne \bm 0 \)}{
              \algif{$ \bm\upsilon \notin \rtc $}{
                  \State \textbf{output} $ (P\setminus \{ \cycle^\ast_i\}) \cup \{\bm\upsilon\} $
              };
              \State \( \bm z \coloneqq \successor(\bm\upsilon) \); $\bm \upsilon\coloneqq\bm z$;
              \algif{$\bm z\ne\bm 0$ and $\bm z\notin\cut(\parent;\cycle^\ast_i)$}{ 
                  \State \( \bm z' \coloneqq \successor^{(2^{\rank-i})}(\bm z) \); $\bm \upsilon\coloneqq \bm z'$;
                  \algif{$\bm z'\ne\bm 0$ and $\bm z'\notin\cut(\parent;\cycle^\ast_i)$}{ 
                      \State \( \bm z'' \coloneqq \successor^{(2^{\rank-i+1})}(\bm z) \); $\bm \upsilon\coloneqq \bm z''$
                  }
              }
          }
      }
  }
\end{algorithmic}
\end{algorithm}

\begin{lem}
  \label{lem:children}
  Given a basis $P$,
  Algorithm~\ref{alg:children} outputs all children of $P$
  in $ \MO(r^3+\rank^2 \tauDET(\rank)) $ delay.  
\end{lem}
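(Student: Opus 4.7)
The plan is to establish correctness of Algorithm~\ref{alg:children} and then bound its delay. For correctness, I would show two directions: (a) every output is a child of $\parent$, and (b) every child of $\parent$ is output. For (a), each output has the form $(\parent \setminus \{\bm c^\ast_i\}) \cup \{\bm \upsilon\}$ where the algorithm explicitly enforces $\bm c^\ast_i \in \parent \cap \rtc$ together with $\det(\bm \parent^{(\bm c^\ast_i, j)}) = 0$ for all $j \in [i+1, \rank]$ (condition~(i) of Lemma~\ref{lem:childrennecessary}), $\bm \upsilon \prec \min(\parent \setminus \rtc)$ via the while-loop guard (condition~(ii)), $\bm \upsilon \notin \rtc$ via the explicit check preceding the output statement, and $\bm \upsilon \in \cut(\parent; \bm c^\ast_i)$ via the cascading three-candidate update rule. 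Lemma~\ref{lem:childrennecessary} then certifies that each output is indeed a child of $\parent$.

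For (b), let $\child$ be any child of $\parent$. By the definition of the parent and Lemma~\ref{lem:childrennecessary}, $\child = (\parent \setminus \{\bm c^\ast_i\}) \cup \{\bm y\}$ for some $\bm c^\ast_i \in \parent \cap \rtc$ satisfying~(i) and some $\bm y \in \cut(\parent; \bm c^\ast_i) \setminus \rtc$ with $\bm y \prec \min(\parent \setminus \rtc)$. Since the outer loop visits every $\bm c^\ast_i \in \parent \cap \rtc$, it suffices to show that, for such $\bm c^\ast_i$, the inner while loop enumerates every element of $\cut(\parent; \bm c^\ast_i) \setminus \rtc$ in the range $[\bm c^\ast_i, \min(\parent \setminus \rtc))$. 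Starting at $\bm \upsilon = \bm c^\ast_i$ (which lies in $\cut(\parent; \bm c^\ast_i)$ by~(i)), the algorithm repeatedly advances $\bm \upsilon$ to the next cut element using the three-probe rule of Lemma~\ref{lem:skip2}. The main obstacle is to verify that this rule never misses a valid candidate; this reduces to a case analysis on Lemma~\ref{lem:skip2}: part~(i) guarantees that at least one of $\bm z, \bm z', \bm z''$ lies in $\cut(\parent; \bm c^\ast_i)$ whenever $\bm z'' \ne \bm 0$; parts~(ii) and~(iii) justify safely skipping the gap between consecutive probes when the current test fails; and part~(iv) justifies terminating the loop when $\bm z'' = \bm 0$ and both earlier probes fail.

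For the delay, each outer iteration spends $O(\rank \tauDET(\rank))$ on the $\rank - i$ cofactor evaluations for condition~(i), and each inner while step performs at most three $O(\rank)$-time successor computations (using the remark preceding the algorithm) and at most three $O(\tauDET(\rank))$-time determinant tests, contributing $O(\rank + \tauDET(\rank))$. Between two consecutive outputs, the algorithm can traverse at most $\rank$ outer iterations (since $|\parent \cap \rtc| \le \rank$), and within each such iteration at most $\rank$ inner while-iterations can elapse before yielding an output or exiting (since only the at most $|\rtc|=\rank$ elements of $\rtc$ can cause skips). Multiplying yields the delay bound $O\bigl(\rank \cdot \rank \tauDET(\rank) + \rank \cdot \rank \cdot (\rank + \tauDET(\rank))\bigr) = O(\rank^3 + \rank^2 \tauDET(\rank))$, matching the claim.
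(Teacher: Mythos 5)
Your proof is correct and follows essentially the same route as the paper: correctness is reduced to Lemma~\ref{lem:childrennecessary} and the three-probe skip rule of Lemma~\ref{lem:skip2}, and the delay is bounded by $O(r)$ outer iterations, each with $O(r\tauDET(r))$ cofactor preprocessing and at most $O(r)$ inner iterations (since each skipped $\bm\upsilon$ lies in $R$ and $|R|=r$). The only minor deviation is that you charge $\tauDET(r)$ per inner-loop membership test while the paper reuses the precomputed cofactors $\det(\bm P^{(\bm c^\ast_i,j)})$ to make each such test $O(r)$ via cofactor expansion, but since $\tauDET(r)=\Omega(r)$ this does not alter the stated $O(r^3 + r^2\tauDET(r))$ bound.
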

\begin{proof}
  The correctness follows by the discussion before this lemma.
  We analyze the time complexity. 
  The outer for-loop is repeated $\MO(r)$ times.
  Within the for-loop, we compute determinants several times,
  and each can be obtained in $\MO(r)$ time
  by computing  $\det(\bm P^{(\bm c^\ast_i,j)})$ for $j\in[1,r]$ by preprocessing,
  which can be done in $\MO(r\tauDET(r))$ time. 
  All operations in the while-loop can be done in $\MO(r)$ time,
  and at least one solution is output within $r+1$ iterations
  since $|R|=r$. 
  We see that the delay is $\MO(r(r\tauDET(r)+r^2))=\MO(r^3+r^2\tauDET(r))$. 
  %
\end{proof}


Based on Algorithm~\ref{alg:children}, 
we present a reverse search algorithm
that enumerates all $r\times r$ bases in Algorithm~\ref{alg:allbases}.

\begin{algorithm}[t!]
\caption{An algorithm to enumerate all bases}
\label{alg:allbases}
\begin{algorithmic}[1]
  \Require{A positive integer $r$} 
  \Ensure{All bases (i.e., all sets of linearly independent $r$-dimensional $r$ vectors over $\gft$)}
\State $R\coloneqq\{\bm c^\ast_1,\bm c^\ast_2,\dots,\bm c^\ast_r\}$; 
\State call \textsc{EnumAllBases}\( (R,0) \)
\Statex
\procedure{EnumAllBases}{$P,d$}{
  \algif{\( d \) is even}{
    \State output $P$
  };
  \algfor{each $ \cycle^\ast_i \in \parent \cap \rtc $}{
    \algif{$\det (\bm \parent^{(\cycle^\ast_i,j)} ) = 0 $ for all $j \in [i + 1,\rank]$ }{
      \State $ \bm\upsilon \coloneqq \cycle_i^\ast $;
      \algwhile{$ \bm\upsilon \prec \min(\parent \setminus \rtc) $ and \( \bm\upsilon \ne \bm 0 \)}{
        \algif{$ \bm\upsilon \notin \rtc $}{
          \State call \textsc{EnumAllBases}$((P\setminus\{\bm c^\ast_i\})\cup\{\bm \upsilon\},d+1)$
        };
        \State \( \bm z \coloneqq \successor(\bm\upsilon) \); $\bm \upsilon\coloneqq\bm z$;
        \algif{$\bm z\ne\bm 0$ and $\bm z\notin\cut(\parent;\cycle^\ast_i)$}{
          \State \( \bm z' \coloneqq \successor^{(2^{\rank-i})}(\bm z) \); $\bm \upsilon\coloneqq \bm z'$;
          \algif{$\bm z'\ne\bm 0$ and $\bm z'\notin\cut(\parent;\cycle^\ast_i)$}{
            \State \( \bm z'' \coloneqq \successor^{(2^{\rank-i+1})}(\bm z) \); $\bm \upsilon\coloneqq \bm z''$
          }
        }
      }
    }
  };
  \algif{\( d \) is odd}{
    \State output $P$
  }
}
\end{algorithmic}
\end{algorithm}

\begin{thm}
  Given a positive integer $r$,
  Algorithm~\ref{alg:allbases} enumerates all $r\times r$ bases
  in $\MO(r^3+r^2\tauDET(r))$ delay. 
\end{thm}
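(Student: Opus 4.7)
The plan is to cast Algorithm~\ref{alg:allbases} as a reverse-search procedure on the rooted family tree induced by the parent map $\pi$, and then bound the delay using the standard alternating even/odd-depth output rule.

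For correctness, I would first argue that $\pi$ defines a well-founded tree on the set of all bases, rooted at $R$. By \lemref{numberofrootcycle}, $|\pi(B) \cap R| = |B \cap R| + 1$ whenever $B \neq R$, so iterating $\pi$ on any basis terminates at $R$ in at most $r$ steps and the family tree has height at most $r$. The for-each and while structure nested inside \textsc{EnumAllBases} is exactly that of Algorithm~\ref{alg:children}; therefore, by \lemref{children}, the recursive calls launched from the call on $P$ visit precisely the children of $P$. An easy induction on the height of each subtree then shows that every basis appears as the argument of exactly one recursive call, and the depth-parity guards ensure that each basis is printed exactly once.

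For the delay, I would first record the per-edge cost. As in the proof of \lemref{children}, after an $O(r\tauDET(r))$ preprocessing of the quantities $\det(\bm P^{(\bm c^\ast_i,j)})$, generating the next candidate $\bm\upsilon$ inside one iteration of the while-loop, or deciding that the current $\bm c^\ast_i$ is exhausted and advancing in the outer for-each, takes $O(r)$ time, and at most $O(r)$ such iterations are needed either to locate one child of $P$ or to conclude that no further child exists. Hence the work attributable to a single edge of the recursion tree, either descending into a newly-found child or returning to the parent after exhausting the loops, is bounded by $O(r^3 + r^2\tauDET(r))$.

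It then remains to show that between any two consecutive outputs the algorithm traverses only $O(1)$ such edges. This is the purpose of the alternating output rule: at an even-depth node $P$ is printed \emph{before} any recursive descent, so the next output is produced inside the very first child-subtree (one descending edge, possibly followed by a short terminal chain in that subtree); at an odd-depth node $P$ is printed \emph{after} all its recursive calls complete, so it is immediately preceded by the final output of its last subtree (one ascending edge). A case analysis on the three transition types — enter a child, return to parent, move to the next sibling — shows that consecutive outputs are separated by $O(1)$ edge traversals, which combined with the per-edge bound yields delay $O(r^3 + r^2\tauDET(r))$. The main obstacle I anticipate is verifying this constant-edge claim uniformly, in particular ruling out long chains of returns from completed odd-depth subtrees up through several ancestors before the next output is produced; the parity rule is designed precisely so that each such chain is cut at constant length by an intervening pre-order output at the first even-depth ancestor, but this requires a careful enumeration of the cases to make rigorous.
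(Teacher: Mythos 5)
Your overall strategy matches the paper's: view Algorithm~\ref{alg:allbases} as reverse search over the family tree defined by $\pi$, use \lemref{children} for correct and efficient child generation, and invoke the alternative output rule to convert the children-generation delay into the overall delay. The paper's proof is much terser — it attributes correctness to \lemref{children} and then simply cites Uno's alternative output method~\cite{alternative} for the delay bound — whereas you attempt to re-derive that bound, and you also make the well-foundedness of the family tree explicit via \lemref{numberofrootcycle}, a detail the paper leaves implicit. That extra care is fine.

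However, your sketch of the delay argument contains a genuine error in the mechanism. You claim a long chain of returns is ``cut at constant length by an intervening pre-order output at the first even-depth ancestor.'' This cannot be what happens: an even-depth ancestor produced its (pre-order) output \emph{before} descending into the subtree you are now returning from, and it does not output again on the way back up. What actually truncates the chain is the \emph{post-order} output at the odd-depth nodes encountered during the ascent: every other node on a return path is at odd depth, and each such node outputs when its final child's subtree completes, so an unbroken chain of returns yields an output at least every two edges; combined with the fact that at most two descending edges separate the entry into a new child subtree from the next pre-order output inside it, one gets the $O(1)$ bound on the number of edges between consecutive outputs. If you do not wish to carry out this case analysis, the cleaner route is the one the paper takes — cite \cite{alternative}, which establishes exactly that the delay of the whole enumeration is within a constant factor of the per-node (children-generation) time.
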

\begin{proof}
  The correctness is due to \lemref{children}.
  When \textsc{EnumAllBases}$(P,d)$ is called, 
  we change the timing when we output $P$
  by the parity of $d$, that is the depth of the search tree. 
  This is called {\em alternative output method}~\cite{alternative},
  and it is known that the delay is bounded by the processing time
  of each node of the search tree. 
\end{proof}

\assumref{binary} includes binary matroids
from cycle space and cut space~\cite{cycleandcutspace}.\footnote{For cut space, there is an exception such that the sum is not a cut if it is $\bm 0$.} 
For the cycle space,
the rank is $r=\MO(m)$ and 
we can construct a basis from a spanning tree~\cite{mcb}
in $\MO(\rank(n+m))$ time.
For the cut space, 
the rank is $r=\MO(n)$ and 
the family of \( n-1 \) cuts \( E(\set v; V \setminus \set v) \) for \( n-1 \) vertices in \( V \) is a basis~\cite{cycleandcutspace},
which can be constructed in \( \MO(\rank(n+m)) \) time.
The $r$ elements in the obtained basis correspond to
the root $R=\{\cycle^\ast_1,\cycle^\ast_2,\dots,\cycle^\ast_r\}$.
Using $\tauDET(r)=\MO(r^3)$, we have the following corollary. 

\begin{cor}
  Given a connected graph $G$,
  we can enumerate all cycle bases 
  in $\MO(m^5)$ delay and
  and all cut bases in $\MO(n^5)$ delay. 
\end{cor}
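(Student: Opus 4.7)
The plan is to specialize the preceding theorem to the binary matroids from cycle space and cut space, after verifying that these matroids fit the framework developed in \secref{allbases}. Three items need to be checked: (i) \assumref{binary} holds for each matroid; (ii) an initial basis, which plays the role of the root $\rtc=\{\cycle^\ast_1,\dots,\cycle^\ast_{\rank}\}$ in the reduction of \secref{allbases}, can be constructed efficiently; and (iii) substituting the matroid rank and $\tauDET(\rank)$ into the delay bound $O(\rank^3+\rank^2\tauDET(\rank))$ produces the claimed $O(m^5)$ and $O(n^5)$ bounds.

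For (i), I would invoke the classical facts over $\gft$ that the symmetric difference of cycles is a cycle, and that the symmetric difference of cuts is a cut with the sole exception of the empty set. Thus the incidence vector of any $\gft$-linear combination of basis elements is again associated with an element of $U$, as required by \assumref{binary}. The zero-vector exception for cut space is inconsequential for basis enumeration, since the zero vector is linearly dependent with every set and can never lie in a basis; moreover, Algorithm~\ref{alg:allbases} explicitly guards against $\bm\upsilon=\bm 0$ in its while-loop.

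For (ii), I would rely on the constructions cited in the paragraph immediately preceding the corollary. For the cycle space of a connected graph, a fundamental cycle basis from any spanning tree yields $\rank=m-n+1=O(m)$ independent cycles in $O(\rank(n+m))$ time; for the cut space, the $n-1$ vertex-star cuts form a basis of rank $\rank=n-1=O(n)$ constructible in $O(\rank(n+m))$ time. In either case, the $\rank$ obtained elements play the role of the root $\rtc$; every other element of $U$ is then translated into its coordinate vector in $\gft^{\rank}$ via the bijective linear map underlying \lemref{independence}, so that bases of $\MM$ and bases of $\gft^{\rank}$ are in one-to-one correspondence.

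For (iii), Gaussian elimination gives $\tauDET(\rank)=O(\rank^3)$ over $\gft$, so the delay bound simplifies to $O(\rank^3+\rank^2\cdot \rank^3)=O(\rank^5)$. Plugging in $\rank=O(m)$ yields $O(m^5)$ delay for cycle bases and $\rank=O(n)$ yields $O(n^5)$ delay for cut bases; the $O(\rank(n+m))$ preprocessing to build the initial basis is absorbed by the first delay. No step presents a serious obstacle, as the argument is essentially a substitution into the preceding theorem; the only subtlety worth remarking is the $\bm 0$ exception in the cut space, which is harmless as explained above.
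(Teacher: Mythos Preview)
Your proposal is correct and mirrors the paper's own argument essentially step for step: verify \assumref{binary} for cycle and cut space (noting the harmless $\bm 0$ exception), construct the root basis in $O(\rank(n+m))$ time from a spanning tree or vertex-star cuts, and substitute $\tauDET(\rank)=O(\rank^3)$ with $\rank=O(m)$ or $\rank=O(n)$ into the $O(\rank^3+\rank^2\tauDET(\rank))$ delay bound. Nothing further is needed.
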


\section{Concluding Remarks}
\label{sec:conc}
In this paper,
we have studied problems of enumerating
bases in a matroid with an exponentially large ground set.
Examples of such a matroid includes
binary matroids from cycle space, path space and cut space.
The first result is an incremental polynomial algorithm
that enumerates all minimum-weight bases.
We have shown that it is applicable to
the above binary matroids so that the time complexity
is polynomially bounded by the graph size.
The second result is a poly-delay algorithm
that enumerates all bases, which
immediately yields poly-delay (with respect to graph size)
enumeration of bases in binary matroids from cycle space and cut space.

For future work, 
it would be interesting to explore
whether or not it is possible to enumerate all minimum bases
in poly-delay in our context (i.e., 
the bounding polynomial does not contain the ground set size)
and/or in poly-space.
Among the remaining issues are to study the $k$-best version,
to improve the complexities, to look for other applications, and so on.

\clearpage

\end{document}